\newtheorem{lemma}{Lemma}
\newtheorem{definition}{Definition}
\newtheorem{proposition}{Proposition}
\newtheorem{example}{Example}
\begin{document}

\title{Construction of Block Orthogonal STBCs and Reducing Their Sphere Decoding Complexity}

\author{
\authorblockN{G. R. Jithamithra and B. Sundar Rajan,\\}
\authorblockA{Dept. of ECE, Indian Institute of Science, \\
Bangalore 560012, India\\
Email:\{jithamithra,bsrajan\}@ece.iisc.ernet.in\\
}
}
%\author{G. R. Jithamithra and B. Sundar Rajan, Senior Member, IEEE}
% make the title area
\maketitle

%%%%%%%%%%%%%%%%%%%%%%%%%%%%%%%%%%%%%%%%%%%%%
% Abstract
%%%%%%%%%%%%%%%%%%%%%%%%%%%%%%%%%%%%%%%%%%%%%

\begin{abstract}
Construction of high rate Space Time Block Codes (STBCs) with low decoding complexity has been studied widely using techniques such as sphere decoding and non Maximum-Likelihood (ML) decoders such as the QR decomposition decoder with M paths (QRDM decoder). Recently Ren et al., presented a new class of STBCs known as the block orthogonal STBCs (BOSTBCs), which could be exploited by the QRDM decoders to achieve significant decoding complexity reduction without performance loss. The block orthogonal property of the codes constructed was however only shown via simulations. In this paper, we give analytical proofs for the block orthogonal structure of various existing codes in literature including the codes constructed in the paper by Ren et al. We show that codes formed as the sum of Clifford Unitary Weight Designs (CUWDs) or Coordinate Interleaved Orthogonal Designs (CIODs) exhibit block orthogonal structure. We also provide new construction of block orthogonal codes from Cyclic Division Algebras (CDAs) and Crossed-Product Algebras (CPAs). In addition, we show how the block orthogonal property of the STBCs can be exploited to reduce the decoding complexity of a sphere decoder using a depth first search approach. Simulation results of the decoding complexity show a 30\% reduction in the number of floating point operations (FLOPS) of BOSTBCs as compared to STBCs without the block orthogonal structure. 
\end{abstract}

%%%%%%%%%%%%%%%%%%%%%%%%%%%%%%%%%%%%%%%%%%%%%
% Section 1 - Introduction 
%%%%%%%%%%%%%%%%%%%%%%%%%%%%%%%%%%%%%%%%%%%%%

\section{Introduction \& Preliminaries}
\label{sec1}
Consider a minimal-delay space-time coded Rayleigh quasi-static flat fading MIMO channel with full channel state information at the receiver (CSIR). The input output relation for such a system is given by
\begin{equation}
\label{system_model}
\textbf{Y} = \textbf{H}\textbf{X} + \textbf{N},
\end{equation}
where $\textbf{H} \in \mathbb{C}^{n_{r} \times n_{t}}$ is the channel matrix and $\textbf{N} \in \mathbb{C}^{n_{r} \times n_{t}}$ is the additive noise. Both $\textbf{H}$ and $\textbf{N}$ have entries that are i.i.d. complex-Gaussian with zero mean and variance 1 and $N_{0}$ respectively. The transmitted codeword is $\textbf{X} \in \mathbb{C}^{n_{t} \times n_{t}}$ and $\textbf{Y} \in \mathbb{C}^{n_{r} \times n_{t}}$ is the received matrix. The ML decoding metric to minimize over all possible values of the codeword $\textbf{X},$ is
\begin{equation}
\label{ML}
\textbf{M}\left( \textbf{X}\right) = \parallel \textbf{Y} - \textbf{H}\textbf{X}\parallel^{2}.
\end{equation}

%%%%%%%%%%%%%%%%%%
\begin{definition}
\label{ld_stbc_def}
\cite{HaH}: A linear STBC $\mathcal{C}$ over a real (1-dimensional) signal set $\mathcal{S}$, is a finite set of $n_{t} \times n_t$ matrices, where any codeword matrix belonging to the code $\mathcal{C}$ is obtained  from,
\begin{equation}
\label{ld_stbc}
\textbf{X}\left( x_{1}, x_{2}, . . . , x_{K} \right) ~=~ \sum_{i = 1}^{K} x_{i}\textbf{A}_{i}, 
\end{equation}
by letting the real variables $x_{1}, x_{2}, \cdots, x_{K}$ take values from a real signal set $\mathcal{S},$ where $\textbf{A}_{i}$ are fixed $n_{t} \times n_t$ complex matrices defining the code, known as the weight matrices. The rate of this code is $\frac{K}{2n_t}$ complex symbols per channel use.
\end{definition}
%%%%%%%%%%%%%%%%%%%%

We are interested in linear STBCs, since they admit sphere decoding (SD) \cite{ViB} and other QR decomposition based decoding techniques such as the QRDM decoder \cite{RGYZ} which are fast ways of decoding for the variables.
  
Designing STBCs with low decoding complexity has been studied widely in the literature. Orthogonal designs with single symbol decodability were proposed in \cite{TJC}, \cite{Li}, \cite{TiH}. For STBCs with more than two transmit antennas, these came at a cost of reduced transmission rates. To increase the rate at the cost of higher decoding complexity, multi-group decodable STBCs were introduced in \cite{DYT}, \cite{KaR}, \cite{KaR1}. Another set of low decoding complexity codes known as the fast decodable codes were studied in \cite{BHV}. Fast decodable codes have reduced SD complexity owing to the fact that a few of the variables can be decoded as single symbols or in groups if we condition them with respect to the other variables. Fast decodable codes for asymmetric systems using division algebras have been reported \cite{VHO}. The properties of fast decodable codes and multi-group decodable codes were combined and a new class of codes called fast group decodable codes were studied in \cite{RGYS}.

A new code property called the \textit{block-orthogonal} property was studied in \cite{RGYZ} which can be exploited by the QR-decomposition based decoders to achieve significant decoding complexity reduction without performance loss. This property was exploited in \cite{SiB} to reduce to the average ML decoding complexity of the Golden code \cite{BRV} and also in \cite{KaC} to reduce the worst-case complexity of the Golden code with a small performance loss. While the other low decoding complexity STBCs use the zero entries in the upper left portion of the upper triangular matrix after the QR decomposition, these decoders utilize the zeroes in the lower right portion to reduce the complexity further. 

The contributions of this paper are as follows: 
\begin{itemize}
\item We generalize the set of sufficient conditions for an STBC to be block orthogonal provided in \cite{RGYZ} for sub-block sizes greater than 1. 
\item We provide analytical proofs that the codes obtained from the sum of Clifford Unitary Weight Designs (CUWDs) \cite{RaR} exhibit the block orthogonal property when we choose the right ordering and the right number of matrices.
\item We provide new methods of construction of BOSTBCs using Coordinate Interleaved Orthogonal Designs (CIODs) \cite{KhR}, Cyclic Division Algebras (CDAs) \cite{SeRS} and Crossed Product Algebras (CPAs) \cite{ShRS} along with the analytical proofs of their block orthogonality.
\item We show that the ordering of variables of the STBC used for the QR decomposition dictates the block orthogonal structure and its parameters.  
\item We show how the block orthogonal property of the STBCs can be exploited to reduce the decoding complexity of a sphere decoder which uses a depth first search approach.
\item We provide bounds on the maximum possible reduction in the Euclidean metrics (EM) calculation during sphere decoding of BOSTBCs.
\item Simulation results show that we can reduce the decoding complexity of existing STBCs by upto 30\% by utilizing the block orthogonal property.  
\end{itemize}

%%%%%%%%%%%%%%%%%%%%%%%%%%%%
The remaining part of the paper is organized as follows: In Section \ref{sec2} the system model and some known classes of low decoding complexity codes are reviewed. In Section \ref{sec3}, we derive a set of sufficient conditions for an STBC to be block orthogonal and also the effect of ordering of matrices on it. In Section \ref{sec4}, we present proofs of block orthogonal structure of various existing codes and also discuss some new methods of constructions of the same. In Section \ref{sec5}, we discuss a method to reduce the number of EM calculations while decoding a BOSTBC using a depth first search based sphere decoder and also derive bounds for the same. Simulation results for the decoding complexity of various BOSTBCs are presented in Section \ref{sec6}. Concluding remarks constitute Section \ref{sec7}. 

\indent \textit{Notations:} Throughout the paper, bold lower-case letters are used to denote vectors and bold upper-case letters to denote matrices. For a complex variable $x$, denote the real and imaginary part of $x$ by $x_{I}$ and $x_{Q}$ respectively. The sets of all integers, all real and complex numbers are denoted by $\mathbb{Z}, \mathbb{R}$ and $\mathbb{C}$, respectively. The operation of stacking the columns of $\textbf{X}$ one below the other is denoted by $vec\left(\textbf{X}\right)$. The Kronecker product is denoted by $\otimes$, $\textbf{I}_{T}$ and $\textbf{O}_{T}$ denote the $T \times T$ identity matrix and the null matrix, respectively. For a complex variable $x$, the $\check{\left(\centerdot\right)}$ operator acting on $x$ is defined as follows
\begin{equation*}
\check{x} \triangleq \left[\begin{array}{rr}
x_{I} & -x_{Q}\\
x_{Q} & x_{I}\\
\end{array}\right].
\end{equation*}
The $\check{\left(\centerdot\right)}$ operator can similarly be applied to any matrix $\textbf{X} \in \mathbb{C}^{n \times m}$ by replacing each entry $x_{ij}$ by $\check{x}_{ij}$, $i = 1, 2, \cdots, n$, $j = 1, 2, \cdots, m$, resulting in a matrix denoted by $\check{\textbf{X}} \in \mathbb{R}^{2n \times 2m}$. Given a complex vector $\textbf{x} = \left[x_{1}, x_{2}, \cdots, x_{n}\right]^{T}$, $\tilde{\textbf{x}}$ is defined as
$\tilde{\textbf{x}} \triangleq \left[x_{1I},x_{1Q},\cdots,x_{nI},x_{nQ}\right]^{T}$.

%%% Section 2 %%%%%%%%%%%%%%%%%%%%%%%%%%%%%%%%%%%%%%%%%%%%%%%%%%%%%%%%%%%%%%%%%%%%%%%%%%%%%%

\section{System Model}
\label{sec2}

For any Linear STBC with variables $x_{1}, x_{2} . . . , x_{K}$ given by (\ref{ld_stbc}), the generator matrix $\textbf{G}$ \cite{BHV} is defined by $\widetilde{vec\left(\textbf{X}\right)} = \textbf{G} \tilde{\textbf{x}},$ 
where $\tilde{\textbf{x}} = \left[x_{1}, x_{2} . . . , x_{K}\right]^{T}$. In terms of the weight matrices, the generator matrix can be written as
\begin{equation*}
\textbf{G} = \left[\widetilde{vec\left(\textbf{A}_{1}\right)} ~ \widetilde{vec\left(\textbf{A}_{2}\right)} ~ \cdots ~ \widetilde{vec\left(\textbf{A}_{K}\right)} ~ \right].
\end{equation*}
Hence, for any STBC,  \eqref{system_model} can be written as
\begin{equation*}
\widetilde{vec\left(\textbf{Y}\right)} = \textbf{H}_{eq}\tilde{\textbf{x}} + \widetilde{vec\left(\textbf{N}\right)},
\end{equation*}
where $\textbf{H}_{eq} \in \mathbb{R}^{2n_{r}n_{t} \times K}$ is given by
$\textbf{H}_{eq} = \left(\textbf{I}_{n_{t}} \otimes \check{\textbf{H}}\right) \textbf{G},$ 
 and 
$\tilde{\textbf{x}} = \left[x_{1}, x_{2} . . . , x_{K}\right],$ 
with each $x_{i}$ drawn from a 1-dimensional (PAM) constellation. Using the above equivalent system model, the ML decoding metric \eqref{ML} can be written as
\begin{equation*}
\textbf{M}\left(\tilde{\textbf{x}}\right) = \parallel \widetilde{vec\left(\textbf{Y}\right)} - \textbf{H}_{eq}\tilde{\textbf{x}}\parallel^{2}.
\end{equation*}
Using $\textbf{Q}\textbf{R}$ decomposition of $\textbf{H}_{eq}$, we get $\textbf{H}_{eq} = \textbf{Q}\textbf{R}$ where $\textbf{Q} \in \mathbb{R}^{2n_{r}n_{t} \times K}$ is an orthonormal matrix and $ \textbf{R} \in \mathbb{R}^{K \times K}$ is an upper triangular matrix. Using this, the ML decoding metric now changes to 
\begin{equation}
\label{eq_ml_decoding_metric}
\textbf{M}\left(\tilde{\textbf{x}}\right) = \parallel \textbf{Q}^{T}\widetilde{vec\left(\textbf{Y}\right)} - \textbf{R}\tilde{\textbf{x}}\parallel^{2} = \parallel \textbf{y}^{'} - \textbf{R}\tilde{\textbf{x}}\parallel^{2}.
\end{equation}
If we have $\textbf{H}_{eq} = \left[ \textbf{h}_{1} \textbf{h}_{2} ..., \textbf{h}_{K}\right] ,$ where $ \textbf{h}_{i}, i \in 1, 2, ... , K$ are column vectors, then the $ \textbf{Q}$ and $ \textbf{R}$ matrices have the following form obtained by the Gram-Schmidt orthogonalization: 
\begin{equation}
\label{q_mat}
\textbf{Q} = \left[ \textbf{q}_{1} ~ \textbf{q}_{2} ~ ... ~ \textbf{q}_{K}\right] ,
\end{equation}
where $ \textbf{q}_{i}, i \in 1, 2, ... , K$ are column vectors, and
\begin{equation}
\label{r_mat_def}
\textbf{R} = \left[\begin{array}{ccccc}
\parallel \textbf{r}_{1} \parallel & \left\langle \textbf{q}_{1}, \textbf{h}_{2}\right\rangle & \left\langle \textbf{q}_{1}, \textbf{h}_{3}\right\rangle & \cdots & \left\langle \textbf{q}_{1}, \textbf{h}_{K}\right\rangle\\
0 & \parallel \textbf{r}_{2} \parallel & \left\langle \textbf{q}_{2}, \textbf{h}_{3}\right\rangle & \cdots & \left\langle \textbf{q}_{2}, \textbf{h}_{K}\right\rangle\\
0 & 0 & \parallel \textbf{r}_{3} \parallel & \cdots & \left\langle \textbf{q}_{3}, \textbf{h}_{K}\right\rangle\\
\vdots & \vdots & \vdots & \ddots & \vdots\\
0 & 0 & 0 & \cdots & \parallel \textbf{r}_{K} \parallel\\
\end{array}\right],
\end{equation}
where $\textbf{r}_{1} = \textbf{h}_{1},~~~~ \textbf{q}_{1} = \frac{\textbf{r}_{1}}{\parallel \textbf{r}_{1} \parallel}$ and for $i = 2, ... K,$
\begin{equation*}
\label{r_mat_entries2}
\textbf{r}_{i} = \textbf{h}_{i} - \sum_{j=1}^{i-1} \left\langle \textbf{q}_{j}, \textbf{h}_{i}\right\rangle \textbf{q}_{j} , ~~~~\textbf{q}_{i} = \frac{\textbf{r}_{i}}{\parallel \textbf{r}_{i} \parallel}.
\end{equation*}
%%%%%%%%%%%%%%%%%%%%%%%%%%%%%%%%%%%%%%%%%%%%%%%%%

%\subsection{Multi-group decodability, fast decodability and fast group decodability}
\subsection{Low decoding complexity codes}

A brief overview of the known low decoding complexity codes is given in this section. The codes that will be described are multi-group decodable codes, fast decodable codes and fast group decodable codes. 

In case of a multi-group decodable STBC, the variables can be partitioned into groups such that the ML decoding metric is decoupled into submetrics such that only the members of the same group need to be decoded jointly. It can be formally defined as \cite{KaR}, \cite{KhR}, \cite{RaR}:

\begin{definition}
\label{multi_group_decodability}
An STBC is said to be $g$-group decodable if there exists a partition of $\left\lbrace 1, 2, ... , K\right\rbrace $ into $g$ non-empty subsets $\Gamma_{1}, \Gamma_{2}, ... , \Gamma_{g}$ such that the following condition is satisfied:
\begin{equation*}
\label{multi_group_dec_cond}
\textbf{A}_{l}\textbf{A}_{m}^{H} + \textbf{A}_{m}\textbf{A}_{l}^{H} = \textbf{0},
\end{equation*}
whenever $l \in \Gamma_{i}$ and $m \in \Gamma_{j}$ and $i \neq j$. 
\end{definition}
If we group all the variables of the same group together in \eqref{eq_ml_decoding_metric}, then the $ \textbf{R}$ matrix for the SD \cite{ViB}, \cite{DCB} in case of multi-group decodable codes will be of the following form:
\begin{equation}
\label{multi_group_r_mat}
\textbf{R} = \left[\begin{array}{cccc}
\Delta_{1} & \textbf{0} & \cdots & \textbf{0}\\
\textbf{0} & \Delta_{2} & \cdots & \textbf{0}\\
\vdots & \vdots & \ddots & \vdots\\
\textbf{0} & \textbf{0} & \cdots & \Delta_{g}\\
\end{array}\right],
\end{equation}
where $\Delta_{i}, i = 1, 2, ..., g$ is a square upper triangular matrix. 

Now, consider the standard SD of an STBC. Suppose the $ \textbf{R}$ matrix as defined in \eqref{r_mat_def} turns out to be such that when we fix values for a set of symbols, the rest of the symbols become group decodable, then the code is said to be fast decodable. Formally, it is defined as follows:
\begin{definition}
\label{fast_decodability}
An STBC is said to be fast SD if there exists a partition of $\left\lbrace 1, 2, ... , L\right\rbrace $ where $L \leq K$ into $g$ non-empty subsets $\Gamma_{1}, \Gamma_{2}, ... , \Gamma_{g}$ such that the following condition is satisfied for all $i<j$
\begin{equation}
\label{fast_decode_eq}
\left\langle \textbf{q}_{i}, \textbf{h}_{j} \right\rangle = 0,
\end{equation}
whenever $i \in \Gamma_{p}$ and $j \in \Gamma_{q}$ and $p \neq q$ where $ \textbf{q}_{i}$ and $ \textbf{h}_{j}$ are obtained from the $\textbf{Q}\textbf{R}$ decomposition of the equivalent channel matrix $\textbf{H}_{eq} = \left[ \textbf{h}_{1} \textbf{h}_{2} ..., \textbf{h}_{K}\right] = \textbf{Q}\textbf{R}$ with $ \textbf{h}_{i}, i \in 1, 2, ... , K$ as column vectors and $\textbf{Q} = \left[ \textbf{q}_{1} ~ \textbf{q}_{2} ~ ... ~ \textbf{q}_{K}\right]$ with $ \textbf{q}_{i}, i \in 1, 2, ... , K$ as column vectors as defined in \eqref{q_mat}.   
\end{definition}

Hence, by conditioning $K - L$ variables, the code becomes $g$-group decodable. As a special case, when no conditioning is needed, i.e., $L = K$, then the code is $g$-group decodable. The $ \textbf{R}$ matrix for fast decodable codes will have the following form:
\begin{equation}
\label{fast_decodable_r_mat}
\textbf{R} = \left[\begin{array}{cc}
\Delta & \textbf{B}_{1}\\
\textbf{0} & \textbf{B}_{2}\\
\end{array}\right],
\end{equation}
where $\Delta$ is an $L \times L$ block diagonal, upper triangular matrix, $ \textbf{B}_{2}$ is a square upper triangular matrix and $ \textbf{B}_{1}$ is a rectangular matrix.

Fast group decodable codes were introduced in \cite{RGYS}. These codes combine the properties of multi-group decodable codes and the fast decodable codes. These codes allow each of the groups in the multi-group decodable codes to be fast decoded. 
The $ \textbf{R}$ matrix for a fast group decodable code will have the following form:
\begin{equation}
\label{fast_group_decodable_r_mat}
\textbf{R} = \left[\begin{array}{cccc}
\textbf{R}_{1} & \textbf{0} & \cdots & \textbf{0}\\
\textbf{0} & \textbf{R}_{2} & \cdots & \textbf{0}\\
\vdots & \vdots & \ddots & \vdots\\
\textbf{0} & \textbf{0} & \cdots & \textbf{R}_{g}\\
\end{array}\right],
\end{equation}
where each $ \textbf{R}_{i}, i = 1, 2, ..., g$ will have the following form:
\begin{equation}
\label{fast_group_decodable_ri_mat}
\textbf{R}_{i} = \left[\begin{array}{cc}
\Delta_{i} & \textbf{B}_{i_{1}}\\
\textbf{0} & \textbf{B}_{i_{2}}\\
\end{array}\right],
\end{equation}
where $\Delta_{i}$ is an $L_{i} \times L_{i}$ block diagonal, upper triangular matrix, $ \textbf{B}_{i_{2}}$ is a square upper triangular matrix and $ \textbf{B}_{i_{1}}$ is a rectangular matrix.

\section{Block Orthogonal STBCs}
\label{sec3}

Block orthogonal codes introduced in \cite{RGYZ} are a sub-class of fast decodable / fast group decodable codes. They impose an additional structure on the variables conditioned in these codes. An STBC is said to be block orthogonal if the $\textbf{R}$ matrix of the code has the following structure:
\begin{equation}
\label{block_ortho_r_mat}
\textbf{R} = \left[\begin{array}{cccc}
\textbf{R}_{1} & \textbf{B}_{12} & \cdots & \textbf{B}_{1\Gamma}\\
\textbf{0} & \textbf{R}_{2} & \cdots & \textbf{B}_{2\Gamma}\\
\vdots & \vdots & \ddots & \vdots\\
\textbf{0} & \textbf{0} & \cdots & \textbf{R}_{\Gamma}\\
\end{array}\right],
\end{equation}
where each $ \textbf{R}_{i}, i = 1, 2, ..., \Gamma$ is a block diagonal, upper triangular matrix with $k$ blocks $ \textbf{U}_{i1}, \textbf{U}_{i2}, ..., \textbf{U}_{ik} $, each of size $ \gamma \times \gamma$ and $ \textbf{B}_{ij}, i = 1, 2, ..., \Gamma, ~ j = i+1, ..., \Gamma$ are non-zero matrices.

The low decoding complexity codes described in Section \ref{sec2} utilize the zero entries in the upper triangular matrix $\textbf{R}$, in the breadth first or depth first search decoders such as the sphere decoder or the QRDM decoder to achieve decoding complexity reduction. The fast sphere decoding complexity \cite{JiR} of an STBC is governed by the zeros in the upper left block of the $\textbf{R}$ matrix and does not exploit the zeros in the lower right blocks. The zeros in the lower right block can be used to reduce the average decoding complexity of the code where the average decoding complexity refers to the average number of floating operations performed by the decoder. The zeros in the lower right block are also utilized in some non ML decoders such as the QRDM decoder \cite{RGYZ} or the modified sphere decoder \cite{KaC} to reduce the decoding complexity of the code. 

\subsection{Design criteria for Block Orthogonal STBCs}
\label{bostc_results}

The structure of block orthogonal matrix was defined in \eqref{block_ortho_r_mat}. In general, the size of block diagonal matrices, $\textbf{R}_{i}$'s, and the upper triangular blocks in these matrices can be arbitrary. Similar to \cite{RGYZ}, we consider only the case that $\textbf{R}_{i}$s have the same size, $k \times k$, and the upper triangular blocks in $\textbf{R}_{i}$s each have the same size $\gamma \times \gamma$. Hence, a block orthogonal code can be represented by the parameters $\left(\Gamma, k, \gamma\right) $:
\begin{itemize}
\item $\Gamma$: The number of matrices $\textbf{R}_{i}$ in $\textbf{R}$;
\item $k$: The number of blocks in the block diagonal matrix $\textbf{R}_{i}$ - denoted by $\textbf{U}_{ij}$, $ 1 \leq j \leq k $;
\item $\gamma$: The number of diagonal entries in the matrices $\textbf{U}_{ij}$. 
\end{itemize}

A set of sufficient conditions for an STBC to be a BOSTBC with the parameters $\left( \Gamma, k, 1\right) $ are described below: 

\subsubsection{2-Block BOSTBC}
\label{two_blk_stbc}

First a condition for the STBC to be block orthogonal with parameters $\left(2, k, 1\right) $ is given. The case for $\Gamma > 2$ will be given subsequently. 

\begin{lemma}
\label{bostc_lemma1}
\cite{RGYZ} Consider an STBC of size $T \times N_{t}$ with
weight matrices $\textbf{A}_{1} , ... , \textbf{A}_{k}$ , $\textbf{B}_{1} , ... , \textbf{B}_{k}$. Let
\begin{equation*}
\mathcal{A}_{i} = \left[\begin{array}{cc}
\textbf{A}_{i}^{R} & -\textbf{A}_{i}^{I}\\
\textbf{A}_{i}^{I} & \textbf{A}_{i}^{R}\\
\end{array}\right], ~~ 
\mathcal{B}_{i} = \left[\begin{array}{cc}
\textbf{B}_{i}^{R} & -\textbf{B}_{i}^{I}\\
\textbf{B}_{i}^{I} & \textbf{B}_{i}^{R}\\
\end{array}\right]
\end{equation*}
and $\mathcal{A}_{i} \triangleq \left[ a_{iup}\right] _{2T \times 2N_{t}}$, $\mathcal{B}_{i} \triangleq \left[ b_{iup}\right] _{2T \times 2N_{t}}$, $i = 1,...,k$, $u = 1, ... 2T$ and $p = 1, ... 2N_{t}$. This STBC has block orthogonal structure $\left(2, k, 1\right) $ if the following conditions are satisfied:
\begin{itemize}
\item $\left\lbrace \mathcal{A}_{1}, ..., \mathcal{A}_{k}, \mathcal{B}_{1}, \mathcal{B}_{k}\right\rbrace $ is of dimension $2k$.
\item $\mathcal{A}_{i}^{T}\mathcal{A}_{i} = \textbf{I}$ and $\mathcal{B}_{i}^{T}\mathcal{B}_{i} = \textbf{I}$ for $i = 1,...,k$.
\item $\mathcal{A}_{i}^{T}\mathcal{A}_{j} = -\mathcal{A}_{j}^{T}\mathcal{A}_{i}$ and $\mathcal{B}_{i}^{T}\mathcal{B}_{j} = -\mathcal{B}_{j}^{T}\mathcal{B}_{i}$ for $i,j = 1,...,k$ and $i \neq j$. 
\item $\sum_{\left( p,q,s,t\right) \in \mathbb{S}} d_{pqst} = 0 $ for $i,j = 1,...,k$ and $i \neq j$ where 
\begin{equation*}
d_{pqst} = \sum_{l=1}^{k}\left( \sum_{u=1}^{2T} b_{iup}a_{lus} . \sum_{v=1}^{2T}b_{jvq}a_{lvt}\right) 
\end{equation*}
and each element (tuple) of $\mathbb{S}$ includes four uniquely permuted scalars drawn from $\left\lbrace 1, ..., 2N_{t}\right\rbrace $.  
\end{itemize}

\end{lemma}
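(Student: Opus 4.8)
The plan is to compute the $\textbf{R}$ matrix of the QR decomposition of $\textbf{H}_{eq}$ directly, with the variables ordered so that $x_{1},\dots,x_{k}$ correspond to $\textbf{A}_{1},\dots,\textbf{A}_{k}$ and $x_{k+1},\dots,x_{2k}$ to $\textbf{B}_{1},\dots,\textbf{B}_{k}$, and then read off the claimed $(2,k,1)$ shape. The starting point is the identity $\textbf{h}_{i}=\left(\textbf{I}_{n_{t}}\otimes\check{\textbf{H}}\right)\widetilde{vec\left(\textbf{A}_{i}\right)}$, which gives $\left\langle\textbf{h}_{i},\textbf{h}_{j}\right\rangle=\widetilde{vec\left(\textbf{A}_{i}\right)}^{T}\left(\textbf{I}_{n_{t}}\otimes\check{\textbf{H}}^{T}\check{\textbf{H}}\right)\widetilde{vec\left(\textbf{A}_{j}\right)}$. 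Since the block real representation $\mathcal{A}_{i}$ is obtained from $\check{\textbf{A}}_{i}$ by a fixed permutation and is a $*$-homomorphism (so $\mathcal{A}_{i}^{T}$ represents $\textbf{A}_{i}^{H}$ and $\check{\textbf{H}}^{T}\check{\textbf{H}}$ represents $\textbf{H}^{H}\textbf{H}$), this inner product is a real bilinear form in the channel whose coefficients are the entries of $\mathcal{A}_{i}^{T}\mathcal{A}_{j}$ (and, for cross terms, of $\mathcal{B}_{i}^{T}\mathcal{A}_{l}$). I would record this reduction first, because every later step is an instance of it, and because it is what makes the conditions channel-independent.

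Next I would handle the two diagonal blocks. Condition 2 makes $\left\langle\textbf{h}_{i},\textbf{h}_{i}\right\rangle$ independent of $i$ within each group, so the $\textbf{A}$-columns share a common norm $\rho$; condition 1 guarantees $\rho\neq0$ and that the $2k$ columns are independent, so the QR decomposition is well defined with nonzero diagonal. Condition 3, rewritten as $\mathcal{A}_{i}^{T}\mathcal{A}_{j}+\mathcal{A}_{j}^{T}\mathcal{A}_{i}=\textbf{0}$, forces $\left\langle\textbf{h}_{i},\textbf{h}_{j}\right\rangle=0$ for $i\neq j\leq k$, and likewise for the $\textbf{B}$-columns. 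Hence $\textbf{R}_{1}$ is diagonal, $\textbf{q}_{l}=\textbf{h}_{l}/\rho$ for $l\leq k$, and $\left\langle\textbf{h}_{k+i},\textbf{h}_{k+j}\right\rangle=0$ for $i\neq j$. The only thing that can still spoil the $(2,k,1)$ shape is the lower-right block $\textbf{R}_{2}$, whose off-diagonal entries are the inner products of the residual $\textbf{B}$-columns after projection onto $\mathrm{span}\{\textbf{q}_{1},\dots,\textbf{q}_{k}\}$.

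The heart of the proof is therefore to show that these residuals stay mutually orthogonal. Writing $\tilde{\textbf{b}}_{i}=\textbf{h}_{k+i}-\rho^{-2}\sum_{l=1}^{k}\left\langle\textbf{h}_{l},\textbf{h}_{k+i}\right\rangle\textbf{h}_{l}$, I would expand $\left\langle\tilde{\textbf{b}}_{i},\tilde{\textbf{b}}_{j}\right\rangle$; the term $\left\langle\textbf{h}_{k+i},\textbf{h}_{k+j}\right\rangle$ vanishes by the previous step, leaving the requirement $\sum_{l=1}^{k}\left\langle\textbf{h}_{l},\textbf{h}_{k+i}\right\rangle\left\langle\textbf{h}_{l},\textbf{h}_{k+j}\right\rangle=0$. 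By the first step, $\left\langle\textbf{h}_{l},\textbf{h}_{k+i}\right\rangle$ is a quadratic form in the entries of $\check{\textbf{H}}$ whose coefficient indexed by the column pair $(p,s)$ is $\left(\mathcal{B}_{i}^{T}\mathcal{A}_{l}\right)_{ps}=\sum_{u}b_{iup}a_{lus}$; multiplying two such forms and summing over $l$ produces exactly $d_{pqst}=\sum_{l}\left(\sum_{u}b_{iup}a_{lus}\right)\left(\sum_{v}b_{jvq}a_{lvt}\right)$. Because the decomposition must hold for every channel realisation, I would then group the resulting quartic channel monomials and require the coefficient of each distinct monomial to vanish; the four-tuples $(p,q,s,t)$ that feed a common monomial are precisely the set $\mathbb{S}$ of uniquely permuted indices, so the vanishing condition becomes $\sum_{(p,q,s,t)\in\mathbb{S}}d_{pqst}=0$, which is condition 4. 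This yields $\textbf{R}_{2}$ diagonal, while $\textbf{B}_{12}$, the block of cross inner products $\left\langle\textbf{q}_{l},\textbf{h}_{k+j}\right\rangle$, is generically nonzero, completing the $(2,k,1)$ structure.

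I expect the main obstacle to be exactly this last combinatorial bookkeeping: correctly identifying which index four-tuples $(p,q,s,t)$ contribute to the same channel monomial, hence must be pooled into a single tuple of $\mathbb{S}$, and checking that imposing $\sum_{(p,q,s,t)\in\mathbb{S}}d_{pqst}=0$ on each pool is both necessary and sufficient for channel-independent vanishing. The homomorphism property of the real representation and the equal-norm reduction from condition 2 are routine once set up; the delicate point is keeping the row indices $u,v$ (which are summed out) separate from the column indices $p,q,s,t$ (which label the surviving channel monomials), since it is only after this separation that $\sum_{l}\left\langle\textbf{h}_{l},\textbf{h}_{k+i}\right\rangle\left\langle\textbf{h}_{l},\textbf{h}_{k+j}\right\rangle$ collapses onto the stated condition.
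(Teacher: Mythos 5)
Your proposal is correct in its essentials, but be aware that the paper never proves this lemma at all: it is imported verbatim from Ren et al.\ \cite{RGYZ}, and the paper's own proofs (Appendices A and B) are for the generalizations, Lemmas \ref{bostc_lemma3} and \ref{bostc_lemma4}, in which your combinatorial fourth condition is replaced by the direct hypothesis that $\textbf{E}^{H}\textbf{E}$ is block diagonal. Measured against those appendix proofs, your argument is the entrywise version of the same mechanism plus one extra step the paper never takes. The paper works with the block identity $\textbf{H}_{2}^{T}\textbf{H}_{2}-\textbf{E}^{T}\textbf{E}=\textbf{R}_{2}^{T}\textbf{R}_{2}$, obtained from $\textbf{H}_{2}=\textbf{Q}_{1}\textbf{E}+\textbf{Q}_{2}\textbf{R}_{2}$: Hurwitz--Radon orthogonality of the second group makes $\textbf{H}_{2}^{T}\textbf{H}_{2}$ diagonal, so $\textbf{R}_{2}^{T}\textbf{R}_{2}$ is diagonal exactly when $\textbf{E}^{T}\textbf{E}$ is, and $\textbf{R}_{2}$ being upper triangular and full rank then forces $\textbf{R}_{2}$ itself to be diagonal. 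Your residual computation reproduces this identity by hand: since $\textbf{q}_{l}=\textbf{h}_{l}/\rho$ for $l\leq k$, one has $\left(\textbf{E}^{T}\textbf{E}\right)_{ij}=\rho^{-2}\sum_{l}\left\langle\textbf{h}_{l},\textbf{h}_{k+i}\right\rangle\left\langle\textbf{h}_{l},\textbf{h}_{k+j}\right\rangle$ and $\left\langle\tilde{\textbf{b}}_{i},\tilde{\textbf{b}}_{j}\right\rangle=\left(\textbf{H}_{2}^{T}\textbf{H}_{2}-\textbf{E}^{T}\textbf{E}\right)_{ij}$, so your orthogonality requirement on the residuals is precisely the paper's condition on $\textbf{E}^{T}\textbf{E}$. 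What your proof adds --- and what is needed to match the hypothesis of \emph{this} lemma rather than of Lemma \ref{bostc_lemma3} --- is the translation of ``$\textbf{E}^{T}\textbf{E}$ diagonal for every channel realisation'' into the channel-free condition on the weight matrices, by expanding each $\left\langle\textbf{h}_{l},\textbf{h}_{k+i}\right\rangle$ as a quadratic form with coefficients $\left(\mathcal{B}_{i}^{T}\mathcal{A}_{l}\right)_{ps}$, multiplying, summing over $l$ to obtain $d_{pqst}$, and pooling tuples that feed a common quartic monomial into $\mathbb{S}$. One simplification you can claim for free: since the lemma asserts only \emph{sufficiency}, you never need the delicate ``necessary'' direction you worry about in your last paragraph --- if every pooled sum $\sum_{\left(p,q,s,t\right)\in\mathbb{S}}d_{pqst}$ vanishes, the quartic form vanishes identically, which is immediate. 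The trade-off between the two routes is clear: the paper's identity-based argument is shorter and extends painlessly to sub-blocks of size $\gamma>1$, while your argument explains where the otherwise opaque $d_{pqst}$ condition actually comes from.
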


\subsubsection{$\Gamma$-block BOSTBC, $\Gamma > 2$}
\label{gamma_blk_stbc}

The set of conditions for an STBC to have a block orthogonal structure with parameters $\left(\Gamma, k, 1\right) $  is now given.

\begin{lemma}
\label{bostc_lemma2}
\cite{RGYZ} Let the $\textbf{R}$ matrix of an STBC with weight matrices $\left\lbrace \textbf{A}_{1} , ... , \textbf{A}_{L}\right\rbrace $ , $\left\lbrace \textbf{B}_{1} , ... , \textbf{B}_{k}\right\rbrace $ be 
\begin{equation*}
\textbf{R} = \left[\begin{array}{cc}
\textbf{R}_{1} & \textbf{E}\\
\textbf{0} & \textbf{R}_{2}\\
\end{array}\right],
\end{equation*}
where $\textbf{R}_{1}$ is a $L \times L$ block-orthogonal matrix with parameters $\left(\Gamma-1, k, 1\right)$, $\textbf{E}$ is an $L \times k$ matrix and $\textbf{R}_{2}$ is a $k \times k$ upper triangular matrix. The STBC will be a block orthogonal STBC with parameters $\left(\Gamma, k, 1\right)$ if the following conditions are satisfied: 
\begin{itemize}
\item The matrices $\left\lbrace \textbf{B}_{1} , ... , \textbf{B}_{k}\right\rbrace $ are Hurwitz-Radon orthogonal.
\item The matrix $\textbf{E}$ is para-unitary, i.e., $\textbf{E}^{H}\textbf{E} = \textbf{I}$.
\end{itemize}

\end{lemma}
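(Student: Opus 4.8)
The plan is to argue by induction on $\Gamma$, taking the base case $\Gamma = 2$ from Lemma~\ref{bostc_lemma1}. Since $\textbf{R}_{1}$ is assumed to already have the block-orthogonal form with parameters $\left(\Gamma-1, k, 1\right)$, the full matrix $\textbf{R}$ will match \eqref{block_ortho_r_mat} with parameters $\left(\Gamma, k, 1\right)$ as soon as I establish the single structural fact that the $k \times k$ upper triangular block $\textbf{R}_{2}$ is in fact \emph{diagonal}. Indeed, $\textbf{E}$ then supplies the off-diagonal blocks $\textbf{B}_{1\Gamma}, \ldots, \textbf{B}_{(\Gamma-1)\Gamma}$ after partitioning into $k$-column strips, $\textbf{R}_{1}$ supplies the blocks $\textbf{R}_{1}, \ldots, \textbf{R}_{\Gamma-1}$, and a diagonal $k \times k$ matrix $\textbf{R}_{2}$ is precisely $k$ blocks of size $\gamma = 1$ in the role of $\textbf{R}_{\Gamma}$. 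So the entire mathematical content reduces to proving that $\textbf{R}_{2}$ is diagonal.

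First I would identify $\textbf{R}_{2}$ with the $\textbf{R}$-factor of the last $k$ columns $\textbf{h}_{L+1}, \ldots, \textbf{h}_{L+k}$ of $\textbf{H}_{eq}$ (those associated with $\textbf{B}_{1}, \ldots, \textbf{B}_{k}$) after Gram-Schmidt against the first $L$ columns has been carried out. Writing $\hat{\textbf{h}}_{L+i} = \textbf{h}_{L+i} - \sum_{l=1}^{L} \left\langle \textbf{q}_{l}, \textbf{h}_{L+i}\right\rangle \textbf{q}_{l}$ for the projection of $\textbf{h}_{L+i}$ onto the orthogonal complement of $\mathrm{span}\{\textbf{q}_{1}, \ldots, \textbf{q}_{L}\}$, the coefficients $\left\langle \textbf{q}_{l}, \textbf{h}_{L+i}\right\rangle$ are exactly the entries of column $i$ of $\textbf{E}$. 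Because the vectors $\textbf{q}_{L+1}, \ldots, \textbf{q}_{L+k}$ obtained from the full Gram-Schmidt coincide with the orthonormalization of the residuals $\hat{\textbf{h}}_{L+1}, \ldots, \hat{\textbf{h}}_{L+k}$, the factor $\textbf{R}_{2}$ is the upper triangular factor of $[\hat{\textbf{h}}_{L+1} \cdots \hat{\textbf{h}}_{L+k}]$, whence $\textbf{R}_{2}^{T}\textbf{R}_{2}$ equals the Gram matrix of the residuals. As $\textbf{R}_{2}$ is upper triangular with positive diagonal, $\textbf{R}_{2}$ is diagonal if and only if this Gram matrix is diagonal.

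The next step is the expansion $\left\langle \hat{\textbf{h}}_{L+i}, \hat{\textbf{h}}_{L+j}\right\rangle = \left\langle \textbf{h}_{L+i}, \textbf{h}_{L+j}\right\rangle - \left(\textbf{E}^{H}\textbf{E}\right)_{ij}$, which follows immediately by orthonormality of the $\textbf{q}_{l}$ (the three cross and quadratic terms collapse to a single $-\sum_{l}\langle\textbf{q}_{l},\textbf{h}_{L+i}\rangle\langle\textbf{q}_{l},\textbf{h}_{L+j}\rangle$). Here both hypotheses enter for $i \neq j$: para-unitarity $\textbf{E}^{H}\textbf{E} = \textbf{I}$ annihilates the second term, while Hurwitz-Radon orthogonality $\textbf{B}_{i}\textbf{B}_{j}^{H} + \textbf{B}_{j}\textbf{B}_{i}^{H} = \textbf{0}$ forces $\left\langle \textbf{h}_{L+i}, \textbf{h}_{L+j}\right\rangle = 0$ for every channel realization, via the same $\check{\left(\centerdot\right)}$/$vec$ identity that makes the multi-group condition of Definition~\ref{multi_group_decodability} produce the block-diagonal structure \eqref{multi_group_r_mat}. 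Thus the off-diagonal residual inner products vanish, $\textbf{R}_{2}$ is diagonal, and the induction closes.

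The step I expect to be the main obstacle is making the identification of the second paragraph fully rigorous: that the lower-right factor $\textbf{R}_{2}$ is governed purely by the projected residuals $\hat{\textbf{h}}_{L+i}$, so that the sequential Gram-Schmidt performed \emph{within} the $\textbf{B}$-block reintroduces no coupling and the equivalence ``$\textbf{R}_{2}$ diagonal $\iff$ residual Gram matrix diagonal'' holds. The accompanying translation of the weight-matrix condition $\textbf{B}_{i}\textbf{B}_{j}^{H} + \textbf{B}_{j}\textbf{B}_{i}^{H} = \textbf{0}$ into orthogonality of the equivalent channel columns is routine but should be stated explicitly, since it is exactly what guarantees the off-diagonal of $\left\langle \textbf{h}_{L+i}, \textbf{h}_{L+j}\right\rangle$ vanishes independently of $\textbf{H}$.
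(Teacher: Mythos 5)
Your proposal is correct and takes essentially the same approach as the paper, whose own argument (given in Appendix~\ref{proof_bostc_lemma4} for the generalization Lemma~\ref{bostc_lemma4}, and specialized to this statement by setting $\gamma = 1$, $l = k$) writes $\textbf{H}_{2} = \textbf{Q}_{1}\textbf{E} + \textbf{Q}_{2}\textbf{R}_{2}$ from the block QR factorization and derives $\textbf{H}_{2}^{T}\textbf{H}_{2} - \textbf{E}^{T}\textbf{E} = \textbf{R}_{2}^{T}\textbf{R}_{2}$, which is precisely your residual identity in matrix form, since your residual matrix $\left[\hat{\textbf{h}}_{L+1} \cdots \hat{\textbf{h}}_{L+k}\right]$ equals $\textbf{H}_{2} - \textbf{Q}_{1}\textbf{E} = \textbf{Q}_{2}\textbf{R}_{2}$. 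The identification you flag as the main obstacle is thus dispatched in one line by the block factorization, and the remaining steps---Hurwitz--Radon orthogonality forcing the off-diagonal entries of $\textbf{H}_{2}^{T}\textbf{H}_{2}$ to vanish (via Theorem 2 of \cite{PaR}), para-unitarity removing $\textbf{E}^{T}\textbf{E}$, and upper triangularity plus full rank forcing $\textbf{R}_{2}$ to be diagonal---coincide with the paper's.
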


The authors in \cite{RGYZ} only discuss the conditions for the block orthogonal codes with parameters $\left(\Gamma, k, 1\right)$. These conditions can be easily derived for BOSTBCs with parameters $\left(\Gamma, k, \gamma\right) $ as well. We first derive the conditions for $ \Gamma = 2$. 

\begin{lemma}
\label{bostc_lemma3}
Consider an STBC of size $n_t \times T$ with weight matrices $\left\lbrace \textbf{A}_{1}, \textbf{A}_{2}, ..., \textbf{A}_{l}\right\rbrace $, $ \left\lbrace \textbf{B}_{1}, \textbf{B}_{2}, ..., \textbf{B}_{l}\right\rbrace $. Let the $ \textbf{R}$ matrix for this STBC be of the form 
\begin{equation*}
\textbf{R} = \left[\begin{array}{cc}
\textbf{R}_{1} & \textbf{E}\\
\textbf{0} & \textbf{R}_{2}\\
\end{array}\right],
\end{equation*}
where $\textbf{R}_{1}$ and $\textbf{R}_{2} $ are $l \times l$ upper triangular matrices, $\textbf{E}$ is an $l \times l$ matrix. The STBC will have a block orthogonal structure with parameters $\left(2, k, \gamma\right)$ if the following conditions are satisfied:
\begin{itemize}
\item The matrices $\left\lbrace \textbf{A}_{1} , ... , \textbf{A}_{l}\right\rbrace $ are $k$-group decodable with $\gamma$ variables in each group, i.e., $\left\lbrace \textbf{A}_{1} , ... , \textbf{A}_{l}\right\rbrace $ can be partitioned into $k$ sets $ \left\lbrace \textbf{S}_{1}, ..., \textbf{S}_{k}\right\rbrace $, each of cardinality $\gamma$ such that $ \textbf{A}_{i} \textbf{A}_{j}^{H} + \textbf{A}_{j} \textbf{A}_{i}^{H} = \textbf{0}$ for all $ \textbf{A}_{i} \in \textbf{S}_{m}$, $ \textbf{A}_{j} \in \textbf{S}_{n}$, $m \neq n$.   
\item The matrices $\left\lbrace \textbf{B}_{1} , ... , \textbf{B}_{l}\right\rbrace $ are $k$-group decodable with $\gamma$ variables in each group, i.e., $\left\lbrace \textbf{B}_{1} , ... , \textbf{B}_{l}\right\rbrace $ can be partitioned into $k$ sets $ \left\lbrace \textbf{S}_{1}, ..., \textbf{S}_{k}\right\rbrace $, each of cardinality $\gamma$ such that $ \textbf{B}_{i} \textbf{B}_{j}^{H} + \textbf{B}_{j} \textbf{B}_{i}^{H} = \textbf{0}$ for all $ \textbf{B}_{i} \in \textbf{S}_{m}$, $ \textbf{B}_{j} \in \textbf{S}_{n}$, $m \neq n$.  
\item The set of matrices $ \left\lbrace \textbf{A}_{1},..., \textbf{A}_{l}, \textbf{B}_{1},..., \textbf{B}_{l}\right\rbrace $ are such that the $ \textbf{R}$ matrix obtained has full rank.  
\item The matrix $\textbf{E}^{H}\textbf{E}$ is a block diagonal matrix with $k$ blocks of size $\gamma \times \gamma$.
\end{itemize}
\end{lemma}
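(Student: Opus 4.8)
The plan is to reduce the claimed block-orthogonal structure with parameters $(2,k,\gamma)$ to the form of the $\textbf{R}$ matrix in \eqref{block_ortho_r_mat} by analyzing what each of the four hypotheses forces on the upper-triangular $\textbf{R}$ matrix obtained from the $\textbf{Q}\textbf{R}$ decomposition. The core observation is that the inner-product structure $\langle \textbf{q}_i, \textbf{h}_j\rangle$ in \eqref{r_mat_def} is controlled by the Gram matrix $\textbf{H}_{eq}^{T}\textbf{H}_{eq}$, and the Hurwitz–Radon-type conditions $\textbf{A}_i\textbf{A}_j^{H}+\textbf{A}_j\textbf{A}_i^{H}=\textbf{0}$ translate into orthogonality relations among the columns $\textbf{h}_i$ of $\textbf{H}_{eq}$. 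So my first step would be to record the translation lemma: for weight matrices $\textbf{A}_i,\textbf{A}_j$, the quantity $\langle \widetilde{vec(\textbf{A}_i)}, (\textbf{I}_{n_t}\otimes \check{\textbf{H}})^T(\textbf{I}_{n_t}\otimes \check{\textbf{H}})\,\widetilde{vec(\textbf{A}_j)}\rangle$ vanishes precisely when $\textbf{A}_i\textbf{A}_j^{H}+\textbf{A}_j\textbf{A}_i^{H}=\textbf{0}$, so that columns of $\textbf{H}_{eq}$ coming from Hurwitz–Radon-orthogonal weight matrices are genuinely orthogonal as vectors for every channel realization $\textbf{H}$.

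Next I would handle the top-left block. The first hypothesis says $\{\textbf{A}_1,\dots,\textbf{A}_l\}$ is $k$-group decodable with $\gamma$ variables per group. By Definition \ref{multi_group_decodability} and the form \eqref{multi_group_r_mat}, grouping the variables so that members of the same $\textbf{S}_m$ are contiguous makes the corresponding columns of $\textbf{H}_{eq}$ split into $k$ mutually orthogonal bundles; the $\textbf{Q}\textbf{R}$ decomposition restricted to these columns therefore yields a block-diagonal upper-triangular matrix with $k$ blocks $\textbf{U}_{11},\dots,\textbf{U}_{1k}$, each of size $\gamma\times\gamma$. This is exactly the structure of $\textbf{R}_1$ demanded in \eqref{block_ortho_r_mat} for the $\Gamma=2$ case. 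The identical argument applied to $\{\textbf{B}_1,\dots,\textbf{B}_l\}$ via the second hypothesis gives $\textbf{R}_2$ its block-diagonal form, but here the subtlety is that $\textbf{R}_2$ is built from the residual vectors $\textbf{r}_j$ after projecting the $\textbf{B}$-columns off the span of the $\textbf{A}$-columns, so I must verify the group structure survives the Gram–Schmidt projection.

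This verification is where the fourth hypothesis enters and is, I expect, the main obstacle. The condition that $\textbf{E}^{H}\textbf{E}$ be block diagonal with $k$ blocks of size $\gamma\times\gamma$ is precisely what guarantees that, after the columns associated with $\{\textbf{B}_j\}$ are orthogonalized against the already-processed $\{\textbf{A}_i\}$ columns, the cross-inner-products $\langle \textbf{q}_i,\textbf{h}_j\rangle$ between $\textbf{B}$-columns lying in different groups $\textbf{S}_m,\textbf{S}_n$ continue to vanish. Concretely, the entry of $\textbf{R}_2$ coupling two such $\textbf{B}$-columns equals their raw inner product minus the contribution routed through the $\textbf{A}$-block, and $\textbf{E}$ is exactly the matrix recording that routed contribution; its $\textbf{E}^{H}\textbf{E}$ being block diagonal forces the off-group entries to cancel. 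I would therefore carry out the Gram–Schmidt bookkeeping in block form, writing the relevant $2l\times 2l$ Gram matrix as a $2\times 2$ array of $l\times l$ blocks and tracking how the Schur-complement step that produces $\textbf{R}_2$ preserves block-diagonality under this hypothesis. The third hypothesis (full rank of $\textbf{R}$) is needed only to ensure the decomposition is well defined and the diagonal blocks are genuinely invertible, so I would invoke it to rule out degenerate collapses rather than as a structural ingredient. Assembling these pieces yields the $(2,k,\gamma)$ structure.
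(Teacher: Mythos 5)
Your proposal is correct and follows essentially the same route as the paper's proof: the Hurwitz--Radon-to-column-orthogonality translation (the paper's citation of Theorem 2 of \cite{PaR}) handles $\textbf{H}_1^T\textbf{H}_1$ and $\textbf{H}_2^T\textbf{H}_2$, group decodability of the $\textbf{A}_i$ gives the block-diagonal $\textbf{R}_1$ (Lemma 2 of \cite{JiR}), and your Schur-complement bookkeeping is exactly the paper's identity $\textbf{H}_2^T\textbf{H}_2 - \textbf{E}^T\textbf{E} = \textbf{R}_2^T\textbf{R}_2$, obtained there from $\textbf{H}_2 = \textbf{Q}_1\textbf{E} + \textbf{Q}_2\textbf{R}_2$. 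The only point to sharpen is that full rank is not merely a non-degeneracy safeguard: it is what licenses the final inference that an upper-triangular $\textbf{R}_2$ with $\textbf{R}_2^T\textbf{R}_2$ block diagonal is itself block diagonal (uniqueness of the Cholesky factor), which is precisely where the paper invokes it.
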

\begin{proof}
Proof is given in Appendix \ref{proof_bostc_lemma3}.
\end{proof}

\begin{lemma}
\label{bostc_lemma4}
Let the $\textbf{R}$ matrix of an STBC with weight matrices $\left\lbrace \textbf{A}_{1} , ... , \textbf{A}_{L}\right\rbrace $ , $\left\lbrace \textbf{B}_{1} , ... , \textbf{B}_{l}\right\rbrace $ be 
\begin{equation*}
\textbf{R} = \left[\begin{array}{cc}
\textbf{R}_{1} & \textbf{E}\\
\textbf{0} & \textbf{R}_{2}\\
\end{array}\right],
\end{equation*}
where $\textbf{R}_{1}$ is a $L \times L$ block-orthogonal matrix with parameters $\left(\Gamma-1, k, \gamma\right)$, $\textbf{E}$ is an $L \times l$ matrix and $\textbf{R}_{2}$ is a $l \times l$ upper triangular matrix. The STBC will be a block orthogonal STBC with parameters $\left(\Gamma, k, \gamma\right)$ if the following conditions are satisfied: 
\begin{itemize}
\item The matrices $\left\lbrace \textbf{B}_{1} , ... , \textbf{B}_{l}\right\rbrace $ are $k$-group decodable with $\gamma$ variables in each group, i.e., $\left\lbrace \textbf{B}_{1} , ... , \textbf{B}_{l}\right\rbrace $ can be partitioned into $k$ sets $ \left\lbrace \textbf{S}_{1}, ..., \textbf{S}_{k}\right\rbrace $, each of cardinality $\gamma$ such that $ \textbf{B}_{i} \textbf{B}_{j}^{H} + \textbf{B}_{j} \textbf{B}_{i}^{H} = \textbf{0}$ for all $ \textbf{B}_{i} \in \textbf{S}_{m}$, $ \textbf{B}_{j} \in \textbf{S}_{n}$, $m \neq n$.  
\item The set of matrices $ \left\lbrace \textbf{A}_{1},..., \textbf{A}_{L}, \textbf{B}_{1},..., \textbf{B}_{l}\right\rbrace $ are such that the $ \textbf{R}$ matrix obtained has full rank.  
\item The matrix $\textbf{E}^{H}\textbf{E}$ is a block diagonal matrix with $k$ blocks of size $\gamma \times \gamma$.
\end{itemize}
\end{lemma}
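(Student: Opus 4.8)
```latex
\emph{Proof proposal for Lemma~\ref{bostc_lemma4}.}

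The plan is to proceed by induction on $\Gamma$, using Lemma~\ref{bostc_lemma3} as the base case ($\Gamma = 2$) and treating the present lemma as the inductive step. The structure of the statement already invites this: $\textbf{R}_1$ is assumed to be block-orthogonal with parameters $\left(\Gamma-1, k, \gamma\right)$, and we must show that appending the new column-block governed by $\textbf{E}$ and $\textbf{R}_2$ preserves the block-orthogonal structure, upgrading the parameters to $\left(\Gamma, k, \gamma\right)$. So the real content is showing that $\textbf{R}_2$ itself becomes block diagonal with $k$ blocks of size $\gamma \times \gamma$, which is exactly the shape required for the $\Gamma$-th diagonal block $\textbf{R}_\Gamma$ in \eqref{block_ortho_r_mat}.

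First I would recall how $\textbf{R}_2$ is built by Gram--Schmidt. By the partitioning of $\textbf{R}$, the columns of $\textbf{H}_{eq}$ corresponding to $\left\lbrace \textbf{B}_1, \ldots, \textbf{B}_l\right\rbrace$ are orthogonalized \emph{after} all columns associated with $\left\lbrace \textbf{A}_1, \ldots, \textbf{A}_L\right\rbrace$; thus each $\textbf{q}_i$ for a $\textbf{B}$-column is obtained by subtracting off its projections onto the earlier $\textbf{A}$-subspace as in \eqref{r_mat_entries2}. The entries of $\textbf{R}_2$ are the inner products $\left\langle \textbf{q}_i, \textbf{h}_j\right\rangle$ among these residual vectors. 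The key step is to show that for $\textbf{B}_i \in \textbf{S}_m$ and $\textbf{B}_j \in \textbf{S}_n$ with $m \neq n$, the corresponding off-block entry of $\textbf{R}_2$ vanishes. The Hurwitz--Radon orthogonality condition $\textbf{B}_i \textbf{B}_j^H + \textbf{B}_j \textbf{B}_i^H = \textbf{0}$ translates, in the equivalent real model, into orthogonality of the raw columns $\textbf{h}_i$ and $\textbf{h}_j$ of $\textbf{H}_{eq}$; this is the standard equivalence between the weight-matrix multi-group condition of Definition~\ref{multi_group_decodability} and the vanishing inner products of \eqref{fast_decode_eq}. I would verify that this orthogonality of the raw columns, together with the fact that the projection-subtraction step is common to all $\textbf{B}$-columns, is enough to keep the residual vectors across different groups orthogonal. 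This is where the condition $\textbf{E}^H \textbf{E}$ being block diagonal enters: it guarantees precisely that the projection corrections subtracted from $\textbf{h}_i$ and $\textbf{h}_j$ do not introduce spurious cross-group overlap, so that $\left\langle \textbf{q}_i, \textbf{h}_j\right\rangle = 0$ survives even after orthogonalization.

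The hard part will be handling the interaction between the $\textbf{E}$-block and the Gram--Schmidt residuals cleanly. Concretely, the residual $\textbf{r}_j$ for a $\textbf{B}$-column has the form $\textbf{h}_j - P \textbf{h}_j$, where $P$ is the orthogonal projector onto the span of the already-processed columns, and the correction term $P \textbf{h}_j$ is expressible in terms of the columns of $\textbf{E}$. To show that $\left\langle \textbf{r}_i, \textbf{r}_j\right\rangle = 0$ for cross-group indices, one must expand $\left\langle \textbf{h}_i - P\textbf{h}_i,\; \textbf{h}_j - P\textbf{h}_j\right\rangle$ and argue that each of the four resulting terms either vanishes by Hurwitz--Radon orthogonality or cancels because $\textbf{E}^H \textbf{E}$ carries no cross-group $\left(m,n\right)$ entries. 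The block-diagonality of $\textbf{E}^H \textbf{E}$ is exactly the hypothesis that makes the cross terms $\left\langle P\textbf{h}_i, \textbf{h}_j\right\rangle$ and $\left\langle P\textbf{h}_i, P\textbf{h}_j\right\rangle$ respect the group partition, since these projection inner products are controlled by the entries of $\textbf{E}^H \textbf{E}$.

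Finally, I would assemble the pieces: the full-rank hypothesis on the combined weight-matrix set ensures the QR decomposition is well-defined and each diagonal block is genuinely upper triangular with nonzero diagonal, while the inductive hypothesis delivers the $\left(\Gamma-1, k, \gamma\right)$ structure of $\textbf{R}_1$. Combining the newly established block-diagonality of $\textbf{R}_2$ with the off-diagonal block $\textbf{E}$ then yields the required form \eqref{block_ortho_r_mat} with parameters $\left(\Gamma, k, \gamma\right)$, completing the induction. I expect the bulk of the work to lie in the projection bookkeeping of the inductive step rather than in invoking the base case, which is already furnished by Lemma~\ref{bostc_lemma3}.
```
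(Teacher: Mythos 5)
Your proposal is correct and is essentially the paper's own argument: your four-term expansion of $\left\langle \textbf{h}_{i} - P\textbf{h}_{i},\, \textbf{h}_{j} - P\textbf{h}_{j}\right\rangle$ is exactly the entrywise form of the identity $\textbf{H}_{2}^{T}\textbf{H}_{2} - \textbf{E}^{T}\textbf{E} = \textbf{R}_{2}^{T}\textbf{R}_{2}$, which the paper derives in block form from $\textbf{H}_{2} = \textbf{Q}_{1}\textbf{E} + \textbf{Q}_{2}\textbf{R}_{2}$. Both arguments then finish the same way, combining the Hurwitz--Radon condition on $\left\lbrace \textbf{B}_{1},\ldots,\textbf{B}_{l}\right\rbrace$ (which makes $\textbf{H}_{2}^{T}\textbf{H}_{2}$ block diagonal via Theorem 2 of \cite{PaR}) with the block-diagonality hypothesis on $\textbf{E}^{H}\textbf{E}$ and the full-rank, upper-triangular structure of $\textbf{R}_{2}$ to conclude that $\textbf{R}_{2}$ itself is block diagonal.
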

\begin{proof}
Proof is given in Appendix \ref{proof_bostc_lemma4}.
\end{proof}

\subsection{Effect of ordering on block orthogonality}
\label{bostbc_ordering}

We now show that the block orthogonality property depends on the ordering of the weight matrices or equivalently the ordering of the variables. If we do not choose the right ordering, we will be unable to get the desired structure. 
%%%%%%%%%%%%%%%%%%%%%%%%%%%%%%%%%%%%%%%%%%%%%%%
\begin{example}
\label{golden_code}
Let us consider the Golden code \cite{BRV} given by:
\begin{equation}
\label{golden_code_eq}
\textbf{X} = \frac{1}{\sqrt{5}} \left[\begin{array}{rr}
\alpha\left(s_{1} + s_{2} \theta \right)  & j\overline{\alpha}\left(s_{3} + s_{4} \overline{\theta} \right)\\
\alpha\left(s_{3} + s_{4} \theta \right) & \overline{\alpha}\left(s_{1} + s_{2} \overline{\theta} \right)\\
\end{array}\right],
\end{equation}
where $\theta = \left( 1 + \sqrt{5}\right) /2$, $\overline{\theta} = \left( 1 - \sqrt{5}\right) /2$, $\alpha = 1 + j\left(1 - \theta\right) $, $\overline{\alpha} = 1 + j\left(1 - \overline{\theta}\right) $ and $s_{i} = s_{iI} + j s_{iQ}$ for $i = 1,...,4$. 

If we order the variables (and hence the weight matrices) as  $\left[s_{1I} , s_{1Q} , s_{2I} , s_{2Q} , s_{3I} , s_{3Q} , s_{4I} , s_{4Q} \right]$, then the $ \textbf{R}$ matrix for SD has the following structure
\begin{equation*}
\textbf{R} = \left[\begin{array}{cccccccc}
\textbf{t} & \textbf{0} & 0 & t & t & t & t & t\\
0 & \textbf{t} & t & 0 & t & t & t & t\\
0 & 0 & \textbf{t} & \textbf{0} & t & t & t & t\\
0 & 0 & 0 & \textbf{t} & t & t & t & t\\
0 & 0 & 0 & 0 & \textbf{t} & \textbf{0} & 0 & t\\
0 & 0 & 0 & 0 & 0 & \textbf{t} & t & 0\\
0 & 0 & 0 & 0 & 0 & 0 & \textbf{t} & \textbf{0}\\
0 & 0 & 0 & 0 & 0 & 0 & 0 & \textbf{t}\\
\end{array}\right],
\end{equation*}
where $t$ denotes non zero entries. This ordering of variables has presented a $ \left( 4, 2, 1\right) $ block orthogonal structure to the $\textbf{R}$ matrix. Now, if we change the ordering to $\left[s_{1I} , s_{2I} , s_{1Q} , s_{2Q} , s_{3I} , s_{4I} , s_{3Q} , s_{4Q} \right]$, then the $ \textbf{R}$ matrix for SD has the following structure
\begin{equation*}
\textbf{R} = \left[\begin{array}{cccccccc}
\textbf{t} & \textbf{t} & \textbf{0} & \textbf{0} & t & t & t & t\\
0 & \textbf{t} & \textbf{0} & \textbf{0} & t & t & t & t\\
0 & 0 & \textbf{t} & \textbf{t} & t & t & t & t\\
0 & 0 & 0 & \textbf{t} & t & t & t & t\\
0 & 0 & 0 & 0 & \textbf{t} & \textbf{t} & \textbf{0} & \textbf{0}\\
0 & 0 & 0 & 0 & 0 & \textbf{t} & \textbf{0} & \textbf{0}\\
0 & 0 & 0 & 0 & 0 & 0 & \textbf{t} & \textbf{t}\\
0 & 0 & 0 & 0 & 0 & 0 & 0 & \textbf{t}\\
\end{array}\right],
\end{equation*}
where $t$ denotes non zero entries. This ordering of variables has presented a $ \left( 2, 2, 2\right) $ block orthogonal structure to the $\textbf{R}$ matrix. 
We can also have an ordering which can leave the $\textbf{R}$ matrix bereft of any block orthogonal structure such as $\left[s_{1I} , s_{1Q} , s_{4I} , s_{2Q} , s_{3I} , s_{3Q} , s_{2I} , s_{4Q} \right]$. The structure of the $\textbf{R}$ matrix in this case will be
\begin{equation*}
\textbf{R} = \left[\begin{array}{cccccccc}
t & 0 & t & 0 & t & t & t & t\\
0 & t & t & t & t & t & 0 & t\\
0 & 0 & t & t & t & t & t & 0\\
0 & 0 & 0 & t & t & t & t & t\\
0 & 0 & 0 & 0 & t & t & t & t\\
0 & 0 & 0 & 0 & 0 & t & t & t\\
0 & 0 & 0 & 0 & 0 & 0 & t & t\\
0 & 0 & 0 & 0 & 0 & 0 & 0 & t\\
\end{array}\right],
\end{equation*}
Also note that we have many entries $r_{ij} \neq 0$ even when the $i$-th and the $j$-th weight matrices are HR orthogonal such as for cases $i=6,j=8$ and $i=5,j=8$ etc. 
\end{example}

\section{Construction of Block Orthogonal STBCs}
\label{sec4}

Code constructions for block orthogonal STBCs with various parameters were presented in \cite{RGYZ}. It was shown via simulations that these constructions were indeed block orthogonal with the aforementioned parameters. We provide analytical proofs for the block orthogonal structure of some of these constructions which include also other well known codes such as the BHV code \cite{BHV}, the Silver code \cite{HLRVV} and the Srinath-Rajan code \cite{PaR}. We first study some basics of CUWDs and CIODs. 

\subsection{CUWDs and CIODs}
\label{cuwd_ciod}

\subsubsection{CUWDs}
\label{cuwds}

\cite{RaR} Linear STBCs can be broadly classified as unitary weight designs (UWDs) and non unitary weight designs (NUWDs). A UWD is one for which all the weight matrices are unitary and NUWDs are defined as those which are not UWDs. Clifford unitary weight designs (CUWDs) are a proper subclass of UWDs whose weight matrices satisfy certain sufficient conditions for $g$-group ML decodability. To state those sufficient conditions, let us list down the weight matrices of a CUWD in the form of an array as shown in Table \ref{cuwd_table}. 

\begin{table}[ht]
\caption{Structure of CUWDs}
\centering
\begin{tabular}{cccc}
\hline 
$\textbf{A}_{1}$ & $\textbf{A}_{\lambda + 1}$ & $\cdots$ & $\textbf{A}_{\left( g-1\right) \lambda + 1}$\\
$\textbf{A}_{2}$ & $\textbf{A}_{\lambda + 2}$ & $\cdots$ & $\textbf{A}_{\left( g-1\right) \lambda + 2}$\\
$\vdots$ & $\vdots$ & $\ddots$ & $\vdots$\\
$\textbf{A}_{\lambda}$ & $\textbf{A}_{2\lambda}$ & $\cdots$ & $\textbf{A}_{K}$\\
\hline
\end{tabular}
\label{cuwd_table}
\end{table}

All the weight matrices in one column belong to one group. The weight matrices of CUWDs satisfy the following sufficient conditions for $g$-group ML decodability.

\begin{itemize}
\item $\textbf{A}_{1} = \textbf{I}$.
\item All the matrices in the first row except $\textbf{A}_{1}$ should square to $−\textbf{I}$ and should pair-wise anti-commute among themselves.
\item The unitary matrix in the $i$-th row and the $j$-th column is equal to $\textbf{A}_{i} \textbf{A}_{\left( j−1\right) λ+1}$.
\end{itemize}

The CUWD matrix representation for these matrices for a system with $2^{a}$ transmit antennas are given below \cite{KaR1}. Let 
\begin{equation*}
\sigma_{1} = \left[\begin{array}{cc}
0 & 1\\
-1 & 0\\
\end{array}\right], ~~ 
\sigma_{2} = \left[\begin{array}{cc}
0 & j\\
j & 0\\
\end{array}\right], ~~
\sigma_{3} = \left[\begin{array}{cc}
1 & 0\\
0 & -1\\
\end{array}\right].
\end{equation*}

The representations of the Clifford generators are given by:
\begin{equation*}
R\left( \gamma_{1}\right) = \pm j \sigma_{3}^{\otimes^{a}},
\end{equation*}
\begin{equation*}
R\left( \gamma_{2k}\right) = \textbf{I}_{2}^{\otimes^{a-k}}\bigotimes \sigma_{1} \bigotimes \sigma_{3}^{\otimes^{k-1}},
\end{equation*}
\begin{equation*}
R\left( \gamma_{2k+1}\right) = \textbf{I}_{2}^{\otimes^{a-k}}\bigotimes \sigma_{2} \bigotimes \sigma_{3}^{\otimes^{k-1}},
\end{equation*}
\begin{equation*}
R\left( \gamma_{0}\right) = \textbf{I}_{2^{a}},
\end{equation*}
where $k = 1, ..., a$.
The weight matrices of the CUWD for a rate-1, four group decodable STBC can be derived as follows. Let $\alpha_{i} = jR\left( \gamma_{2i}\right) R\left( \gamma_{2i+1}\right) $ for $i = 1, 2, ..., a-1$. Let $ \lambda = 2^{a-1}$. The weight matrices are now given by  
\begin{equation}
\label{cuwd_mat}
\begin{aligned}
\textbf{A}_{ \lambda + 1} = R\left(1\right)  ,\\
\textbf{A}_{ 2\lambda + 1} = R\left( \gamma_{2a+1}\right) ,\\
\textbf{A}_{3 \lambda + 1} = R\left( \gamma_{2a}\right) ,\\
\textbf{A}_{j \lambda + k} = \textbf{A}_{k} \textbf{A}_{j \lambda + 1}, \\
\textbf{A}_{k} = \prod_{i=1}^{a-1} \alpha_{i}^{k_{i}}
\end{aligned}
\end{equation}
for $j = 1,2,3$, $k = 1,.. \lambda$ and where $\left( k_{1}, k_{2}, ..., k_{a-1}\right)  $ is the binary representation of $k-1$. 

\subsubsection{CIODs}
\label{ciods}

Coordinate interleaved orthogonal designs (CIODs) were introduced in \cite{KhR}. 
\begin{definition}
\label{ciod_def}
A CIOD for a system with $2^{a}$ transmit antennas in variables $x_{i}$, $i = 1,...,K-1$, $K$ even, is a $2^{a} \times 2^{a}$ matrix $S\left( x_{0}, ..., x_{K-1}\right) $, such that
\begin{equation}
\label{ciod_mats}
S = \left[\begin{array}{cc}
\Theta_{1} \left(\tilde{x}_{0}, ..., \tilde{x}_{\frac{K}{2}-1}\right)  & \textbf{0}\\
\textbf{0} & \Theta_{2} \left(\tilde{x}_{\frac{K}{2}}, ..., \tilde{x}_{K-1}\right)\\
\end{array}\right],
\end{equation}
where $\Theta_{1} \left(\tilde{x}_{0}, ..., \tilde{x}_{\frac{K}{2}-1}\right)$ and $\Theta_{2} \left(\tilde{x}_{\frac{K}{2}}, ..., \tilde{x}_{K-1}\right)$ are complex orthogonal designs of size $2^{a-1} \times 2^{a-1}$ and $\tilde{x}_{i} = x_{iI} + jx_{\left( i + K/2\right) mod K}$. 
\end{definition}

\subsection{BOSTBCs from CUWDs}
\label{bostc_cuwd_ciod}

%We have seen in example \ref{golden_code} that we can have $\left(i, j\right)$-th element of the $\textbf{R}$ matrix as non zero even with the $\left(i, j\right)$-th columns of the equivalent channel matrix are orthogonal i.e., when the $i$-th and the $j$-th weight matrices are Hurwitz-Radon orthogonal. Hence, a correct ordering is required to exploit the HR orthogonality of matrices to obtain as many zeros as possible in the $\textbf{R}$ matrix. 
We now show that STBCs obtained as a sum of rate-1, four group decodable CUWDs exhibit the block orthogonal structure with parameters $\left(2, 4, \lambda\right)$. 

\begin{lemma}
\label{bostc_cuwd_const}
\textit{Construction I:} Let $ \textbf{X}_{1}\left( s_1, s_2, ..., s_{4\lambda}\right) $ be a rate-1, four group decodable STBC obtained from CUWD \cite{RaR} with weight matrices $ \left\lbrace \textbf{A}_{1}, \textbf{A}_{2}, ..., \textbf{A}_{4 \lambda}\right\rbrace $. Let $ \textbf{M}$ be an $n_t \times n_t$ matrix such that the set of weight matrices $ \left\lbrace \textbf{A}_{1}, \textbf{A}_{2}, ..., \textbf{A}_{4 \lambda}, \textbf{M}\textbf{A}_{1}, \textbf{M}\textbf{A}_{2},..., \textbf{M} \textbf{A}_{4 \lambda}\right\rbrace $ yield a full rank $ \textbf{R}$ matrix. Then the STBC given by
\begin{align*}
%\begin{equation*}
%\begin{aligned}
\textbf{X}\left( s_1, s_2, ..., s_{8 \lambda}\right)  &= \textbf{X}_{1}\left( s_1, s_2, ..., s_{4\lambda}\right) \\
&\quad  + \textbf{M}.\textbf{X}_{1}\left( s_{4 \lambda + 1}, s_{4 \lambda + 2 }, ..., s_{8\lambda}\right),
%\end{aligned}
%\end{equation*}
\end{align*}
will exhibit a block orthogonal structure with parameters $\left( 2, 4, \lambda\right) $.
\end{lemma}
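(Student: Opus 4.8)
The plan is to reduce the statement to the $\left(2,k,\gamma\right)$ criterion of Lemma \ref{bostc_lemma3} with $k=4$ and $\gamma=\lambda$. First I would read off the weight matrices of $\textbf{X}$: expanding $\textbf{M}\cdot\textbf{X}_{1}\left(s_{4\lambda+1},\ldots,s_{8\lambda}\right)=\sum_{k}s_{4\lambda+k}\textbf{M}\textbf{A}_{k}$ shows that the first $4\lambda$ weight matrices of $\textbf{X}$ are $\left\lbrace\textbf{A}_{1},\ldots,\textbf{A}_{4\lambda}\right\rbrace$ and the last $4\lambda$ are $\left\lbrace\textbf{M}\textbf{A}_{1},\ldots,\textbf{M}\textbf{A}_{4\lambda}\right\rbrace$. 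I would then set $\textbf{B}_{i}=\textbf{M}\textbf{A}_{i}$ and check the four hypotheses of Lemma \ref{bostc_lemma3} in turn, with the partition into groups inherited from the $4\lambda=4\cdot\lambda$ columns of Table \ref{cuwd_table}.

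The first three conditions are quick. Condition (i) holds because $\textbf{X}_{1}$ is a rate-1 four-group decodable CUWD, so its weight matrices split into $k=4$ groups of $\gamma=\lambda$ matrices each with $\textbf{A}_{i}\textbf{A}_{j}^{H}+\textbf{A}_{j}\textbf{A}_{i}^{H}=\textbf{0}$ across groups. Condition (ii) follows by conjugation: for $\textbf{A}_{i},\textbf{A}_{j}$ in different groups, $\left(\textbf{M}\textbf{A}_{i}\right)\left(\textbf{M}\textbf{A}_{j}\right)^{H}+\left(\textbf{M}\textbf{A}_{j}\right)\left(\textbf{M}\textbf{A}_{i}\right)^{H}=\textbf{M}\left(\textbf{A}_{i}\textbf{A}_{j}^{H}+\textbf{A}_{j}\textbf{A}_{i}^{H}\right)\textbf{M}^{H}=\textbf{0}$, so the $\textbf{B}_{i}$ are four-group decodable with the \emph{same} partition. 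Condition (iii), full rank of the resulting $\textbf{R}$ matrix, is exactly the hypothesis placed on $\textbf{M}$.

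The remaining condition (iv), that $\textbf{E}^{H}\textbf{E}$ be block diagonal with four $\lambda\times\lambda$ blocks, is where the real work lies, and I expect it to be the main obstacle. I would attack it through the Gram identity $\textbf{R}^{T}\textbf{R}=\textbf{H}_{eq}^{T}\textbf{H}_{eq}$. Writing this in block form, with $\textbf{G}_{AA},\textbf{G}_{AB},\textbf{G}_{BB}$ the Gram blocks of the $\textbf{A}$- and $\textbf{B}$-columns, the $\left(1,1\right)$ block gives $\textbf{R}_{1}^{T}\textbf{R}_{1}=\textbf{G}_{AA}$ and the $\left(1,2\right)$ block gives $\textbf{R}_{1}^{T}\textbf{E}=\textbf{G}_{AB}$; since $\textbf{R}_{1}$ is invertible by full rank, this yields $\textbf{E}^{H}\textbf{E}=\textbf{G}_{AB}^{T}\textbf{G}_{AA}^{-1}\textbf{G}_{AB}$ (all matrices are real here). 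Using the real-representation relation between $\widetilde{vec\left(\centerdot\right)}$ columns and traces, the entries are $\textbf{G}_{AA}\left[i,j\right]=\mathrm{Re}\{\mathrm{tr}(\textbf{A}_{i}^{H}\textbf{P}\textbf{A}_{j})\}$ and $\textbf{G}_{AB}\left[i,j\right]=\mathrm{Re}\{\mathrm{tr}(\textbf{A}_{i}^{H}\textbf{P}\textbf{M}\textbf{A}_{j})\}$ with $\textbf{P}=\textbf{H}^{H}\textbf{H}$.

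A useful first simplification is that every group is of the form $\left\lbrace\textbf{A}_{k}\textbf{G}_{m-1}:k=1,\ldots,\lambda\right\rbrace$ with $\textbf{G}_{m-1}=\textbf{A}_{\left(m-1\right)\lambda+1}$ a \emph{unitary} Clifford generator; cyclicity of the trace together with $\textbf{G}_{m-1}\textbf{G}_{m-1}^{H}=\textbf{I}$ then collapses all within-group Gram blocks to one common matrix $\textbf{W}$, so that $\textbf{G}_{AA}=\textbf{I}_{4}\otimes\textbf{W}$ and $\textbf{G}_{AA}^{-1}=\textbf{I}_{4}\otimes\textbf{W}^{-1}$. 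The hard part is then to show that for $n\neq n'$ the off-diagonal block $\sum_{m}\left(\textbf{G}_{AB}^{(m,n)}\right)^{T}\textbf{W}^{-1}\textbf{G}_{AB}^{(m,n')}$ vanishes for \emph{every} channel $\textbf{P}$ and every admissible $\textbf{M}$ — note that $\textbf{G}_{AB}$ itself is generically \emph{not} block diagonal, so the cancellation must come from the sum over $m$. I would resolve this using the Clifford structure of the CUWD: the within-group generators $\alpha_{i}$ commute with the group generators $\textbf{G}_{m-1}$ (each $\alpha_{i}$ is, up to the scalar $j$, an even product of Clifford generators whose indices are disjoint from those of the three group generators $R\left(\gamma_{1}\right),R\left(\gamma_{2a}\right),R\left(\gamma_{2a+1}\right)$), while the group generators pairwise anti-commute. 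Feeding these (anti-)commutation relations into the trace expressions forces the cross-group terms in the sum over $m$ to vanish, leaving $\textbf{E}^{H}\textbf{E}$ block diagonal. Verifying this cancellation is the one genuinely computational step; once it is in place, Lemma \ref{bostc_lemma3} immediately gives the $\left(2,4,\lambda\right)$ block orthogonal structure.
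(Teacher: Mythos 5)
Your reduction to Lemma \ref{bostc_lemma3} with $k=4$, $\gamma=\lambda$ is exactly the paper's strategy (Appendix \ref{app_r_mat_struct_const_1}, Proposition \ref{r2_struct_prop_const_1}, verifies precisely those hypotheses), and your handling of conditions (i)--(iii) — in particular the conjugation identity $\left(\textbf{M}\textbf{A}_{i}\right)\left(\textbf{M}\textbf{A}_{j}\right)^{H}+\left(\textbf{M}\textbf{A}_{j}\right)\left(\textbf{M}\textbf{A}_{i}\right)^{H}=\textbf{M}\left(\textbf{A}_{i}\textbf{A}_{j}^{H}+\textbf{A}_{j}\textbf{A}_{i}^{H}\right)\textbf{M}^{H}=\textbf{0}$ — matches the paper verbatim. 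Your Gram-matrix setup for condition (iv) is also sound and arguably cleaner than the paper's: $\textbf{E}^{T}\textbf{E}=\textbf{G}_{AB}^{T}\textbf{G}_{AA}^{-1}\textbf{G}_{AB}$ is correct, and $\textbf{G}_{AA}=\textbf{I}_{4}\otimes\textbf{W}$ follows from cyclicity of the trace and unitarity of the column generators, which is the Gram-matrix counterpart of the paper's Proposition \ref{r1_struct_prop_const_1}.

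The gap is that condition (iv) itself is never proved: your proposal stops exactly where the paper's proof begins its real work, and the mechanism you name is not sufficient to finish it. Commutation of the $\alpha_{i}$ with the group generators plus pairwise anti-commutation of the group generators does not, by itself, make the cross-group blocks $\sum_{m}\bigl(\textbf{G}_{AB}^{(m,n)}\bigr)^{T}\textbf{W}^{-1}\textbf{G}_{AB}^{(m,n')}$ vanish; the cancellation is not term-by-term in $m$ but comes from a pairing of terms. Already in the smallest case $a=1$, $\lambda=1$ (the BHV code) the $m=1$ and $m=2$ contributions cancel each other, as do the $m=3$ and $m=4$ ones, and for $\lambda>1$ the pairing acts on triples $(m,p,q)$ of group and within-group indices. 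What drives this pairing — and what your sketch omits — is the elementary abelian $2$-group structure of the within-group matrices, $\textbf{A}_{m}\textbf{A}_{n}=\pm\textbf{A}_{f^{-1}\left(f\left(m\right)\oplus f\left(n\right)\right)}$ with indices read as binary strings under XOR, together with the fact that the relevant expansion coefficients (the entries of $\textbf{W}^{-1}$, or the $a_{mn}$ in the paper) depend only on $f\left(m\right)\oplus f\left(n\right)$. This is exactly the content of the paper's Propositions \ref{e_struct_prop_const_1} and \ref{r2_struct_prop_const_1}: an induction over the Gram--Schmidt recursion pins down the explicit block pattern of $\textbf{E}$ in \eqref{e_struct_eq} (including the permutation $\textbf{P}$ and the identity $\textbf{A}_{\lambda-k+1}\textbf{A}_{k}=\textbf{A}_{\lambda}$), and an XOR bookkeeping argument then shows each $\textbf{E}_{i}^{T}\textbf{E}_{j}$ is symmetric with constant diagonal, which is what forces $\textbf{E}^{T}\textbf{E}$ to be block diagonal. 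Until you exhibit these bijections (or an equivalent computation) inside your trace expressions, the central claim of the lemma remains unestablished.
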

\begin{proof}
Proof is given in Appendix \ref{app_r_mat_struct_const_1}.
\end{proof}

\begin{example}
\label{bostc_cuwd_ex}
Let us consider the BHV code given by:
\begin{equation*}
\textbf{X} = \textbf{X}_{1}\left(s_{1}, s_{2}\right) + \textbf{T}\textbf{X}_{1}\left(z_{1}, z_{2}\right),
\end{equation*}
where $\textbf{X}_{1}$ and $\textbf{X}_{1}$ take the Alamouti structure, and
$\textbf{X}_{1}\left(s_{1}, s_{2}\right) = \left[\begin{array}{rr}
s_{1} & -s_{2}^{*}\\
s_{2} & s_{1}^{*}\\
\end{array}\right],$
$\textbf{T} = \left[\begin{array}{cc}
1 & 0\\
0 & -1\\
\end{array}\right]$ 
and $\left[ z_{1}, z_{2}\right]^{T} = \textbf{U}\left[ s_{3}, s_{4}\right]^{T},$ where $\textbf{U}$ is a unitary matrix chosen to maximize the minimum determinant. In this case, as per the above construction, $ \textbf{M} = \textbf{T} \textbf{U}$. Hence, the BHV code is a BOSTBC with parameters $ \left( 2,4,1\right) $.
\end{example}

\subsection{BOSTBCs from Cyclic Division / Crossed Product Algebras}
\label{bostc_cda_cpa}
In this section, we show the block orthogonality property of two constructions from either cyclic division algebras or crossed product algebras over the field $ \mathbb{Q}\left( i\right) $. 

\begin{lemma}
\label{bostc_const_2}
\textit{Construction II:} Let $\textbf{X}$ be an STBC with weight matrices $\left\lbrace \textbf{A}_{1}, ... , \textbf{A}_{K}\right\rbrace $ and $\left\lbrace \textbf{B}_{1}, ... , \textbf{B}_{K}\right\rbrace $ for the variables $\left[ x_{1I} ~ ... ~ x_{KI}\right] $ and $\left[ x_{1Q} ~ ... ~ x_{KQ}\right] $ respectively such that $\textbf{B}_{i} = j\textbf{A}_{i}$ for $1 \leq i \leq K$. Let the weight matrices be chosen such that the $ \textbf{R}$ matrix has full rank. Then the code $\textbf{X}$ exhibits the block orthogonal property with parameters $\left( K, 2, 1\right) $ if we take the ordering of weight matrices as $\left\lbrace \textbf{A}_{1}, \textbf{B}_{1},..., \textbf{A}_{K}, \textbf{B}_{K}\right\rbrace $. 
\end{lemma}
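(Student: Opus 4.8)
The plan is to reduce everything to the single algebraic relation $\textbf{B}_i = j\textbf{A}_i$ and show that, in the real equivalent channel, it forces the column of $\textbf{H}_{eq}$ coming from $\textbf{B}_i$ to be a fixed orthogonal image of the column coming from $\textbf{A}_i$. The block-orthogonal structure $\left(K,2,1\right)$ then reduces to showing that, for every $i$, the $\left(2i-1,2i\right)$ entry $\langle\textbf{q}_{2i-1},\textbf{h}_{2i}\rangle$ of $\textbf{R}$ vanishes, so that each $2\times2$ diagonal block $\textbf{R}_i$ is itself diagonal (two $1\times1$ blocks), which is exactly the form \eqref{block_ortho_r_mat} with those parameters.

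First I would push the relation $\textbf{B}_i=j\textbf{A}_i$ through the $\check{\left(\centerdot\right)}$ and $\tilde{\left(\centerdot\right)}$ operators. Writing $\textbf{J}_0\triangleq\check{j}$ (the $2\times2$ skew-symmetric ``multiplication by $j$''), the definition of $\tilde{\left(\centerdot\right)}$ gives $\widetilde{vec\left(\textbf{B}_i\right)}=\left(\textbf{I}_{n_t^2}\otimes\textbf{J}_0\right)\widetilde{vec\left(\textbf{A}_i\right)}$. Because scalar complex multiplication is commutative, each $2\times2$ block $\check{h}_{kl}$ of $\check{\textbf{H}}$ commutes with $\textbf{J}_0$, hence $\check{\textbf{H}}\left(\textbf{I}_{n_t}\otimes\textbf{J}_0\right)=\left(\textbf{I}_{n_r}\otimes\textbf{J}_0\right)\check{\textbf{H}}$. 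Combining this with $\textbf{H}_{eq}=\left(\textbf{I}_{n_t}\otimes\check{\textbf{H}}\right)\textbf{G}$ and the interleaved ordering $\left\lbrace\textbf{A}_1,\textbf{B}_1,\dots,\textbf{A}_K,\textbf{B}_K\right\rbrace$, in which $\textbf{h}_{2i-1}$ and $\textbf{h}_{2i}$ are the columns associated with $\textbf{A}_i$ and $\textbf{B}_i$, I obtain the key identity $\textbf{h}_{2i}=\textbf{J}'\textbf{h}_{2i-1}$, where $\textbf{J}'\triangleq\textbf{I}_{n_tn_r}\otimes\textbf{J}_0$ is a single matrix independent of $i$ that is orthogonal and skew-symmetric, satisfying $\textbf{J}'^{T}\textbf{J}'=\textbf{I}$ and $\textbf{J}'^{2}=-\textbf{I}$.

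Next I would feed this identity into the Gram--Schmidt recursion of \eqref{q_mat}--\eqref{r_mat_def} and prove, by induction on $i$, the two coupled claims (i) $\textbf{q}_{2i}=\textbf{J}'\textbf{q}_{2i-1}$ and (ii) $V_{2i}\triangleq\mathrm{span}\left\lbrace\textbf{q}_1,\dots,\textbf{q}_{2i}\right\rbrace$ is $\textbf{J}'$-invariant. The base case follows since $\textbf{h}_2=\textbf{J}'\textbf{h}_1$ and $\langle\textbf{q}_1,\textbf{J}'\textbf{q}_1\rangle=0$ by skew-symmetry, giving $\textbf{q}_2=\textbf{J}'\textbf{q}_1$ and $V_2$ invariant. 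For the step, I would write $\textbf{h}_{2i-1}=\textbf{p}+\|\textbf{r}_{2i-1}\|\textbf{q}_{2i-1}$ with $\textbf{p}\in V_{2i-2}$; applying $\textbf{J}'$ and using invariance of $V_{2i-2}$ together with the two orthogonality facts $\textbf{J}'\textbf{q}_{2i-1}\perp V_{2i-2}$ (from $\langle\textbf{J}'\textbf{q}_{2i-1},\textbf{v}\rangle=-\langle\textbf{q}_{2i-1},\textbf{J}'\textbf{v}\rangle$ and invariance) and $\textbf{J}'\textbf{q}_{2i-1}\perp\textbf{q}_{2i-1}$ (skew-symmetry) shows that the projection of $\textbf{h}_{2i}=\textbf{J}'\textbf{h}_{2i-1}$ onto $V_{2i-1}$ equals $\textbf{J}'\textbf{p}$, so $\textbf{r}_{2i}=\textbf{J}'\textbf{r}_{2i-1}$ and $\textbf{q}_{2i}=\textbf{J}'\textbf{q}_{2i-1}$; invariance of $V_{2i}$ is then immediate from $\textbf{J}'^{2}=-\textbf{I}$. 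The same computation yields $\textbf{h}_{2i}\perp\textbf{q}_{2i-1}$, i.e. $R_{2i-1,2i}=\langle\textbf{q}_{2i-1},\textbf{h}_{2i}\rangle=0$, which is precisely the statement that each $2\times2$ diagonal block $\textbf{R}_i$ of $\textbf{R}$ is diagonal. Together with the full-rank hypothesis this establishes the $\left(K,2,1\right)$ block-orthogonal structure.

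I expect the main obstacle to be the bookkeeping in the second paragraph: correctly threading $\textbf{B}_i=j\textbf{A}_i$ through the $vec$, $\check{\left(\centerdot\right)}$, $\tilde{\left(\centerdot\right)}$ and Kronecker operations so that the resulting $\textbf{J}'$ is genuinely a single $i$-independent orthogonal matrix commuting with the channel action. Once the identity $\textbf{h}_{2i}=\textbf{J}'\textbf{h}_{2i-1}$ is secured, the remaining induction is routine real inner-product algebra.
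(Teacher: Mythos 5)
Your proposal is correct: the identity $\textbf{h}_{2i}=\textbf{J}'\textbf{h}_{2i-1}$ with $\textbf{J}'=\textbf{I}_{n_tn_r}\otimes\check{j}$ does follow from $\textbf{B}_i=j\textbf{A}_i$ exactly as you argue, and your induction through the Gram--Schmidt recursion (using skew-symmetry and $\textbf{J}'$-invariance of the spans) validly yields $\left\langle\textbf{q}_{2i-1},\textbf{h}_{2i}\right\rangle=0$, which is the required $\left(K,2,1\right)$ structure. However, the paper takes a different and shorter route: it lifts the whole problem to the complex domain. It forms the complex generator matrix $\textbf{G}'=\left[vec\left(\textbf{A}_1\right)\cdots vec\left(\textbf{A}_K\right)\right]$ and the complex equivalent channel $\textbf{H}'_{eq}=\left(\textbf{I}_{n_t}\otimes\textbf{H}\right)\textbf{G}'$, observes that with the interleaved ordering $\textbf{H}_{eq}=\check{\textbf{H}'_{eq}}$, performs the \emph{complex} QR decomposition $\textbf{H}'_{eq}=\textbf{Q}'\textbf{R}'$, and uses the multiplicativity of realification ($\textbf{A}=\textbf{B}\textbf{C}\Rightarrow\check{\textbf{A}}=\check{\textbf{B}}\check{\textbf{C}}$) to conclude $\textbf{R}=\check{\textbf{R}'}$; since complex QR can be taken with real diagonal entries, each diagonal $2\times2$ block of $\check{\textbf{R}'}$ is a real multiple of $\textbf{I}_2$, giving $\textbf{R}\left(2i-1,2i\right)=0$ at once. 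The two arguments rest on the same underlying fact --- $\textbf{B}_i=j\textbf{A}_i$ makes the code complex-linear, so the real channel carries a complex structure that QR respects --- but the implementations differ. The paper's version is more economical, though it silently invokes uniqueness of the QR factorization to identify the real QR factor of $\textbf{H}_{eq}$ with $\check{\textbf{R}'}$, and it needs the convention that $\textbf{R}'$ has real diagonal. Your version stays entirely in the real domain, needs no uniqueness appeal because it computes the Gram--Schmidt factors directly, and delivers extra structure for free, namely $\textbf{q}_{2i}=\textbf{J}'\textbf{q}_{2i-1}$ and $\parallel\textbf{r}_{2i}\parallel=\parallel\textbf{r}_{2i-1}\parallel$, which are precisely the real shadows of the complex QR; the price is the bookkeeping you yourself flag in the final paragraph.
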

\begin{proof}
Proof is given in Appendix \ref{app_r_mat_struct_const_2}.
\end{proof}

\begin{example}
\label{bostc_cda_ex}
Consider any STBC obtained from the Cyclic Division Algebra (CDA) \cite{SeRS} over the base field $ \mathbb{Q}\left( i\right) $. The structure of such an STBC will be 
\begin{equation*}
\textbf{X} = \left[\begin{array}{cccc}
x_{0} & \gamma \sigma\left( x_{n-1}\right) & \cdots & \gamma \sigma^{n-1}\left( x_{1}\right)\\
x_{1} & \sigma\left( x_{0}\right) & \cdots & \gamma \sigma^{n-1}\left( x_{n-2}\right)\\
\vdots & \vdots & \ddots & \vdots\\
x_{n-1} & \sigma\left( x_{n-2}\right) & \cdots &  \sigma^{n-1}\left( x_{0}\right)
\end{array}\right],
\end{equation*}
where $x_{k} = x_{kI} + jx_{kQ}$. The weight matrices of this STBC satisfy the properties of the construction above. Hence, this is a BOSTBC with parameters $\left( n, 2, 1\right) $. 
\end{example}

The next construction is a special case of the previous construction. 
\begin{lemma}
\label{bostc_const_3}
\textit{Construction III:} Let $\textbf{X}_{1}$ be a two group decodable STBC with weight matrices $\left\lbrace \textbf{A}_{1}, ... , \textbf{A}_{K}\right\rbrace $ and $\left\lbrace \textbf{B}_{1}, ... , \textbf{B}_{K}\right\rbrace $ for the variables $\left[ x_{1} ~ ... ~ x_{K}\right] $ and $\left[ x_{K+1} ~ ... ~ x_{2K}\right] $ respectively such that $\textbf{B}_{i} = j\textbf{A}_{i}$ for $i= 1,...K$. Let $ \textbf{M}$ be a matrix such that the set of weight matrices $ \left\lbrace \textbf{A}_{1}, \textbf{A}_{2}, ..., \textbf{A}_{K}, \textbf{M} \textbf{A}_{1}, \textbf{M} \textbf{A}_{2}, ..., \textbf{M} \textbf{A}_{K}\right\rbrace $ yield a full rank $ \textbf{R}$ matrix. Then the STBC given by
\begin{align*}
\textbf{X}\left( x_1, x_2, ..., x_{4K}\right)  &= \textbf{X}_{1}\left( x_1, x_2, ..., x_{2K}\right) \\
&\quad+ \textbf{M}.\textbf{X}_{1}\left( x_{2K + 1}, x_{2K + 2 }, ..., x_{4K}\right),
\end{align*}
will exhibit a block orthogonal structure with parameters $\left( 2, 2, K\right) $.
\end{lemma}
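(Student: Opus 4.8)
The plan is to realize $\textbf{X}$ as a two-block code of exactly the type covered by Lemma \ref{bostc_lemma3}, taking $k=2$ and $\gamma=K$; this mirrors the proof of Construction I (Lemma \ref{bostc_cuwd_const}) with the four-group base replaced by a two-group base. Expanding the definition,
\begin{equation*}
\textbf{X} = \sum_{i=1}^{K} x_i\textbf{A}_i + \sum_{i=1}^{K}x_{K+i}\textbf{B}_i + \sum_{i=1}^{K}x_{2K+i}\textbf{M}\textbf{A}_i + \sum_{i=1}^{K}x_{3K+i}\textbf{M}\textbf{B}_i,
\end{equation*}
so with the natural variable ordering the $4K$ weight matrices split into a first half $\left\lbrace \textbf{A}_1,\dots,\textbf{A}_K,\textbf{B}_1,\dots,\textbf{B}_K\right\rbrace$ (producing $\textbf{R}_1$) and a second half $\left\lbrace \textbf{M}\textbf{A}_1,\dots,\textbf{M}\textbf{A}_K,\textbf{M}\textbf{B}_1,\dots,\textbf{M}\textbf{B}_K\right\rbrace$ (producing $\textbf{R}_2$), with $l=2K$ in the notation of Lemma \ref{bostc_lemma3}. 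For a weight matrix $\textbf{W}$, write $\textbf{h}(\textbf{W})=\widetilde{vec\left(\textbf{H}\textbf{W}\right)}$ for its column in $\textbf{H}_{eq}$, so that $\left\langle \textbf{h}(\textbf{P}),\textbf{h}(\textbf{Q})\right\rangle = \Re\,\mathrm{tr}\!\left(\textbf{P}^{H}\textbf{H}^{H}\textbf{H}\textbf{Q}\right)$.

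First I would dispatch the three easy hypotheses of Lemma \ref{bostc_lemma3}. Condition 1 (the first half is $2$-group decodable with $K$ matrices per group, the two groups being the $\textbf{A}_i$'s and the $\textbf{B}_i$'s) is exactly the assumed two-group decodability of $\textbf{X}_1$. Condition 2 is obtained by conjugating that relation by $\textbf{M}$: for $\textbf{A}_i$ in one group and $\textbf{B}_j$ in the other,
\begin{equation*}
\left(\textbf{M}\textbf{A}_i\right)\left(\textbf{M}\textbf{B}_j\right)^{H} + \left(\textbf{M}\textbf{B}_j\right)\left(\textbf{M}\textbf{A}_i\right)^{H} = \textbf{M}\bigl(\textbf{A}_i\textbf{B}_j^{H} + \textbf{B}_j\textbf{A}_i^{H}\bigr)\textbf{M}^{H} = \textbf{0},
\end{equation*}
so $\left\lbrace \textbf{M}\textbf{A}_i\right\rbrace$ and $\left\lbrace \textbf{M}\textbf{B}_i\right\rbrace$ are the two decoding groups of the second half. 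Condition 3 is the full-rank requirement: the relations $\textbf{B}_i=j\textbf{A}_i$ and $\textbf{M}\textbf{B}_i=j\textbf{M}\textbf{A}_i$ make the full-rank hypothesis on $\left\lbrace \textbf{A}_i,\textbf{M}\textbf{A}_i\right\rbrace$ equivalent to full rank for the complete set $\left\lbrace \textbf{A}_i,\textbf{B}_i,\textbf{M}\textbf{A}_i,\textbf{M}\textbf{B}_i\right\rbrace$.

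The only substantial step, and the one I expect to be the main obstacle, is Condition 4: that $\textbf{E}^{H}\textbf{E}$ be block diagonal with two $K\times K$ blocks. Here I would exploit the complex structure coming from $\textbf{B}_i=j\textbf{A}_i$. Let $\textbf{J}=\textbf{I}_{n_{r}n_{t}}\otimes\check{j}$ be the real representation of multiplication by $j$, so that $\textbf{J}^{T}=-\textbf{J}$, $\textbf{J}^{2}=-\textbf{I}$, and $\textbf{h}(\textbf{B}_i)=\textbf{J}\,\textbf{h}(\textbf{A}_i)$, $\textbf{h}(\textbf{M}\textbf{B}_i)=\textbf{J}\,\textbf{h}(\textbf{M}\textbf{A}_i)$. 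Then the span $V$ of the first half is $\textbf{J}$-invariant, so the orthogonal projector $\textbf{P}_1$ onto $V$ commutes with $\textbf{J}$. Writing $\textbf{E}^{H}\textbf{E}=\left[\textbf{h}(\textbf{M}\textbf{A})\;\textbf{h}(\textbf{M}\textbf{B})\right]^{T}\textbf{P}_1\left[\textbf{h}(\textbf{M}\textbf{A})\;\textbf{h}(\textbf{M}\textbf{B})\right]$ and using $\textbf{P}_1\textbf{J}=\textbf{J}\textbf{P}_1$ together with $\textbf{J}^{T}=-\textbf{J}$, a short computation shows the off-diagonal ($\textbf{M}\textbf{A}$--$\textbf{M}\textbf{B}$) block is skew-symmetric, and in the complex picture it equals $-\,\Im\,\textbf{N}$ with $\textbf{N}=\boldsymbol{\Psi}^{H}\boldsymbol{\Phi}^{-1}\boldsymbol{\Psi}$, where $\boldsymbol{\Phi}=\left[\mathrm{tr}\!\left(\textbf{A}_i^{H}\textbf{H}^{H}\textbf{H}\textbf{A}_j\right)\right]$ and $\boldsymbol{\Psi}=\left[\mathrm{tr}\!\left(\textbf{A}_i^{H}\textbf{H}^{H}\textbf{H}\textbf{M}\textbf{A}_j\right)\right]$. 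The two-group decodability, through $\textbf{A}_i\textbf{A}_j^{H}=\textbf{A}_j\textbf{A}_i^{H}$ and cyclicity of the trace, forces $\boldsymbol{\Psi}$ to be complex-symmetric and $\boldsymbol{\Phi}$ to be real, whence $\Im\,\textbf{N}$ collapses to $\boldsymbol{\Psi}^{R}\boldsymbol{\Phi}^{-1}\boldsymbol{\Psi}^{I}-\boldsymbol{\Psi}^{I}\boldsymbol{\Phi}^{-1}\boldsymbol{\Psi}^{R}$. The heart of the proof is to show this commutator-type term vanishes for every channel realization; this is where the interdependence of $\boldsymbol{\Phi}$ and $\boldsymbol{\Psi}$ (both built from the same $\textbf{A}_i$, $\textbf{M}$ and $\textbf{H}$) must be used, and it is the step I would treat most carefully. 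Once Condition 4 is established, Lemma \ref{bostc_lemma3} immediately yields the $\left(2,2,K\right)$ block-orthogonal structure.
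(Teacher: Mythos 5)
Your reduction to Lemma \ref{bostc_lemma3} with $k=2$ and $\gamma=K$ is the right skeleton, and your handling of the first three conditions is correct: the first-half grouping is the hypothesis, the second-half grouping follows by conjugating the Hurwitz--Radon relation with $\textbf{M}$ (the same argument the paper uses inside Proposition \ref{r2_struct_prop_const_1}), and full rank is assumed. The problem is the fourth condition, which you yourself identify as ``the heart of the proof'' and then do not prove. Your projector argument does show that the off-diagonal block of $\textbf{E}^{T}\textbf{E}$ is skew-symmetric and equals $\pm\Im\left(\boldsymbol{\Psi}^{H}\boldsymbol{\Phi}^{-1}\boldsymbol{\Psi}\right)$, and two-group decodability with $\textbf{B}_{i}=j\textbf{A}_{i}$ does force $\boldsymbol{\Phi}$ real and $\boldsymbol{\Psi}$ complex-symmetric; but these derived facts alone cannot finish the argument. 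Take $\boldsymbol{\Phi}=\textbf{I}$, $\boldsymbol{\Psi}^{R}=\left[\begin{smallmatrix}1&0\\0&0\end{smallmatrix}\right]$, $\boldsymbol{\Psi}^{I}=\left[\begin{smallmatrix}0&1\\1&0\end{smallmatrix}\right]$: every property you established holds, yet $\boldsymbol{\Psi}^{R}\boldsymbol{\Phi}^{-1}\boldsymbol{\Psi}^{I}-\boldsymbol{\Psi}^{I}\boldsymbol{\Phi}^{-1}\boldsymbol{\Psi}^{R}\neq\textbf{0}$. So the vanishing cannot be a consequence of realness of $\boldsymbol{\Phi}$ and symmetry of $\boldsymbol{\Psi}$; it has to be extracted from the way both matrices are jointly built from $\textbf{H}$, the $\textbf{A}_{i}$ and $\textbf{M}$ --- which is exactly the step you defer and never carry out. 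Without it, conditions 1--3 only give a block-diagonal $\textbf{R}_{1}$ (i.e., fast decodability), not the block-diagonal structure of $\textbf{R}_{2}$ that constitutes block orthogonality, so the proposal as written does not establish the lemma.

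For comparison, the paper closes precisely this step by explicit computation along the Gram--Schmidt recursion rather than by abstract structural arguments. In Appendix E it first proves (Proposition \ref{e_struct_prop_const_3}, by row-by-row induction as in Proposition \ref{e_struct_prop_const_1}) that $\textbf{E}=\left[\begin{smallmatrix}\textbf{E}_{1}&-\textbf{E}_{2}\\ \textbf{E}_{2}&\textbf{E}_{1}\end{smallmatrix}\right]$, and then (Proposition \ref{r2_struct_prop_const_3}, following the argument of Proposition \ref{r2_struct_prop_const_1}) that $\textbf{E}_{1}^{T}\textbf{E}_{2}$ is symmetric: each $\textbf{q}_{m}$ is expanded back into the columns $\textbf{h}_{n}$, every inner product is converted to a trace via $\left\langle \textbf{h}_{i},\textbf{h}_{j}\right\rangle=\frac{1}{2}tr\left(\check{\textbf{H}}\check{\textbf{A}}_{i}\check{\textbf{A}}_{j}^{T}\check{\textbf{H}}^{T}\right)$, and the weight-matrix algebra (here, relations of the type $\textbf{A}_{i}\textbf{A}_{j}^{H}=\textbf{A}_{j}\textbf{A}_{i}^{H}$, playing the role of the XOR bookkeeping in Construction I) is used to pair off the terms of the $(k,l)$ and $(l,k)$ entries. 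Symmetry of $\textbf{E}_{1}^{T}\textbf{E}_{2}$ is exactly what makes your skew-symmetric off-diagonal block vanish. Supplying that computation --- or any equivalent argument that uses the realizability of $\boldsymbol{\Phi}$ and $\boldsymbol{\Psi}$ rather than only their symmetry properties --- is what your proposal is missing.
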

\begin{proof}
Proof is given in Appendix \ref{app_r_mat_struct_const_3}.
\end{proof}

\begin{example}
\label{bostc_cda_2g_ex}
Consider the golden code as given in example \ref{golden_code}. If we consider, 
\begin{equation*}
\textbf{X}_{1} = \frac{1}{\sqrt{5}} \left[\begin{array}{rr}
\alpha\left(s_{1} + s_{2} \theta \right)  & 0\\
0 & \overline{\alpha}\left(s_{1} + s_{2} \overline{\theta} \right)\\
\end{array}\right],
\end{equation*}
and $ \textbf{M}$ as 
\begin{equation*}
\textbf{M} = \left[\begin{array}{rr}
0 & j\\
1 & 0\\
\end{array}\right],
\end{equation*}
we can see that the golden code is a BOSTBC with parameters $\left(2,2,2\right) $. 
\end{example}

\subsection{BOSTBCs from CIODs}
\label{bostc_ciod}
In this section we show that the BOSTBCs that can be obtained from CIODs \cite{KhR}. 
\begin{lemma}
\label{bostc_ciod_const}
\textit{Construction IV:} Let $ \textbf{X}_{1}\left( s_1, s_2, ..., s_{K}\right) $ be a rate-1 CIOD with weight matrices $ \left\lbrace \textbf{A}_{1}, \textbf{A}_{2}, ..., \textbf{A}_{K}\right\rbrace $. Let $ \textbf{M}$ be a matrix such that the set of weight matrices $ \left\lbrace \textbf{A}_{1}, \textbf{A}_{2}, ..., \textbf{A}_{K}, \textbf{M} \textbf{A}_{1}, \textbf{M} \textbf{A}_{2}, ...,  \textbf{M} \textbf{A}_{K}\right\rbrace $ yield a full rank $ \textbf{R}$ matrix. Then the STBC given by
\begin{align*}
\textbf{X}\left( s_1, s_2, ..., s_{2K}\right)  &= \textbf{X}_{1}\left( s_1, s_2, ..., s_{K}\right) \\
&\quad + \textbf{M}\textbf{X}_{1}\left( s_{K + 1}, s_{K + 2 }, ..., s_{2K}\right),
\end{align*}
will exhibit a block orthogonal structure with parameters $\left( 2, K/2, 2\right) $.
\end{lemma}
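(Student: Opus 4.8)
The plan is to obtain the result as a direct application of Lemma~\ref{bostc_lemma3} (the $\Gamma=2$ criterion) with $l=K$, $k=K/2$ and $\gamma=2$, identifying $\left\lbrace\textbf{A}_1,\dots,\textbf{A}_K\right\rbrace$ with the weight matrices of the CIOD $\textbf{X}_1$ and $\left\lbrace\textbf{B}_1,\dots,\textbf{B}_K\right\rbrace=\left\lbrace\textbf{M}\textbf{A}_1,\dots,\textbf{M}\textbf{A}_K\right\rbrace$ with those of the twisted copy $\textbf{M}\textbf{X}_1$. First I would fix the variable ordering that exposes the pattern: the $K$ variables of $\textbf{X}_1$ first, the $K$ variables of $\textbf{M}\textbf{X}_1$ second, and within each half the two real variables of one coordinate-interleaved symbol $\tilde{x}_g$ kept adjacent. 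With this ordering $\textbf{R}_1$ and $\textbf{R}_2$ are $K\times K$ and it only remains to check the four hypotheses of Lemma~\ref{bostc_lemma3}.

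The two group-decodability hypotheses are quick. Since $\textbf{X}_1=\mathrm{diag}(\Theta_1,\Theta_2)$ with $\Theta_1,\Theta_2$ complex orthogonal designs in the interleaved variables, every $\textbf{A}_i$ is supported on a single diagonal block, and the single-complex-symbol decodability of $\Theta_1,\Theta_2$ gives $\textbf{A}_i\textbf{A}_j^H+\textbf{A}_j\textbf{A}_i^H=\textbf{0}$ whenever $\textbf{A}_i,\textbf{A}_j$ belong to different interleaved symbols; this is exactly $(K/2)$-group decodability with $\gamma=2$ per group. Conjugating this identity, $(\textbf{M}\textbf{A}_i)(\textbf{M}\textbf{A}_j)^H+(\textbf{M}\textbf{A}_j)(\textbf{M}\textbf{A}_i)^H=\textbf{M}(\textbf{A}_i\textbf{A}_j^H+\textbf{A}_j\textbf{A}_i^H)\textbf{M}^H=\textbf{0}$, so $\left\lbrace\textbf{M}\textbf{A}_i\right\rbrace$ inherits the same grouping; the full-rank hypothesis is assumed in the statement.

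The real content is the fourth hypothesis: $\textbf{E}^H\textbf{E}$ block diagonal with $K/2$ blocks of size $2\times2$. I would argue through the real Gram matrix $\textbf{V}=\textbf{H}_{eq}^T\textbf{H}_{eq}$, whose column inner products are $\langle\textbf{h}(\textbf{P}),\textbf{h}(\textbf{Q})\rangle=\mathrm{Re}\,\mathrm{tr}(\textbf{P}^H\textbf{H}^H\textbf{H}\textbf{Q})$. Partitioning $\textbf{V}=\left[\begin{smallmatrix}\textbf{V}_{AA}&\textbf{V}_{AB}\\\textbf{V}_{BA}&\textbf{V}_{BB}\end{smallmatrix}\right]$ by the two halves and comparing blocks in $\textbf{R}^T\textbf{R}=\textbf{V}$ gives $\textbf{V}_{AA}=\textbf{R}_1^T\textbf{R}_1$ and $\textbf{V}_{AB}=\textbf{R}_1^T\textbf{E}$, so that $\textbf{E}^H\textbf{E}=\textbf{V}_{BA}\textbf{V}_{AA}^{-1}\textbf{V}_{AB}$. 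Because each $\textbf{A}_i$ lives on one diagonal block, the block projectors $\textbf{P}_1,\textbf{P}_2$ with $\textbf{P}_1\textbf{P}_2=\textbf{0}$ make the trace defining $\textbf{V}_{AB}$ vanish between $\Theta_1$-type and $\Theta_2$-type variables; this already splits $\textbf{E}^H\textbf{E}$ into the two coarse type-blocks.

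The step I expect to be the main obstacle is the refinement inside a single type, from one coarse block down to $K/4$ blocks of size $2$. There $\textbf{V}_{AA}$ is block diagonal with $2\times2$ blocks (coupling only $\mathrm{Re}\,\tilde{x}_g$ with $\mathrm{Im}\,\tilde{x}_g$), hence so is $\textbf{V}_{AA}^{-1}$, yet the cross-group entries of $\textbf{V}_{AB}$ themselves do \emph{not} vanish because $\textbf{M}$ is arbitrary. The cancellation must instead be extracted from the algebra of the orthogonal-design weight matrices: their unitarity makes the diagonal weights $\mathrm{tr}(\textbf{A}_k^H\textbf{H}^H\textbf{H}\textbf{A}_k)$ equal within a type, reducing the off-diagonal entry of $\textbf{E}^H\textbf{E}$ to a multiple of $\sum_k \textbf{V}_{AB}[k,a]\,\textbf{V}_{AB}[k,b]$, and their pairwise anticommutation relations then make this sum telescope in pairs to zero whenever $a,b$ index different interleaved symbols (I have checked this cancellation explicitly for the Alamouti block). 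Carrying out this pairwise cancellation for a general component design, and treating the genuinely non-diagonal $2\times2$ blocks of $\textbf{V}_{AA}$ that arise for CIODs built from non-square orthogonal designs, is the crux; the rest is bookkeeping once Lemma~\ref{bostc_lemma3} is invoked.
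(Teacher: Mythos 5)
Your framework is the right one and is essentially the paper's: reduce the claim to the four hypotheses of Lemma~\ref{bostc_lemma3} with $l=K$, $k=K/2$, $\gamma=2$, and your verification of the two group-decodability conditions, the full-rank condition, and the coarse $\Theta_1$/$\Theta_2$ splitting of $\textbf{E}^{H}\textbf{E}$ is sound. But the attempt has a genuine gap exactly where you say it does: the refinement of $\textbf{E}^{H}\textbf{E}$ from two coarse type-blocks down to the required $2\times2$ blocks is never established. What you offer is a cancellation mechanism (``telescoping in pairs from anticommutation'') that you say you have verified only for the Alamouti component, together with an explicit admission that carrying it out ``for a general component design \ldots is the crux.'' An unverified claim plus an acknowledged open case is not a proof; moreover the premise ``their unitarity'' is loosely stated, since the embedded CIOD weight matrices are singular, not unitary --- it is the unitarity of the component-design weights \emph{within} their diagonal block that equalizes the diagonal Gram entries within a type.

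The observation that closes the gap --- and which the paper makes in the first sentence of its proof --- is that rate-1 CIODs exist only for $2\times2$ and $4\times4$ antennas, because rate-1 square complex orthogonal designs exist only for $n\le 2$. Hence $\Theta_1,\Theta_2$ are either $1\times1$ (trivial) or Alamouti; there is no ``general component design,'' and the non-square orthogonal-design case you worry about never arises. The paper exploits this reduction: the $2\times2$ CIOD case is identical in structure to Construction III, and the $4\times4$ case is handled by explicit structure propositions (equality of the diagonal blocks of $\textbf{R}_1$, an explicit sign/block pattern for $\textbf{E}$, and block diagonality of $\textbf{R}_2$), each proved by induction on rows using the trace identity $\left\langle \textbf{h}_{k}, \textbf{h}_{j}\right\rangle = \frac{1}{2} tr\left( \check{\textbf{H}} \check{\textbf{A}}_{k} \check{ \textbf{A}}_{j}^{T} \check{ \textbf{H}}^{T}\right)$ and the multiplicative relations among the weight matrices, in the pattern of Construction~I. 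Had you invoked the rate-1 restriction and written out the Alamouti computation you allude to, your Gram-matrix route ($\textbf{E}^{T}\textbf{E}=\textbf{V}_{BA}\textbf{V}_{AA}^{-1}\textbf{V}_{AB}$, with $\textbf{V}_{AA}$ a scalar multiple of the identity on each type in these two cases) would be a legitimate alternative to the paper's inductive argument; as written, the central condition of Lemma~\ref{bostc_lemma3} remains unproven.
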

\begin{proof}
Proof is given in Appendix \ref{app_r_mat_struct_const_4}.
\end{proof}

\begin{example}
\label{bostc_ciod_ex}
Consider the $2 \times 2$ code constructed by Srinath et al. in \cite{PaR} given by
\begin{equation*}
\textbf{X} = \left[\begin{array}{rr}
x_{1I} + jx_{2Q}  & e^{j\pi/4}\left( x_{3I} + jx_{4Q}\right) \\
e^{j\pi/4}\left( x_{4I} + jx_{3Q}\right) & x_{2I} + jx_{1Q}\\
\end{array}\right],
\end{equation*}
If we consider, 
\begin{equation*}
\textbf{X}_{1} = \left[\begin{array}{rr}
x_{1I} + jx_{2Q}  & 0\\
0 & x_{2I} + jx_{1Q}\\
\end{array}\right],
\end{equation*}
and $ \textbf{M}$ as 
\begin{equation*}
\textbf{M} = \left[\begin{array}{rr}
0 & e^{j\pi/4}\\
e^{j\pi/4} & 0\\
\end{array}\right],
\end{equation*}
we see that the code is a BOSTBC with parameters $\left(2,2,2\right) $. 
\end{example}

%\begin{remark}
%\label{bostc_remark}
%The conditioned variables while using the sphere decoder can also be passed through the algorithm given to order them in such a way that they can provide better average decoding complexity or help in QRDM decoding. But the structure of these variables in the HRQF matrix does not reflect the structure of these variables in the $ \textbf{R}$ matrix always. 
%\end{remark}

\section{Reduction of Decoding complexity for Block Orthogonal Codes}
\label{sec5}

In this section we describe how we can achieve decoding complexity reduction for BOSTBCs. Also we show how the block orthogonal structure helps in the reduction of the Euclidean Metric (EM) calculations and the sorting operations for a sphere decoder using a depth first search algorithm. We also briefly present the implications of the block orthogonal structure for QRDM decoders as discussed in \cite{RGYZ}. 

\subsection{ML decoding complexity reduction}
\label{ml_dec_cplxity_redn}
The sphere decoder under consideration in this section will be the depth first search algorithm based decoder with Schnorr-Euchner enumeration and pruning as discussed in \cite{SiB}. We first consider the case of $\Gamma = 2$ Block Orthogonal Code. 

\subsubsection{$\Gamma = 2$}
\label{ml_gamma_2}
Consider a BOSTBC with parameters $\left( 2, k, \gamma\right) $. The structure of the $\textbf{R}$ matrix for this code is as mentioned in \eqref{block_ortho_r_mat} with two blocks $ \textbf{R}_{1}$ and $ \textbf{R}_{2}$. This code is fast sphere decodable, i.e., for a given set of values of variables in sub-blocks $ \textbf{U}_{2,j}$, $j=1,...,k$, we can decode the variables in $ \textbf{U}_{1,j}$ and $ \textbf{U}_{1,l}$, $1 \leq j < l \leq k$, independently.  The ML decoding complexity of this code will be $ O\left(M^{k \gamma + \gamma}\right) $. Due to the structure of the block orthogonal code, we can see that the variables in the blocks $ \textbf{U}_{2,j}$ and $ \textbf{U}_{2,l}$, $1 \leq j < l \leq k$, are also independent in the sense that the EM calculations and the Schnorr-Euchner enumeration based sorting operations for the variables in $ \textbf{U}_{2,j}$ are independent of the values taken by the variables in $ \textbf{U}_{2,l}$. We illustrate this point with an example.
\begin{example}
\label{bostbc_em_ind_ex}
Consider a hypothetical BOSTBC having the parameters $\left( 2, 2, 1\right) $ with variables $\left\lbrace x_{1}, x_{2}, x_{3}, x_{4} \right\rbrace $. The $ \textbf{R}$ matrix for this BOSTBC will be of the form
\begin{equation*}
\textbf{R} = \left[\begin{array}{cccc}
\textbf{t} & \textbf{0} & t & t\\
0 & \textbf{t} & t & t\\
0 & 0 & \textbf{t} & \textbf{0}\\
0 & 0 & 0 & \textbf{t}\\
\end{array}\right]
\end{equation*}
The first two levels of the search tree for the sphere decoder are shown in in Figure \ref{fig:bostbc_em_ind} with the variables assumed to be taking values from a 2-PAM constellation - A. As it can be seen from the figure, irrespective of the value taken by $x_{4}$, the edge weights (Euclidean metrics) for the variable $x_{3}$ remain the same. 
\end{example}
\begin{figure}
\centering
\includegraphics[width=6.5in,height=2.7in]{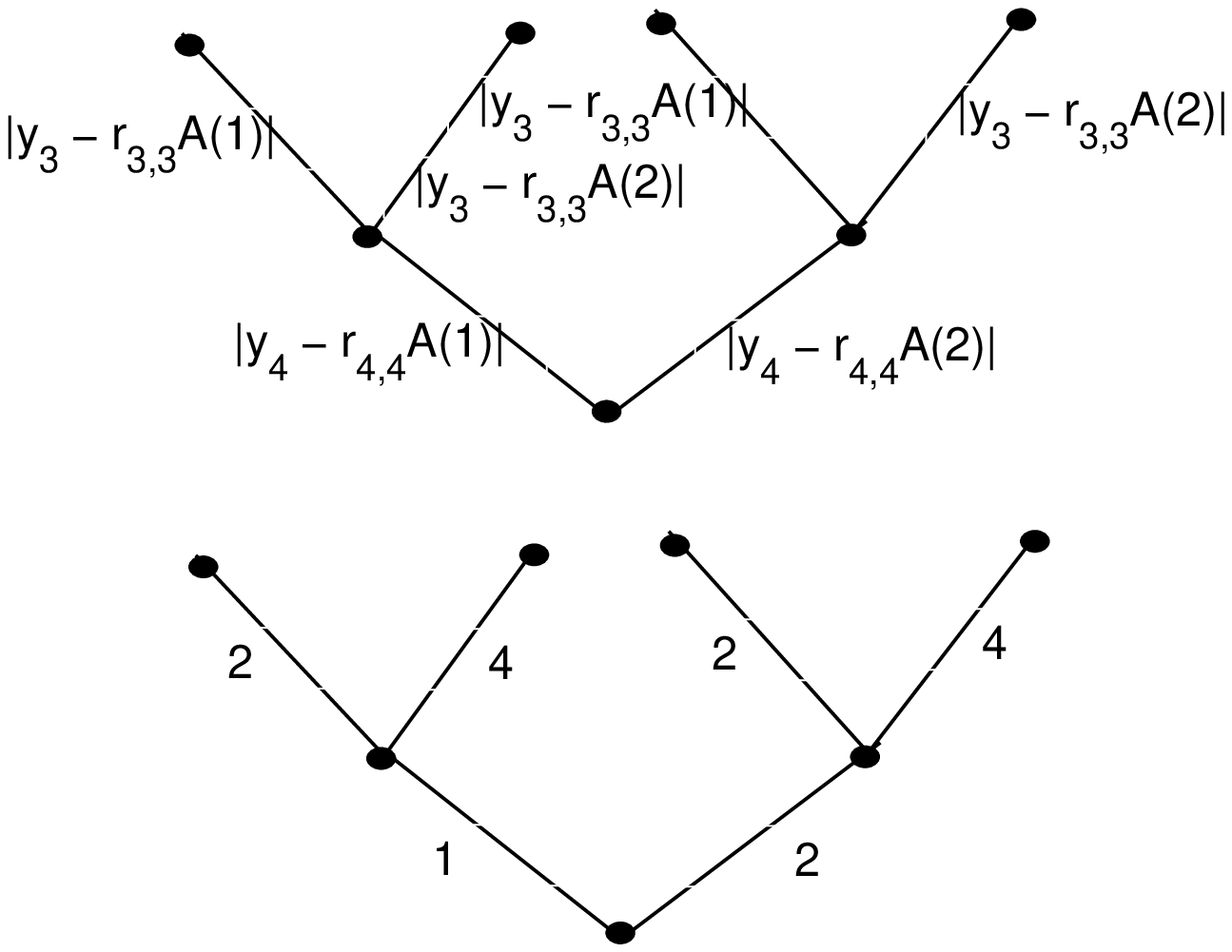}
\caption{First two levels of the sphere decoder tree for the code in example \eqref{bostbc_em_ind_ex}} 
\label{fig:bostbc_em_ind}
\end{figure}
From example \ref{bostbc_em_ind_ex} we can see that instead of calculating the EM repeatedly, we can store these values in a look up table when they are calculated for the first time and retrieve them whenever needed. This technique of avoiding repeated calculations by storing the previously calculated values is known as \textit{Memoization} \cite{CLRS}. This approach reduces the number of floating point operations (FLOPS) significantly. 

\subsubsection{$\Gamma > 2$}
\label{ml_gamma_g2}
Consider a BOSTBC with parameters $\left( \Gamma, k, \gamma\right) $. The structure of the $\textbf{R}$ matrix for this code is as mentioned in \eqref{block_ortho_r_mat}. Consider the block $ \textbf{R}_{i} $, $1 < i \leq \Gamma$ of the $ \textbf{R}$ matrix. For a given set of values for the variables in the blocks $ \textbf{R}_{m}$, $m > i$, we can see that the variables in the blocks $ \textbf{U}_{i,j}$ and $ \textbf{U}_{i,l}$, $1 \leq j < l \leq k$, are independent as seen in the case of $ \Gamma = 2$. Hence, we can use memoization here as well in order to reduce the number of EM calculations and sorting operations.  

\subsection{Complexity reduction bound and Memory requirements for depth first sphere decoder}
\label{cplxity_redn_mem_req}
We calculate the maximum possible reduction in the number of EM values calculated and the memory requirements for the look up tables in this section. First we consider the case of $ \Gamma = 2$. 

\subsubsection{$ \Gamma = 2$}
\label{bound_gamma_2}
Considering a $ \left( 2, k, \gamma\right) $ BOSTBC, we first calculate the memory requirements for storing the EM values. Let each of the variables of the STBC take values from a constellation of size $M$. The number of EM values that need to be stored for a single sub-block $ \textbf{U}_{2,j}$, $ 1 \leq j < k$, is 
\begin{align*}
Mem\left( \textbf{U}_{2,j}\right) &= M + M^{2} + ... + M^{\gamma}\\
&= \frac{M^{\gamma + 1} - M}{M - 1} = \frac{M\left( M^{ \gamma} - 1\right) }{M - 1}.
\end{align*}
These values will need to be stored for $\left( k-1\right) $ such sub-blocks. The total memory requirement for the block $ \textbf{R}_{2}$ is,
\begin{equation*}
Mem\left( \textbf{R}_{2}\right) = \left( k-1\right) \frac{M\left( M^{ \gamma} - 1\right) }{M - 1}.
\end{equation*}

We now find the maximum number of reductions possible for the EM calculations for this BOSTBC. This will occur when all the nodes are visited in the depth first search. For the block $ \textbf{R}_{2}$, the number of EM calculations for a code without the block orthogonal structure would be 
\begin{align*}
O_{STBC} &= M + M^{2} + ... + M^{k \gamma} = \frac{M\left( M^{k \gamma} - 1\right) }{M - 1}.
\end{align*}
For a BOSTBC, if we use the look up table, we would be performing the EM calculations only once per each of the sub-block. For $k$ sub-blocks, the number of EM calculations will be 
\begin{align*}
O_{BOSTBC} &= k\left(  M + M^{2} + ... + M^{ \gamma}\right) \\
&= k\frac{M\left( M^{ \gamma} - 1\right) }{M - 1}.
\end{align*}
We therefore perform only a small percentage of EM calculations if the code exhibits a block orthogonal structure. We call the ratio of the the number of EM calculated for a BOSTBC to the number of EM calculated if the STBC did not possess a block orthogonal structure as Euclidean Metric Reduction Ratio (EMRR) given by
\begin{align*}
\frac{O_{BOSTBC}}{O_{STBC}} &= \frac{k\frac{M\left( M^{ \gamma} - 1\right) }{M - 1}}{\frac{M\left( M^{k \gamma} - 1\right) }{M - 1}} = \frac{k\left( M^{ \gamma} - 1\right)}{\left( M^{k \gamma} - 1\right)}\\
&\approx \frac{k}{M^{\left( k-1\right) \gamma}},
\end{align*}
which is a decreasing function of $k$, $M$ and $ \gamma$. 

\subsubsection{$ \Gamma > 2$}
\label{bound_gamma_g2}
Considering a $ \left( \Gamma, k, \gamma\right) $ BOSTBC, we first calculate the memory requirements for storing the EM values. The memory requirement per sub-block $ \textbf{U}_{i,j}$, $ 1 \leq j < k$, of any block $ \textbf{R}_{i}$, $ 1 < j \leq \Gamma$, under consideration is the same as that of the case of the sub-block $ \textbf{U}_{2,j}$ in the $ \Gamma = 2$ case. This is so because, for a given set of values for the variables in the blocks $ \textbf{R}_{m}$, $i < m \leq \Gamma$, the memory requirement for the sub-block $ \textbf{U}_{i,j}$ can be calculated in the similar way as it was calculated for $ \textbf{U}_{2,j}$ for the $ \Gamma = 2$ case. Hence, the memory requirements for a block $ \textbf{R}_{i}$ for a given set of values for the variables in the blocks $ \textbf{R}_{m}$ is the same as that of $ \textbf{R}_{2}$ in the $ \Gamma = 2$ case. 
\begin{equation*}
Mem\left( \textbf{R}_{i}\right)_{conditional} = \left( k-1\right) \frac{M\left( M^{ \gamma} - 1\right) }{M - 1}.
\end{equation*}
We can reuse the same memory for another set of given values of the variables of $ \textbf{R}_{m}$, as the previous EM values will not be retrieved again as the depth first search algorithm does not revisit any of the previously visited nodes (i.e., any previously given set of values for the variables in the tree). Hence, we can write,
\begin{equation*}
Mem\left( \textbf{R}_{i}\right) = \left( k-1\right) \frac{M\left( M^{ \gamma} - 1\right) }{M - 1},
\end{equation*}
for $ 1 < i \leq \Gamma$. Since there are $ \Gamma - 1$ such blocks, the total memory requirement for storing the EM values will be
\begin{equation*}
Mem\left( \textbf{R}\right) = \left( \Gamma - 1\right) \left( k-1\right) \frac{M\left( M^{ \gamma} - 1\right) }{M - 1}.
\end{equation*}
We now find the maximum number of reductions possible for the EM calculations for this BOSTBC. This will occur when all the nodes are visited in the depth first search. For blocks other than $ \textbf{R}_{1}$, the number of EM calculations for a code without the block orthogonal structure would be 
\begin{align*}
O_{STBC} &= M + M^{2} + ... + M^{ \left( \Gamma - 1\right)  k \gamma}\\
&= \frac{M\left( M^{\left( \Gamma - 1\right)k \gamma} - 1\right) }{M - 1}.
\end{align*}
For a BOSTBC, if we consider the block $ \textbf{R}_{i}$ and for a given set of values for the variables in $ \textbf{R}_{m}$, $ i < m \leq \Gamma$, if we use the look up table, we would be performing the EM calculations only once per each of the sub-block. For $k$ sub-blocks, the number of EM calculations will be 
\begin{align*}
O_{BOSTBC}\left( \textbf{R}_{i}\right)_{conditional} &= k\left(  M + M^{2} + ... + M^{ \gamma}\right) \\
&= k\frac{M\left( M^{ \gamma} - 1\right) }{M - 1}.
\end{align*}
These calculations need to be repeated for all the $ M^{ \left( \Gamma - i\right) k \gamma }$ values of the variables in $ \textbf{R}_{m}$. 
\begin{align*}
O_{BOSTBC}\left( \textbf{R}_{i}\right) &= k M^{ \left( \Gamma - i\right) k \gamma }\left(  M + M^{2} + ... + M^{ \gamma}\right) \\
&= kM^{ \left( \Gamma - i\right) k \gamma }\frac{M\left( M^{ \gamma} - 1\right) }{M - 1}.
\end{align*}
The EM calculations for all the blocks is given by 
\begin{align*}
O_{BOSTBC} &= \sum_{i=2}^{\Gamma}kM^{ \left( \Gamma - i\right) k \gamma }\frac{M\left( M^{ \gamma} - 1\right) }{M - 1}\\
&= \frac{kM\left( M^{ \gamma} - 1\right)}{M-1}\frac{M^{ \left( \Gamma - 1\right) k \gamma } - 1}{M^{k \gamma } - 1}.
\end{align*}
The EMRR in this case will be
\begin{align*}
\frac{O_{BOSTBC}}{O_{STBC}} &= \frac{\frac{kM\left( M^{ \gamma} - 1\right)}{M-1}\frac{M^{ \left( \Gamma - 1\right) k \gamma } - 1}{M^{k \gamma } - 1}}{\frac{M\left( M^{\left( \Gamma - 1\right)k \gamma} - 1\right) }{M - 1}}\\
&= \frac{k\left( M^{ \gamma} - 1\right)}{\left( M^{k \gamma} - 1\right)} \approx \frac{k}{M^{\left( k-1\right) \gamma}}.
\end{align*}
We can see that the ratio of the reduction of operations is independent of $ \Gamma$ and dependent only on $k$ and $ \gamma$.

\subsection{QRDM decoding complexity reduction \cite{RGYZ}}
\label{qrdm_decoding_cplxity_redn}
In this section we review the simplified QRDM decoding method which exploits the block orthogonal structure of a code as presented in \cite{RGYZ}. The traditional QRDM decoder is a breadth first search decoder in which $M_{c}$ surviving paths with the smallest Euclidean metrics are picked at each stage and the rest of the paths are discarded. If $M_{c} = M^{\left( \Gamma - 1\right) k \gamma}$ for a block orthogonal code with parameters $\left( \Gamma, k, \gamma\right) $, then the QRDM decoder gives ML performance. The simplified QRDM decoder utilizes the block orthogonal structure of the code to find virtual paths between nodes, which reduces the number of surviving paths to effectively $M_{c_{eq}}$, to reduce the number of Euclidean metric calculations. For details of how this is achieved, refer to \cite{RGYZ}. The maximum reduction in decoding complexity bound for a QRDM decoder is given by
\begin{equation*}
\frac{O_{BOSTBC}}{O_{STBC}} = \frac{M^{\gamma}}{k\left( M^{\gamma} - 1\right) }.
\end{equation*}

\section{Simulation Results and Discussion}
\label{sec6}
In all the simulation scenarios in this section, we consider quasi-static Rayleigh flat fading channels and the channel state information (CSI) is known at the receiver perfectly. Any STBC which does not have a block orthogonal property is assumed to be a fast decodable STBC which is conditionally $k$ group decodable with $ \gamma$ symbols per group, but not possessing the block diagonal structure for the blocks $ \textbf{R}_{2}, ..., \textbf{R}_{\Gamma}$. 

\subsection{Sphere decoding using depth first search}
\label{ml_simulations}
We first plot the EMRR for BOSTBCs with different parameters against the SNR. Figures \ref{fig:stbc_241_em} and \ref{fig:stbc_222_em} show the plot of $O_{BOSTBC} / O_{STBC}$ vs SNR for a $ \left( 2,4,1\right) $ BOSTBC (examples - Silver code, BHV code) with the symbols being drawn from 4-QAM, 16-QAM and 64-QAM. We can clearly see that the reduction in the EMRR with the increasing size of signal constellation as explained in section \ref{cplxity_redn_mem_req}. It can also be seen that a larger value of $k$ gives a lower EMRR if we keep the product $k \gamma$ constant. Figure \ref{fig:stbc_242_em} shows the plot of $O_{BOSTBC} / O_{STBC}$ vs SNR for a $ \left( 2,4,2\right) $ BOSTBC (examples - $4 \times 2$ code from Pavan et al \cite{PaR}) with the symbols being drawn from 4-QAM and 16-QAM. Notice that the $ \left( 2,4,2\right) $ BOSTBC offers a lower EMRR as compared to the $\left( 2,4,1\right) $ BOSTBC due to the higher value of $ \gamma$, as explained in section \ref{cplxity_redn_mem_req}.

We now compare the total number of FLOPS performed by the sphere decoder for a BOSTBC against that of an STBC without a block orthogonal structure for various SNRs. Figures \ref{fig:stbc_241_flops}, \ref{fig:stbc_222_flops}, \ref{fig:stbc_242_flops} show the plot of number of FLOPS vs SNR for a $ \left( 2,4,1\right) $ BOSTBC, a $ \left( 2,2,2\right) $ BOSTBC and a  $ \left( 2,4,2\right) $ BOSTBC  respectively with the symbols being drawn from 4-QAM, 16-QAM and 64-QAM for the first two figures and from 4-QAM and 16-QAM for the last one. We can see that the BOSTBCs offer around 30\% reduction in the number of FLOPS for the  $ \left( 2,4,1\right) $ and  $ \left( 2,4,2\right) $  BOSTBCs and around  15\% for the $ \left( 2,2,2\right) $ BOSTBC at low SNRs. 

\begin{figure}
\centering
\includegraphics[width=6.5in,height=2.7in]{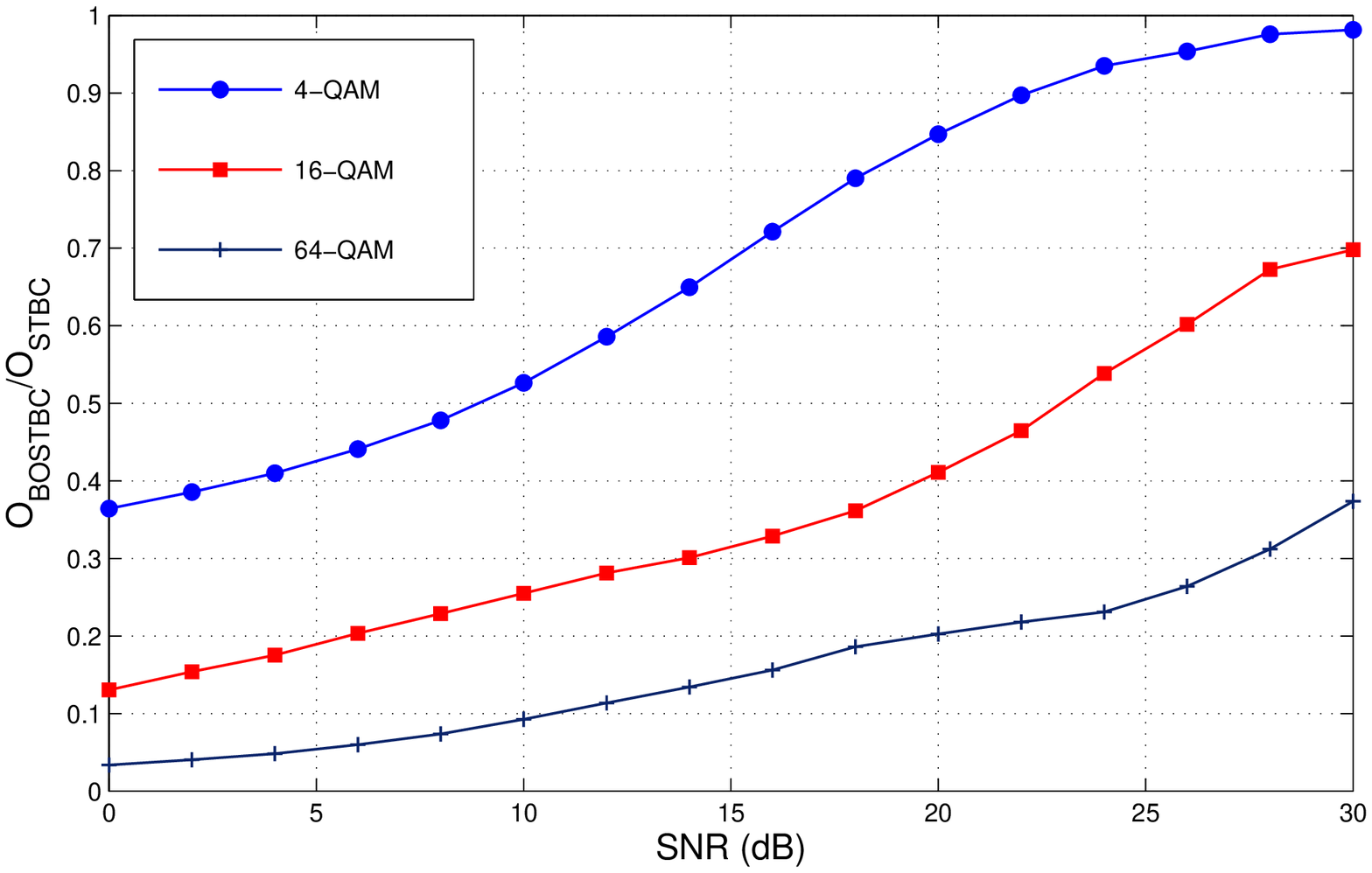}
\caption{The number of Euclidean metrics calculated ratio $ O_{BOSTBC} / O_{STBC} $ for a BOSTBC with parameters $\left( 2,4,1\right) $} 
\label{fig:stbc_241_em}
\end{figure}

\begin{figure}
\centering
\includegraphics[width=6.5in,height=2.7in]{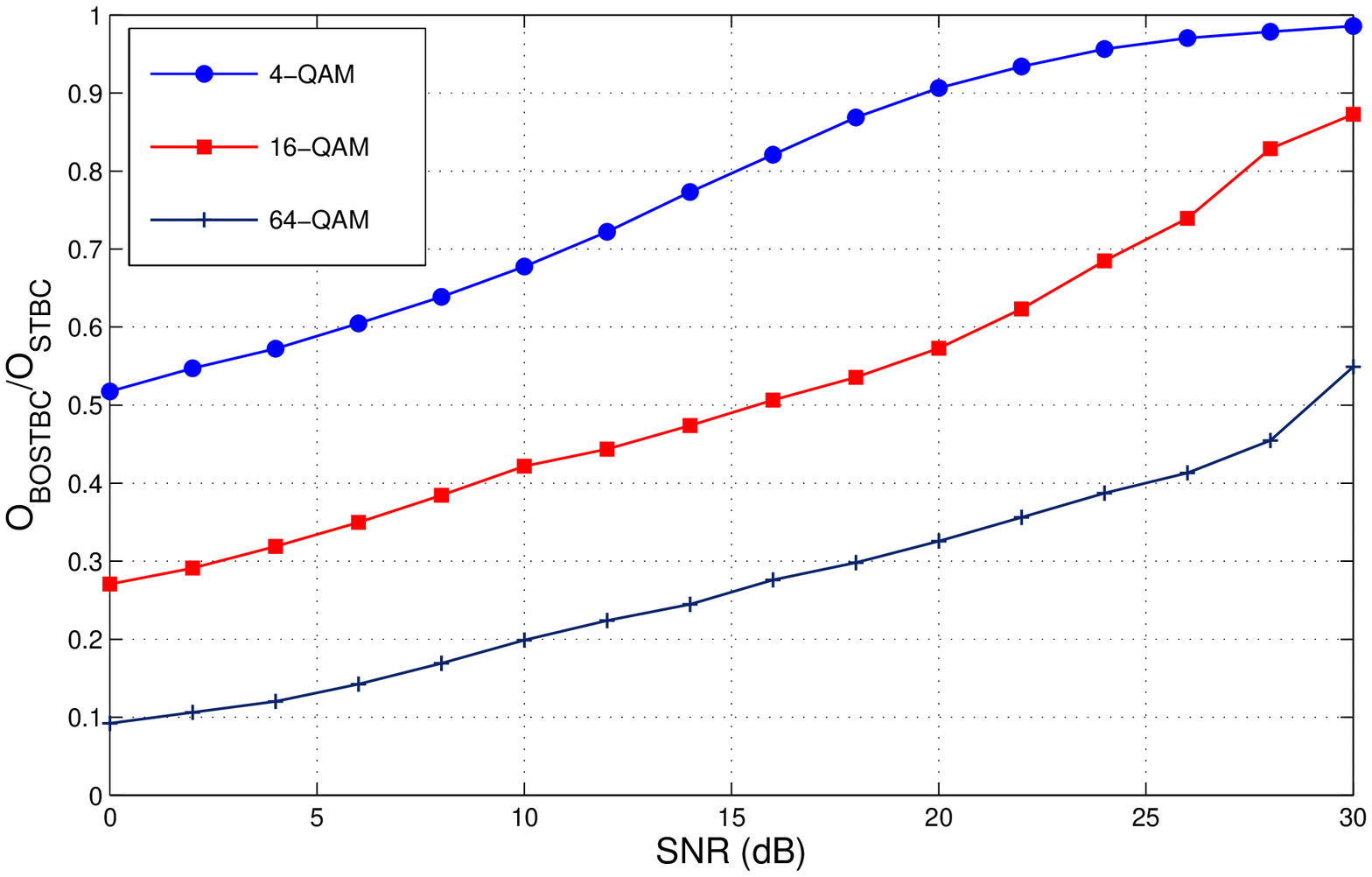}
\caption{The number of Euclidean metrics calculated ratio $ O_{BOSTBC} / O_{STBC} $ for a BOSTBC with parameters $\left( 2,2,2\right) $} 
\label{fig:stbc_222_em}
\end{figure}

\begin{figure}
\centering
\includegraphics[width=6.5in,height=2.7in]{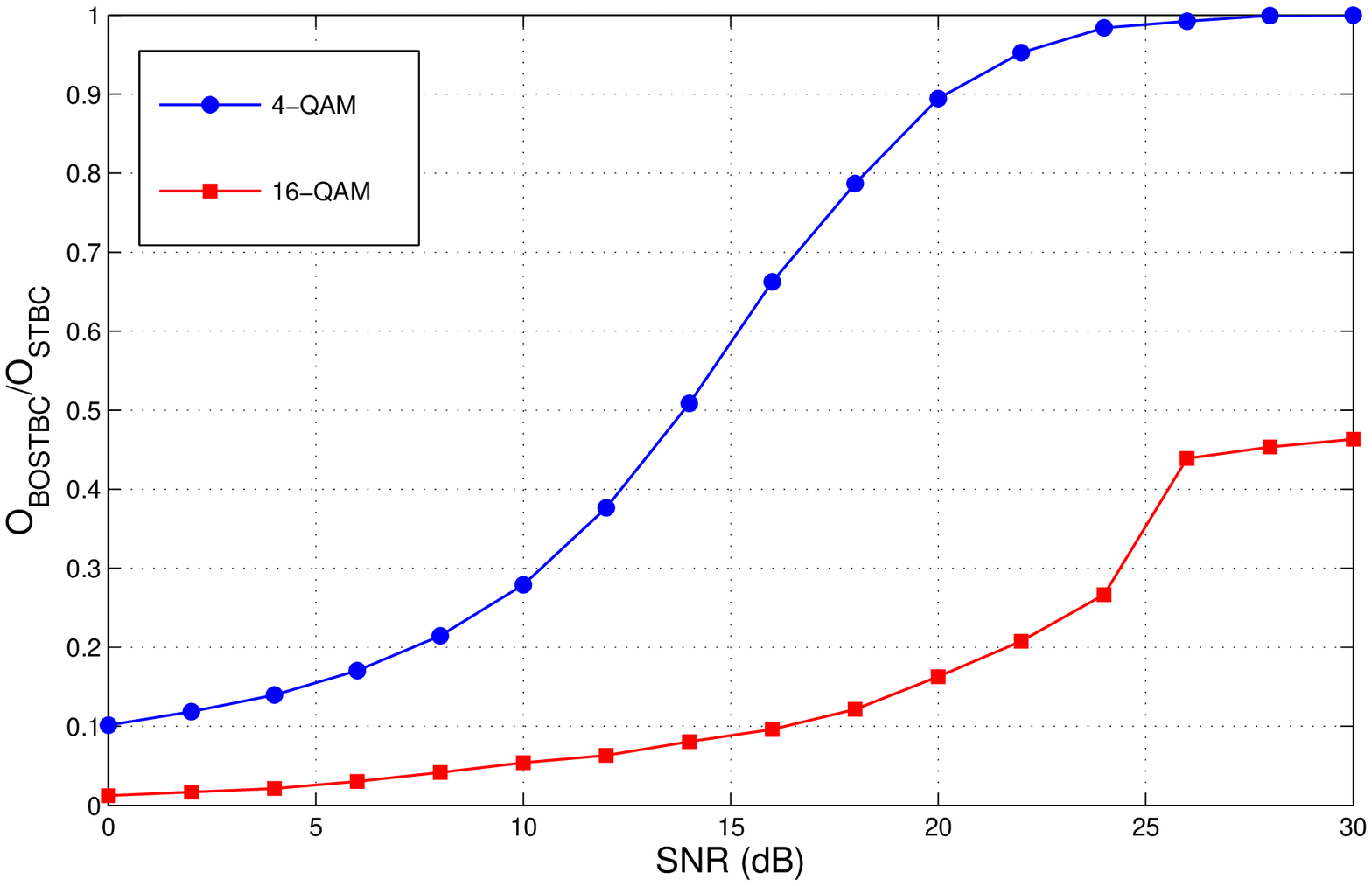}
\caption{The number of Euclidean metrics calculated ratio $ O_{BOSTBC} / O_{STBC} $ for a BOSTBC with parameters $\left( 2,4,2\right) $} 
\label{fig:stbc_242_em}
\end{figure}

\begin{figure}
\centering
\includegraphics[width=6.5in,height=2.7in]{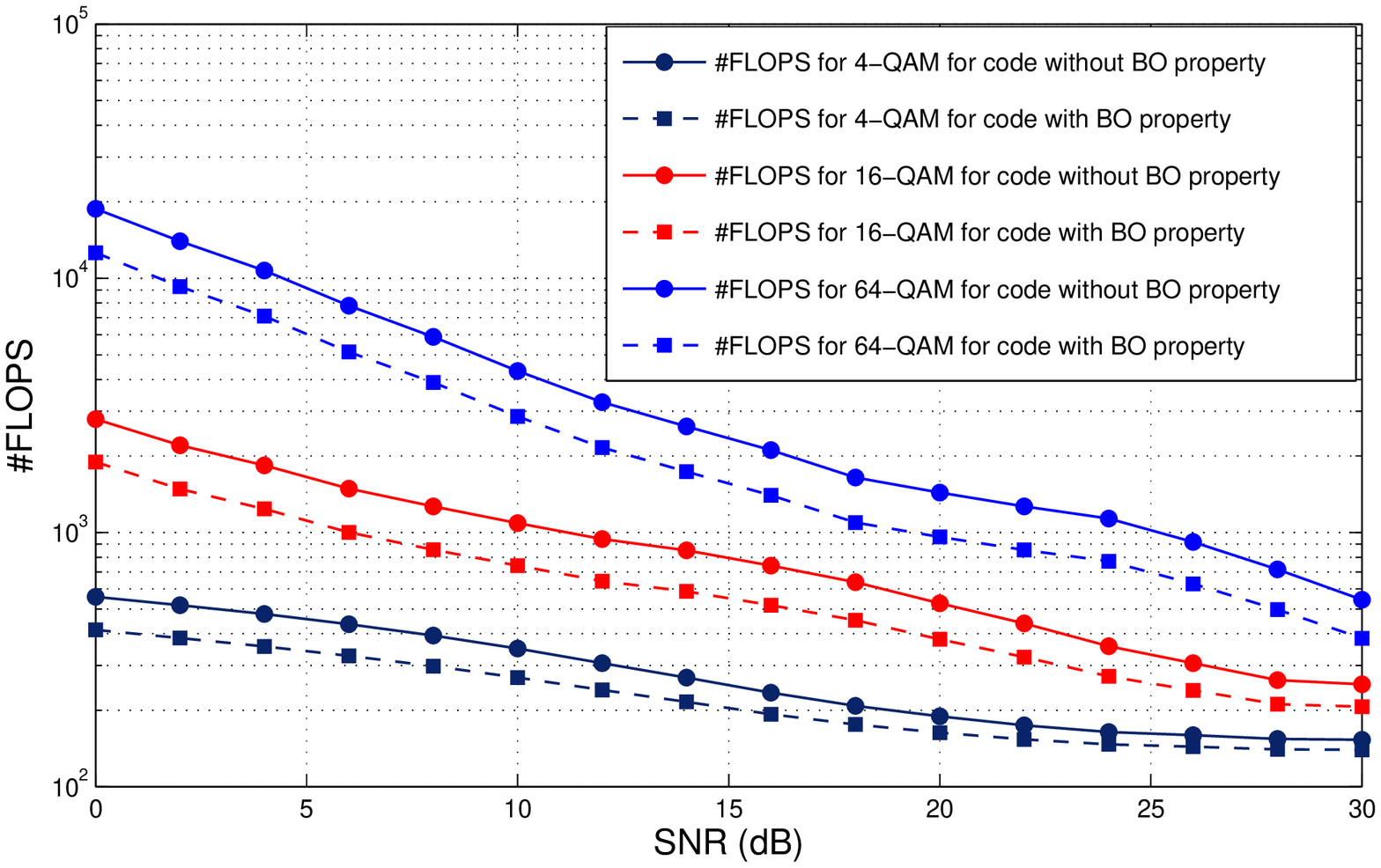}
\caption{The number of FLOPS required for decoding for a BOSTBC with parameters $\left( 2,4,1\right) $} 
\label{fig:stbc_241_flops}
\end{figure}

\begin{figure}
\centering
\includegraphics[width=6.5in,height=2.7in]{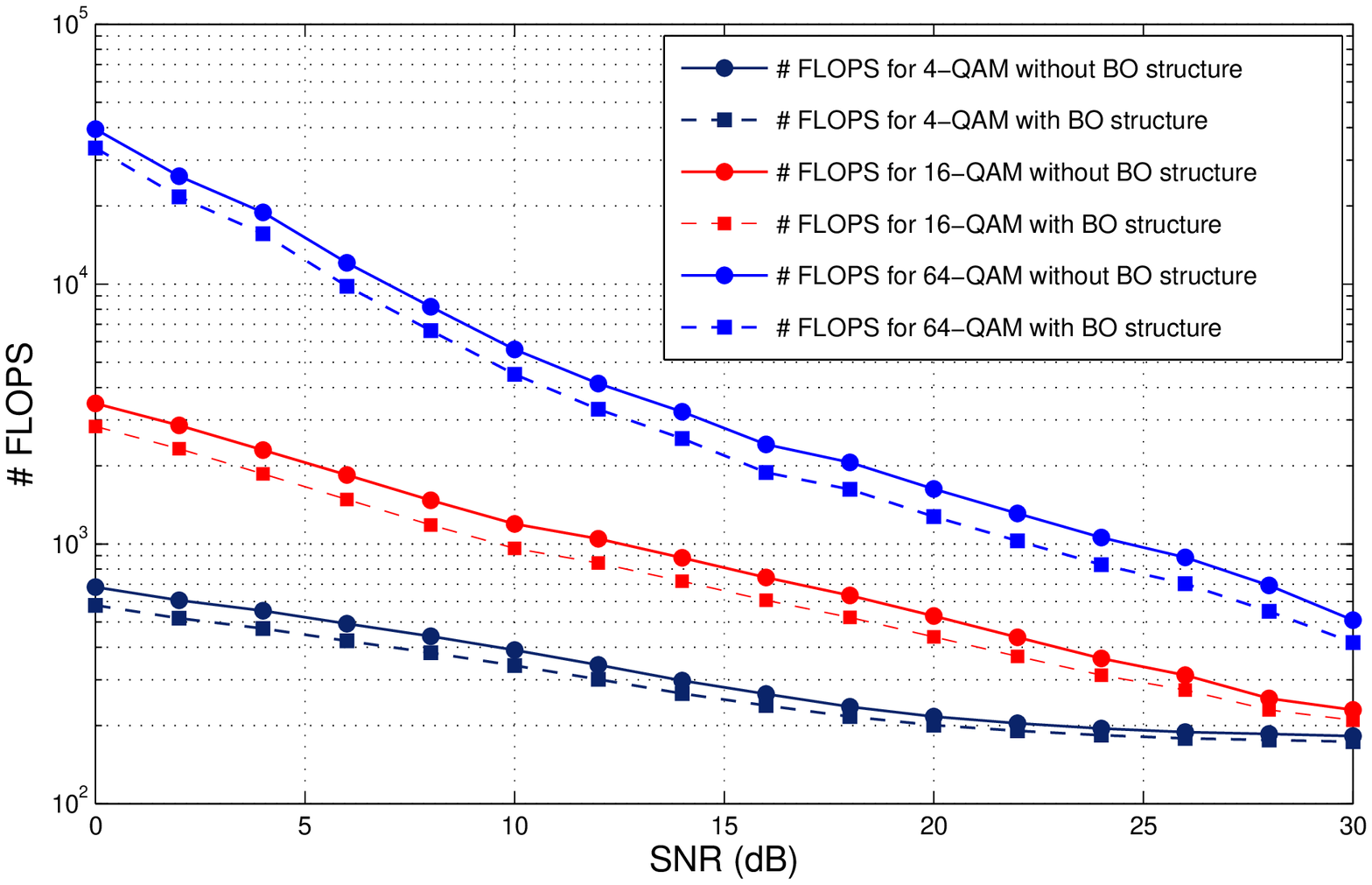}
\caption{The number of FLOPS required for decoding for a BOSTBC with parameters $\left( 2,2,2\right) $} 
\label{fig:stbc_222_flops}
\end{figure}

\begin{figure}
\centering
\includegraphics[width=6.5in,height=2.7in]{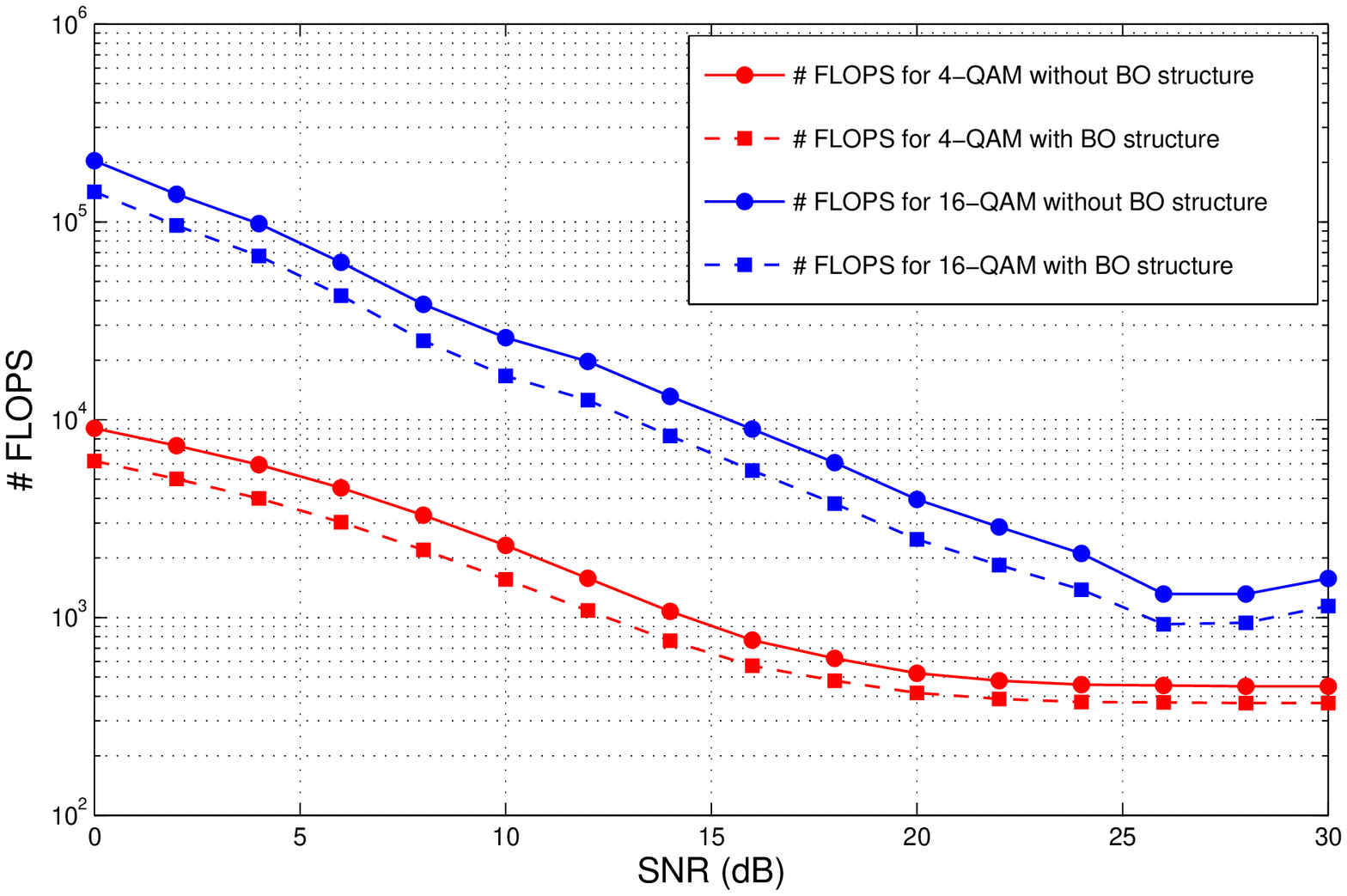}
\caption{The number of FLOPS required for decoding for a BOSTBC with parameters $\left( 2,4,2\right) $} 
\label{fig:stbc_242_flops}
\end{figure}

\subsection{Comparison with the QRDM decoder approach}
\label{compare_ml_qrdm}
The primary difference between the depth first and the breadth first (QRDM) approach is the variation of the EMRR with respect to SNR. As seen in the figures from section \ref{ml_simulations}, the effect of the block orthogonal property reduces as the SNR increases in the depth first sphere decoder. This is owing to the Schnorr-Euchner enumeration and pruning of branches. As the SNR increases, the decoder needs to visit fewer number of nodes in order to find the ML solution and hence the EMRR also tends to 1. However, in the case of a breadth first search algorithm, all the nodes need to be visited in order to arrive to a solution. Hence the EMRR is independent of the SNR in the breadth first search case. To reduce the number of nodes visited, only $M_{c}$ paths are selected in the QRDM algorithm to reduce complexity. The value of $ M_{c}$ chosen needs to be varied with SNR in order to get near ML performance.

\section{Conclusion}
\label{sec7}

In this paper we have studied the block orthogonal property of STBCs. We have shown that this property depends upon the ordering of weight matrices. We have also provided proofs of various existing codes exhibiting the block orthogonal property. A method of exploiting the block orthogonal structure of the STBCs to reduce the sphere decoding complexity was also given with bounds on the maximum possible reduction. 

%\begin{remark}
%The topics that still need to be addressed
%\begin{itemize}
%\item Proofs for the various constructions for higher values of $ \Gamma$. 
%\item The BOSTBCs known so far have seed codes which are either two or four group decodable. Prove / disprove that BOSTBCs exist for only these numbers. 
%\item Examine the effect of using the QR decomposition on the block orthogonal property and find a remedy. 
%\item Generalize the algorithm for the ordering of variables to give the best ordering possible taking into account the block orthogonality property.
%\end{itemize}
%\end{remark}

%%%%%%%%%%%%%%%%%%%%%%%%%%%%%%%%%%%%%%%%%%%%%

\newpage

%%%%%%%%%%%%%%%%%%%%%%%%%%%%%%%%%%%%%%%%%%%%%%%%%%%%%%%%%%%%%%%%%%%%%%%%%%%%%%%%%%%%%%%%%%%%%%%%%%%%%%%%%%%%
\appendices

%%%%%%%%%%%%%%%%%%%%%%%%%%%%%%%%%%%%%%%%%%%%%%%%%%%%%%%%%%%%%%%%%%%%%%%%%%%%%%%%%%%%%%%%%%
\section{Proof of Lemma \ref{bostc_lemma3}}
\label{proof_bostc_lemma3}
Following the system model in Section \ref{sec2}, we have the equivalent channel matrix $ \textbf{H}_{eq} \in \mathbb{R}^{2n_{r}n_{t} \times 2l}$ as $ \textbf{H}_{eq} = \left[ \textbf{H}_{1} ~ \textbf{H}_{2}\right] ~=~ \left[ \textbf{h}_{1} ~...~ \textbf{h}_{l} ~ \textbf{h}_{l+1} ~...~ \textbf{h}_{2l}\right] $. We know from Theorem 2 of \cite{PaR} that, if any two weight matrices $ \textbf{A}_{i}$ and $ \textbf{A}_{j}$ are Hurwitz-Radon orthogonal, then the $i$-th and the $j$-th columns of the $ \textbf{H}_{eq}$ matrix are orthogonal. Due to the conditions on the weight matrices, we have that $ \textbf{H}_{1}^{T} \textbf{H}_{1}$ and $ \textbf{H}_{2}^{T}\textbf{H}_{2}$ are block diagonal with $k$ blocks, each of size $ \gamma \times \gamma$. 
Under $ \textbf{Q} \textbf{R}$ decomposition, $ \textbf{H}_{eq} = \textbf{Q} \textbf{R}$ with $ \textbf{Q} = \left[ \textbf{Q}_{1} ~ \textbf{Q}_{2} \right] $ with $ \textbf{Q}_{1}, \textbf{Q}_{2} \in \mathbb{R}^{2n_{r}n_{t} \times l}$ and $ \textbf{R} = \left[\begin{array}{cc}
\textbf{R}_{1} & \textbf{E}\\
\textbf{0} & \textbf{R}_{2}\\
\end{array}\right]  $ as mentioned. 
It can be seen from Lemma 2 of \cite{JiR} that the matrix $ \textbf{R}_{1}$ is block diagonal with $k$ blocks, each of size $ \gamma \times \gamma$. We can now write,
\begin{equation*}
\textbf{H}_{2} = \textbf{Q}_{1} \textbf{E} + \textbf{Q}_{2} \textbf{R}_{2},
\end{equation*}
\begin{equation*}
\left( \textbf{H}_{2} - \textbf{Q}_{1} \textbf{E}\right)^{T} \left( \textbf{H}_{2} - \textbf{Q}_{1} \textbf{E}\right) = \textbf{R}_{2}^{T} \textbf{Q}_{2}^{T} \textbf{Q}_{2} \textbf{R}_{2}.
\end{equation*}
Simplifying, 
\begin{equation*}
\textbf{H}_{2}^{T} \textbf{H}_{2} - \textbf{E}^{T} \textbf{E} = \textbf{R}_{2}^{T} \textbf{R}_{2}.
\end{equation*}
Now, if $ \textbf{E}^{T} \textbf{E}$ is block diagonal with $k$ blocks of size $ \gamma \times \gamma$ each $ \Rightarrow \textbf{R}_{2}^{T} \textbf{R}_{2}$ is block diagonal with $k$ blocks of size $ \gamma \times \gamma$ each. Since $ \textbf{R}_{2}$ is upper triangular and full rank, this means that $ \textbf{R}_{2}$ is block diagonal with $k$ blocks of size $ \gamma \times \gamma$ each.

%%%%%%%%%%%%%%%%%%%%%%%%%%%%%%%%%%%%%%%%%%%%%%%%%%%%%%%
\section{Proof of Lemma \ref{bostc_lemma4}}
\label{proof_bostc_lemma4}
Following the system model in Section \ref{sec2}, we have the equivalent channel matrix $ \textbf{H}_{eq} \in \mathbb{R}^{2n_{r}n_{t} \times L+l}$ as $ \textbf{H}_{eq} = \left[ \textbf{H}_{1} ~ \textbf{H}_{2}\right] ~=~ \left[ \textbf{h}_{1} ~...~ \textbf{h}_{L} ~ \textbf{h}_{L+1} ~...~ \textbf{h}_{L+l}\right] $. We know from Theorem 2 of \cite{PaR} that, if any two weight matrices $ \textbf{B}_{i}$ and $ \textbf{B}_{j}$ are Hurwitz-Radon orthogonal, then the $i$-th and the $j$-th columns of the $ \textbf{H}_{eq}$ matrix are orthogonal. Due to the conditions on the weight matrices, we have that $ \textbf{H}_{2}^{T}\textbf{H}_{2}$ is block diagonal with $k$ blocks, each of size $ \gamma \times \gamma$. 
Under $ \textbf{Q} \textbf{R}$ decomposition, $ \textbf{H}_{eq} = \textbf{Q} \textbf{R}$ with $ \textbf{Q} = \left[ \textbf{Q}_{1} ~ \textbf{Q}_{2} \right] $ with $ \textbf{Q}_{1} \in  \mathbb{R}^{2n_{r}n_{t} \times L}$ and $\textbf{Q}_{2} \in \mathbb{R}^{2n_{r}n_{t} \times l}$ and $ \textbf{R} = \left[\begin{array}{cc}
\textbf{R}_{1} & \textbf{E}\\
\textbf{0} & \textbf{R}_{2}\\
\end{array}\right]  $ as mentioned. 
We can now write,
\begin{equation*}
\textbf{H}_{2} = \textbf{Q}_{1} \textbf{E} + \textbf{Q}_{2} \textbf{R}_{2},
\end{equation*}
\begin{equation*}
\left( \textbf{H}_{2} - \textbf{Q}_{1} \textbf{E}\right)^{T} \left( \textbf{H}_{2} - \textbf{Q}_{1} \textbf{E}\right) = \textbf{R}_{2}^{T} \textbf{Q}_{2}^{T} \textbf{Q}_{2} \textbf{R}_{2}.
\end{equation*}
Simplifying, 
\begin{equation*}
\textbf{H}_{2}^{T} \textbf{H}_{2} - \textbf{E}^{T} \textbf{E} = \textbf{R}_{2}^{T} \textbf{R}_{2}.
\end{equation*}
Now, if $ \textbf{E}^{T} \textbf{E}$ is block diagonal with $k$ blocks of size $ \gamma \times \gamma$ each $ \Rightarrow \textbf{R}_{2}^{T} \textbf{R}_{2}$ is block diagonal with $k$ blocks of size $ \gamma \times \gamma$ each. Since $ \textbf{R}_{2}$ is upper triangular and full rank, this means that $ \textbf{R}_{2}$ is block diagonal with $k$ blocks of size $ \gamma \times \gamma$ each.

%%%%%%%%%%%%%%%%%%%%%%%%%%%%%%%%%%%%%%%%%%%%%%%%%%%%%%%%
\section{Structure of the $\textbf{R}$ matrix obtained from Construction I}
\label{app_r_mat_struct_const_1}
%We first consider the case when $\Gamma=2$. Then we consider the case of arbitrary $\Gamma$. 
%\subsection{For $ \Gamma = 2$}
%\label{app_r_mat_struct_gamma_2}
According to construction I, the structure of the STBC is
\begin{equation*}
\textbf{X} = \textbf{X}_{1}\left( s_{1}, s_{2}, ..., s_{4 \lambda} \right) + \textbf{MX}_{2} \left( s_{4 \lambda + 1}, s_{4 \lambda + 2}, ..., s_{8 \lambda}\right) ,
\end{equation*}
where $\textbf{X}_{1}$ is a rate-1 four group decodable STBCs obtained from CUWDs as described in Section \ref{cuwds}.

Let the $\textbf{R}$ matrix for this code have the following structure:
\begin{equation*}
\textbf{R} = \left[\begin{array}{cc}
\textbf{R}_{1} & \textbf{E}\\
\textbf{0} & \textbf{R}_{2}\\
\end{array}\right],
\end{equation*}
where $ \textbf{R}_{1}$, $ \textbf{E}$ and $ \textbf{R}_{2}$ are $ 4 \lambda \times 4 \lambda$ matrices. 

\subsection{Structure of $ \textbf{R}_{1}$}
\label{r1_struct_const_1}
From \cite{JiR}, it can be easily seen that $ \textbf{Y}_{1}$ has a block diagonal structure with four blocks, and each block of the size $ \lambda \times \lambda$. 
\begin{equation*}
\textbf{R}_{1} = \left[\begin{array}{cccc}
\textbf{R}_{11} & \textbf{0} & \textbf{0} & \textbf{0}\\
\textbf{0} & \textbf{R}_{12} & \textbf{0} & \textbf{0}\\
\textbf{0} & \textbf{0} & \textbf{R}_{13} & \textbf{0}\\
\textbf{0} & \textbf{0} & \textbf{0} & \textbf{R}_{14}\\
\end{array}\right],
\end{equation*}
where $ \textbf{R}_{1i}$, $i=1,...4$ is a $ \lambda \times \lambda$ given by \eqref{r1i_eq}. 
\begin{figure*}
\begin{equation}
\label{r1i_eq}
\textbf{R}_{1i} = \left[\begin{array}{cccc}
\parallel \textbf{r}_{4\left( i-1\right) \lambda + 1} \parallel & \left\langle \textbf{q}_{4\left( i-1\right) \lambda + 1}, \textbf{h}_{4\left( i-1\right) \lambda + 2}\right\rangle & \cdots & \left\langle \textbf{q}_{4\left( i-1\right) \lambda + 1}, \textbf{h}_{4\left( i-1\right) \lambda + \lambda}\right\rangle\\
0 & \parallel \textbf{r}_{4\left( i-1\right) \lambda + 2} \parallel & \cdots & \left\langle \textbf{q}_{4\left( i-1\right) \lambda + 2}, \textbf{h}_{4\left( i-1\right) \lambda + \lambda}\right\rangle\\
\vdots & \vdots & \ddots & \vdots\\
0 & 0 & \cdots & \parallel \textbf{r}_{4\left( i-1\right) \lambda + \lambda} \parallel\\ 
\end{array}\right],
\end{equation}
\hrule
\end{figure*}

\begin{proposition}
\label{r1_struct_prop_const_1}
The non-zero blocks of the matrix $ \textbf{R}_{1}$ are equal i.e., $ \textbf{R}_{11} = \textbf{R}_{1i}$, for $i=2,3,4$. 
\end{proposition}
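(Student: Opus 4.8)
The plan is to reduce the claimed equality of diagonal blocks to an equality of Gram matrices and then to invoke uniqueness of the Cholesky factorization. Since $\textbf{R}_1$ has already been shown to be block diagonal with blocks $\textbf{R}_{11},\textbf{R}_{12},\textbf{R}_{13},\textbf{R}_{14}$, the Gram-Schmidt process that produces $\textbf{R}_{1i}$ involves only the $\lambda$ columns $\textbf{h}_{(i-1)\lambda+1},\ldots,\textbf{h}_{i\lambda}$ of group $i$: every projection onto a $\textbf{q}$-vector belonging to a different group vanishes, precisely because the corresponding off-diagonal entry of $\textbf{R}_1$ is zero. Writing $\textbf{P}_i$ for the $\lambda\times\lambda$ Gram matrix with entries $\left(\textbf{P}_i\right)_{pq}=\left\langle\textbf{h}_{(i-1)\lambda+p},\textbf{h}_{(i-1)\lambda+q}\right\rangle$, the QR relation restricted to group $i$ then gives $\textbf{P}_i=\textbf{R}_{1i}^{T}\textbf{R}_{1i}$. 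As the construction assumes a full-rank $\textbf{R}$, each $\textbf{R}_{1i}$ is upper triangular with positive diagonal and is therefore the unique Cholesky factor of $\textbf{P}_i$; hence it suffices to prove $\textbf{P}_1=\textbf{P}_2=\textbf{P}_3=\textbf{P}_4$.

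To compute these Gram matrices I would use the trace form of the channel-induced inner product. From the system model of Section \ref{sec2} (equivalently Theorem~2 of \cite{PaR}), the entries of $\textbf{H}_{eq}^{T}\textbf{H}_{eq}$ are
\begin{equation*}
\left\langle\textbf{h}_i,\textbf{h}_j\right\rangle=\Re\left\{\mathrm{tr}\left(\textbf{A}_i^{H}\textbf{H}^{H}\textbf{H}\textbf{A}_j\right)\right\}.
\end{equation*}
Next I would invoke the defining structure of the CUWD. Writing $\textbf{C}_j\triangleq\textbf{A}_{(j-1)\lambda+1}$ for the head of the $j$-th column of Table \ref{cuwd_table}, the property that the matrix in row $p$, column $j$ equals $\textbf{A}_p\textbf{A}_{(j-1)\lambda+1}$ says that the $p$-th weight matrix of group $j$ is $\textbf{A}_{(j-1)\lambda+p}=\textbf{A}_p\textbf{C}_j$. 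The matrix $\textbf{C}_j$ is a weight matrix of a UWD and hence unitary, $\textbf{C}_j\textbf{C}_j^{H}=\textbf{I}$, with $\textbf{C}_1=\textbf{A}_1=\textbf{I}$.

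The heart of the argument is then a single trace manipulation. For all $p,q\in\{1,\ldots,\lambda\}$,
\begin{align*}
\left(\textbf{P}_j\right)_{pq}&=\Re\left\{\mathrm{tr}\left(\left(\textbf{A}_p\textbf{C}_j\right)^{H}\textbf{H}^{H}\textbf{H}\,\textbf{A}_q\textbf{C}_j\right)\right\}\\
&=\Re\left\{\mathrm{tr}\left(\textbf{A}_p^{H}\textbf{H}^{H}\textbf{H}\textbf{A}_q\,\textbf{C}_j\textbf{C}_j^{H}\right)\right\}\\
&=\Re\left\{\mathrm{tr}\left(\textbf{A}_p^{H}\textbf{H}^{H}\textbf{H}\textbf{A}_q\right)\right\}=\left(\textbf{P}_1\right)_{pq},
\end{align*}
where the second equality is the cyclic invariance of the trace and the third uses $\textbf{C}_j\textbf{C}_j^{H}=\textbf{I}$. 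Thus $\textbf{P}_j=\textbf{P}_1$ for $j=2,3,4$, and by the uniqueness of the Cholesky factor noted above, $\textbf{R}_{1i}=\textbf{R}_{11}$ for $i=2,3,4$.

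The main point to get right is the reduction in the first paragraph: one must be sure that the already-established block-diagonality of $\textbf{R}_1$ genuinely decouples the orthogonalization group by group, so that $\textbf{R}_{1i}$ is the QR factor of the isolated submatrix $\left[\textbf{h}_{(i-1)\lambda+1}~\cdots~\textbf{h}_{i\lambda}\right]$ and not contaminated by columns of earlier groups. The trace computation itself is routine once the column-head factorization $\textbf{A}_{(j-1)\lambda+p}=\textbf{A}_p\textbf{C}_j$ and the unitarity of $\textbf{C}_j$ are in hand; the only subtlety is that the channel-induced bilinear form must be right-equivariant under the unitary $\textbf{C}_j$, which is exactly what cyclicity of the trace supplies.
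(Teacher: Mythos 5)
Your proof is correct, but it takes a genuinely different route from the paper's. The paper proves $\textbf{R}_{11}=\textbf{R}_{1i}$ entry by entry, by induction along the Gram--Schmidt recursion: it shows $\parallel \textbf{r}_{j}\parallel = \parallel \textbf{r}_{(i-1)\lambda+j}\parallel$ and $\left\langle \textbf{q}_{j},\textbf{h}_{k}\right\rangle = \left\langle \textbf{q}_{(i-1)\lambda+j},\textbf{h}_{(i-1)\lambda+k}\right\rangle$, repeatedly invoking the trace form of the inner product, unitarity of the weight matrices, and the CUWD column structure $\textbf{A}_{(j-1)\lambda+p}=\textbf{A}_{p}\textbf{A}_{(j-1)\lambda+1}$. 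You use exactly the same three ingredients but package the Gram--Schmidt recursion into a single algebraic fact: from $\textbf{H}_{1}=\textbf{Q}_{1}\textbf{R}_{1}$ and block diagonality of $\textbf{R}_{1}$ you get $\textbf{P}_{i}=\textbf{R}_{1i}^{T}\textbf{R}_{1i}$, uniqueness of the Cholesky factor (valid here since full rank of $\textbf{R}$ makes the diagonals positive) reduces the claim to $\textbf{P}_{i}=\textbf{P}_{1}$, and that equality is a one-line consequence of trace cyclicity because group $i$'s weight matrices are group 1's matrices right-multiplied by the common unitary $\textbf{C}_{i}$. Your reduction is shorter, avoids induction entirely, and makes the structural reason for the result transparent (a common right unitary factor leaves the channel-induced Gram matrix unchanged). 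What the paper's heavier inductive machinery buys is reusability: essentially the same induction template is recycled in Appendix C for the off-diagonal blocks $\textbf{E}_{ij}$ of $\textbf{E}$ (Proposition 2), where your Gram/Cholesky shortcut is unavailable because $\textbf{E}=\textbf{Q}_{1}^{T}\textbf{H}_{2}$ is neither triangular nor a Cholesky factor of anything convenient. Also, your incidental worry about decoupling is resolved even more simply than by the projection argument you sketch: $\textbf{H}_{1}^{T}\textbf{H}_{1}=\textbf{R}_{1}^{T}\textbf{R}_{1}$ holds for any QR factorization of the leading columns, and block diagonality of $\textbf{R}_{1}$ then hands you $\textbf{P}_{i}=\textbf{R}_{1i}^{T}\textbf{R}_{1i}$ with no reference to which $\textbf{q}$-vectors enter which projections.
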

\begin{proof}
It is sufficient for us to prove that 
\begin{equation}
\label{r1_struct_r}
\parallel \textbf{r}_{j} \parallel = \parallel \textbf{r}_{4\left( i-1\right) \lambda + j} \parallel
\end{equation}
and
\begin{equation}
\label{r1_struct_qh}
\left\langle \textbf{q}_{j}, \textbf{h}_{k}\right\rangle = \left\langle \textbf{q}_{4\left( i-1\right) \lambda + j}, \textbf{h}_{4\left( i-1\right) \lambda + k}\right\rangle,
\end{equation}
for $i=2,3,4$, $j=1,..., \lambda - 1$ and $k = j+1, ..., \lambda$. 

The proof is by induction. We first consider the case of $j=1$. 
We also recall \cite{PaR} that 
\begin{equation*}
\left\langle \textbf{h}_{k}, \textbf{h}_{j}\right\rangle  = \frac{1}{2} tr\left( \check{\textbf{H}} \check{\textbf{A}}_{k} \check{ \textbf{A}}_{j}^{T} \check{ \textbf{H}}^{T}\right).
\end{equation*}
Now, for \eqref{r1_struct_r} we have, 
\begin{align*}
\parallel \textbf{r}_{1} \parallel ^{2} &= \left\langle \textbf{h}_{1}, \textbf{h}_{1}\right\rangle\\
%\end{equation*}
%\begin{equation*}
 & = \frac{1}{2} tr\left( \check{\textbf{H}} \check{\textbf{A}}_{1} \check{ \textbf{A}}_{1}^{T} \check{ \textbf{H}}^{T}\right) \\
%\end{equation*}
%\begin{equation*}
 &  = \frac{1}{2} tr\left( \check{\textbf{H}} \check{\textbf{A}}_{4\left( i-1\right) \lambda + 1} \check{ \textbf{A}}_{4\left( i-1\right) \lambda + 1}^{T} \check{ \textbf{H}}^{T}\right) \\
%\end{equation*}
%\begin{equation*}
&  = \parallel \textbf{r}_{4\left( i-1\right) \lambda + 1} \parallel ^{2},
\end{align*}
since $\textbf{r}_{4\left( i-1\right) \lambda + 1} = \textbf{h}_{4\left( i-1\right) \lambda + 1}$ and $\check{\textbf{A}}_{k} \check{ \textbf{A}}_{k}^{T} = \check{\textbf{I}}$ for $k = 1, ..., 4 \lambda$. 
For \eqref{r1_struct_qh} we have,
{
\begin{align*}
\left\langle \textbf{q}_{1}, \textbf{h}_{k}\right\rangle &= \frac{1}{\parallel \textbf{r}_{1} \parallel} \left\langle \textbf{h}_{1}, \textbf{h}_{k}\right\rangle\\
%\end{equation*}
%\begin{equation*}
&= \frac{tr\left( \check{\textbf{H}} \check{\textbf{A}}_{1} \check{ \textbf{A}}_{k}^{T} \check{ \textbf{H}}^{T}\right)}{2\parallel \textbf{r}_{1} \parallel} \\ 
%\end{equation*}
%\begin{equation*}
&= \frac{tr\left( \check{\textbf{H}} \check{\textbf{A}}_{4\left( i-1\right) \lambda + 1} \check{ \textbf{A}}_{4\left( i-1\right) \lambda + 1}^{T} \check{ \textbf{A}}_{k}^{T} \check{ \textbf{H}}^{T}\right)}{2\parallel \textbf{r}_{4\left( i-1\right) \lambda + 1} \parallel}  \\
%\end{equation*}
%\begin{equation*}
&= \frac{tr\left( \check{\textbf{H}} \check{\textbf{A}}_{4\left( i-1\right) \lambda + 1} \check{ \textbf{A}}_{4\left( i-1\right) \lambda + k}^{T} \check{ \textbf{H}}^{T}\right)}{2\parallel \textbf{r}_{4\left( i-1\right) \lambda + 1} \parallel}  \\
%\end{equation*}
%\begin{equation*}
&= \left\langle \textbf{q}_{4\left( i-1\right) \lambda + 1}, \textbf{h}_{4\left( i-1\right) \lambda + k}\right\rangle,
\end{align*}
}
since $\textbf{A}_{k} \textbf{A}_{4\left( i-1\right) \lambda + 1} = \textbf{A}_{4\left( i-1\right) \lambda + k}$. 
Now we prove equations \eqref{r1_struct_r} and \eqref{r1_struct_qh} for arbitrary $j$. We prove this by induction. Let the equations hold true for all $l < j$. 
We now have for equation \eqref{r1_struct_r},
\begin{align*}
\parallel \textbf{r}_{j} \parallel ^{2} &=  \left\langle \textbf{r}_{j}, \textbf{r}_{j}\right\rangle\\
%\end{equation*}
%\begin{equation*}
&=  \left\langle \textbf{h}_{j} - \sum_{l=1}^{j-1}\left\langle \textbf{q}_{l} , \textbf{h}_{j}\right\rangle \textbf{q}_{l} ~,~ \textbf{h}_{j} - \sum_{k=1}^{j-1}\left\langle \textbf{q}_{k} ,  \textbf{h}_{j}\right\rangle \textbf{q}_{k} \right\rangle\\
%\end{equation*}
%\begin{equation*}
&=  \left\langle \textbf{h}_{j}, \textbf{h}_{j}\right\rangle - 2 \sum_{l=1}^{j-1}\left\langle \textbf{q}_{l} , \textbf{h}_{j}\right\rangle ^{2} \\
&\quad + \sum_{k=1}^{j-1} \sum_{l=1}^{j-1}\left\langle \textbf{q}_{l} , \textbf{h}_{j}\right\rangle \left\langle \textbf{q}_{k} , \textbf{h}_{j}\right\rangle \left\langle \textbf{q}_{l} , \textbf{q}_{k}\right\rangle\\
%\end{equation*}
%\begin{equation*}
&=  \frac{1}{2} tr\left( \check{\textbf{H}} \check{\textbf{A}}_{j} \check{ \textbf{A}}_{j}^{T} \check{ \textbf{H}}^{T}\right)  - 2 \sum_{l=1}^{j-1}\left\langle \textbf{q}_{l} , \textbf{h}_{j}\right\rangle ^{2} \\
&\quad + \sum_{k=1}^{j-1} \sum_{l=1}^{j-1}\left\langle \textbf{q}_{l} , \textbf{h}_{j}\right\rangle \left\langle \textbf{q}_{k} , \textbf{h}_{j}\right\rangle \left\langle \textbf{q}_{l} , \textbf{q}_{k}\right\rangle\\
%\end{equation*}
%\begin{equation*}
&=  \frac{1}{2} tr\left( \check{\textbf{H}} \check{\textbf{A}}_{4\left( i-1\right) \lambda + j} \check{ \textbf{A}}_{4\left( i-1\right) \lambda + j}^{T} \check{ \textbf{H}}^{T}\right)  \\
&\quad - 2 \sum_{l=1}^{j-1}\left\langle \textbf{q}_{4\left( i-1\right) \lambda + l} , \textbf{h}_{4\left( i-1\right) \lambda + j}\right\rangle ^{2}\\ 
& \quad + \sum_{k=1}^{j-1} \sum_{l=1}^{j-1}\left\langle \textbf{q}_{4\left( i-1\right) \lambda + l} , \textbf{h}_{4\left( i-1\right) \lambda + j}\right\rangle .\\
&\quad \left\langle \textbf{q}_{4\left( i-1\right) \lambda + k} , \textbf{h}_{4\left( i-1\right) \lambda + j}\right\rangle \left\langle \textbf{q}_{4\left( i-1\right) \lambda + l} , \textbf{q}_{4\left( i-1\right) \lambda + k}\right\rangle\\
%\end{equation*}
%\begin{equation*}
&= ~\parallel \textbf{r}_{4\left( i-1\right) \lambda + j} \parallel ^{2} ,
\end{align*}
which follows from the induction hypothesis and the fact that $\check{\textbf{A}}_{j} \check{ \textbf{A}}_{j}^{T} = \check{\textbf{I}}$ for $j = 1, ..., 4 \lambda$ .
For equation \eqref{r1_struct_qh}, 
\begin{align*}
\left\langle \textbf{q}_{j}, \textbf{h}_{k}\right\rangle &= \frac{1}{\parallel \textbf{r}_{j} \parallel} \left\langle \textbf{h}_{j} - \sum_{l=1}^{j-1} \left\langle \textbf{q}_{l}, \textbf{h}_{j}\right\rangle \textbf{q}_{l}~,~ \textbf{h}_{k}\right\rangle\\
%\end{equation*}
%\begin{equation*}
&= \frac{1}{\parallel \textbf{r}_{j} \parallel} \left[ \left\langle \textbf{h}_{j}, \textbf{h}_{k}\right\rangle - \sum_{l=1}^{j-1} \left\langle \textbf{q}_{l}, \textbf{h}_{j}\right\rangle \left\langle \textbf{q}_{l}, \textbf{h}_{k}\right\rangle\right] \\
%\end{equation*}
%\begin{equation*}
&= \frac{1}{2\parallel \textbf{r}_{j} \parallel} \left[ tr\left( \check{\textbf{H}} \check{\textbf{A}}_{j} \check{ \textbf{A}}_{k}^{T} \check{ \textbf{H}}^{T}\right) - \sum_{l=1}^{j-1} \left\langle \textbf{q}_{l}, \textbf{h}_{j}\right\rangle \left\langle \textbf{q}_{l}, \textbf{h}_{k}\right\rangle\right] \\
%\end{equation*}
%\begin{equation*}
&= \frac{tr\left( \check{\textbf{H}} \check{\textbf{A}}_{j} \check{\textbf{A}}_{4\left( i-1\right) \lambda + 1} \check{ \textbf{A}}_{4\left( i-1\right) \lambda + 1}^{T} \check{ \textbf{A}}_{k}^{T} \check{ \textbf{H}}^{T}\right)}{2\parallel \textbf{r}_{4\left( i-1\right) \lambda + j} \parallel}  \\
&\quad - \frac{1}{2\parallel \textbf{r}_{4\left( i-1\right) \lambda + j} \parallel}\sum_{l=1}^{j-1} \left\langle \textbf{q}_{4\left( i-1\right) \lambda + l}, \textbf{h}_{4\left( i-1\right) \lambda + j}\right\rangle.\\
&\qquad \left\langle \textbf{q}_{4\left( i-1\right) \lambda + l}, \textbf{h}_{4\left( i-1\right) \lambda + k}\right\rangle
%\end{equation*}
%\begin{equation*}
\end{align*}
\begin{align*}
&= \frac{tr\left( \check{\textbf{H}} \check{\textbf{A}}_{4\left( i-1\right) \lambda + j} \check{ \textbf{A}}_{4\left( i-1\right) \lambda + k}^{T} \check{ \textbf{H}}^{T}\right) }{2\parallel \textbf{r}_{4\left( i-1\right) \lambda + j} \parallel} \\
&\quad - \frac{1}{2\parallel \textbf{r}_{4\left( i-1\right) \lambda + j} \parallel}\sum_{l=1}^{j-1} \left\langle \textbf{q}_{4\left( i-1\right) \lambda + l}, \textbf{h}_{4\left( i-1\right) \lambda + j}\right\rangle.\\
&\qquad \left\langle \textbf{q}_{4\left( i-1\right) \lambda + l}, \textbf{h}_{4\left( i-1\right) \lambda + k}\right\rangle\\
%\end{equation*}
%\begin{equation*}
&= \left\langle \textbf{q}_{4\left( i-1\right) \lambda + j}, \textbf{h}_{4\left( i-1\right) \lambda + k}\right\rangle .
\end{align*}

\end{proof}

\subsection{Structure of $ \textbf{E}$}
\label{e_struct_const_1}

The matrix $ \textbf{E}$ is key for the block orthogonality property of the STBC in question. It is required to be para-unitary for achieving this property. The structure of the matrix $ \textbf{E}$ for Construction I is described in the following proposition.

\begin{proposition}
\label{e_struct_prop_const_1}
The matrix $ \textbf{E}$ is of the form
\begin{equation}
\label{e_struct_eq}
\textbf{E} = \left[\begin{array}{cccc}
\textbf{E}_{1} & -\textbf{E}_{2} & -\textbf{E}_{3} & -\textbf{E}_{4}\\
\textbf{E}_{2} & \textbf{E}_{1} & -\textbf{E}_{4} \textbf{P} & \textbf{E}_{3} \textbf{P}\\
\textbf{E}_{3} & \textbf{E}_{4} \textbf{P} & \textbf{E}_{1} & -\textbf{E}_{2} \textbf{P}\\
\textbf{E}_{4} & -\textbf{E}_{3} \textbf{P} & \textbf{E}_{2} \textbf{P} & \textbf{E}_{1}\\
\end{array}\right],
\end{equation}
where $ \textbf{E}_{i}$, $i = 1,...,4$ are $ \lambda \times \lambda$ matrices and $ \textbf{P}$ is a $ \lambda \times \lambda$ permutation matrix given by
\begin{equation*}
\textbf{P} = \left[\begin{array}{ccccc}
0 & 0 & \cdots & 0 & 1\\
0 & 0 & \cdots & 1 & 0\\
\vdots & \vdots & \ddots & \vdots & \vdots\\
0 & 1 & \cdots & 0 & 0\\
1 & 0 & \cdots & 0 & 0\\
\end{array}\right].
\end{equation*}
\end{proposition}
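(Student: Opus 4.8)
The plan is to index both the block-rows and block-columns of $\textbf{E}$ by the four decoding groups $p,q\in\{1,2,3,4\}$ of the underlying CUWD, so that the $(p,q)$ sub-block of $\textbf{E}$ collects the entries $\left\langle \textbf{q}_{(p-1)\lambda+j},\textbf{h}_{4\lambda+(q-1)\lambda+k}\right\rangle$ for $1\le j,k\le\lambda$. Writing $\textbf{G}_1=\textbf{A}_1=\textbf{I}$, $\textbf{G}_2=\textbf{A}_{\lambda+1}$, $\textbf{G}_3=\textbf{A}_{2\lambda+1}$, $\textbf{G}_4=\textbf{A}_{3\lambda+1}$ for the four group representatives, the CUWD recursion $\textbf{A}_{j\lambda+k}=\textbf{A}_k\textbf{A}_{j\lambda+1}$ gives $\textbf{A}_{(p-1)\lambda+i}=\textbf{A}_i\textbf{G}_p$, while by Construction~I the second-block weight matrices are $\textbf{M}\textbf{A}_{(q-1)\lambda+k}=\textbf{M}\textbf{A}_k\textbf{G}_q$. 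Since $\textbf{R}_1$ is block diagonal, Gram--Schmidt on the first $4\lambda$ columns decouples into four independent processes, and Proposition~\ref{r1_struct_prop_const_1} guarantees that the Gram--Schmidt coefficients coincide in every group; hence $\textbf{q}_{(p-1)\lambda+j}=\sum_{i=1}^{j}c_{ji}\textbf{h}_{(p-1)\lambda+i}$ with coefficients $c_{ji}$ independent of $p$.

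Next I would substitute these into the trace formula $\left\langle\textbf{h}_a,\textbf{h}_b\right\rangle=\frac{1}{2}tr\left(\check{\textbf{H}}\check{\textbf{W}}_a\check{\textbf{W}}_b^{T}\check{\textbf{H}}^{T}\right)$ and use that $\check{(\cdot)}$ is a ring homomorphism together with the unitarity of each $\textbf{G}_q$ (so $\check{\textbf{G}}_q^{T}=\check{(\textbf{G}_q^{-1})}$) to obtain
\begin{equation*}
\left[\textbf{E}\right]^{(p,q)}_{jk}=\frac{1}{2}\sum_{i=1}^{j}c_{ji}\,tr\left(\check{\textbf{H}}\,\check{\textbf{A}}_i\,\check{\left(\textbf{G}_p\textbf{G}_q^{-1}\right)}\,\check{\textbf{A}}_k^{T}\,\check{\textbf{M}}^{T}\check{\textbf{H}}^{T}\right).
\end{equation*}
I would then define the four fundamental blocks $\textbf{E}_1,\ldots,\textbf{E}_4$ to be the first block-column of $\textbf{E}$, i.e. the blocks with $q=1$ (where $\textbf{G}_q=\textbf{I}$). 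Everything now reduces to understanding the single matrix $\textbf{G}_p\textbf{G}_q^{-1}$ sitting in the middle of the trace.

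The heart of the argument is the case analysis of $\textbf{G}_p\textbf{G}_q^{-1}$, using $\textbf{G}_q^{-1}=-\textbf{G}_q$ for $q\neq1$, the pairwise anticommutation of $\textbf{G}_2,\textbf{G}_3,\textbf{G}_4$, and the fact that all $\textbf{A}_k$ are Hermitian, commute with the $\textbf{G}$'s, and form an abelian group satisfying $\textbf{A}_{k_0}\textbf{A}_k=\textbf{A}_{\lambda+1-k}$, where $\textbf{A}_{k_0}=\prod_{i=1}^{a-1}\alpha_i=\textbf{A}_{\lambda}$ is the top element. For $q=1$ the middle factor is $\textbf{G}_p$, giving the block $\textbf{E}_p$; for $p=1,\,q\neq1$ it is $-\textbf{G}_q$, giving $-\textbf{E}_q$; for $p=q\neq1$ it is $\textbf{I}$, giving $\textbf{E}_1$; and in the genuinely off-diagonal case $p\neq q$, $p,q\neq1$, I expect the key identity $\textbf{G}_p\textbf{G}_q^{-1}=\pm\textbf{G}_r\textbf{A}_{k_0}$ with $r$ the remaining index. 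Absorbing $\check{\textbf{A}}_{k_0}\check{\textbf{A}}_k^{T}=\check{\textbf{A}}_{\lambda+1-k}$ then converts the column index $k$ into $\lambda+1-k$, which is exactly right-multiplication by the reversal permutation $\textbf{P}$; collecting the signs from this quaternion-type multiplication table reproduces precisely the sign-and-$\textbf{P}$ pattern of \eqref{e_struct_eq}.

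I expect the main obstacle to be the identity $\textbf{G}_p\textbf{G}_q^{-1}=\pm\textbf{G}_r\textbf{A}_{k_0}$ together with the correct assignment of signs. Abstractly the three representatives generate a Clifford algebra rather than a quaternion group, so this relation is not formal: it holds only because the weight matrices are taken in the specific irreducible Kronecker-product representation of Section~\ref{cuwds}, in which the relevant products of generators collapse. For example, for $2^{a}=4$ one checks directly from the explicit forms of $R(\gamma_1)$, $R(\gamma_{2a})$, $R(\gamma_{2a+1})$ and $\alpha_1$ that $\textbf{G}_2\textbf{G}_3=\textbf{G}_4\textbf{A}_{k_0}$, $\textbf{G}_2\textbf{G}_4=-\textbf{G}_3\textbf{A}_{k_0}$ and $\textbf{G}_3\textbf{G}_4=\textbf{G}_2\textbf{A}_{k_0}$. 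The cleanest route is to verify these three relations (and $\textbf{A}_{k_0}\textbf{A}_k=\textbf{A}_{\lambda+1-k}$) at the representation level from the tensor-product expressions; once these facts are in hand, the block structure \eqref{e_struct_eq} follows by routine bookkeeping of the four cases above.
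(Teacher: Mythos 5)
Your proposal is correct, and it rests on the same three pillars as the paper's own proof: the block decomposition of $\textbf{E}$ indexed by the four CUWD groups, the trace identity $\left\langle \textbf{h}_{a},\textbf{h}_{b}\right\rangle = \frac{1}{2}tr\left( \check{\textbf{H}}\check{\textbf{W}}_{a}\check{\textbf{W}}_{b}^{T}\check{\textbf{H}}^{T}\right)$ combined with the CUWD relations $\textbf{A}_{(p-1)\lambda+i}=\textbf{A}_{i}\textbf{G}_{p}$, and the representation-specific one's-complement identity $\textbf{A}_{\lambda}\textbf{A}_{k}=\textbf{A}_{\lambda-k+1}$, which is exactly what produces the reversal permutation $\textbf{P}$. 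Where you genuinely depart from the paper is in the architecture. The paper proves $\textbf{E}_{ii}=\textbf{E}_{11}$, $\textbf{E}_{21}=-\textbf{E}_{12}$ and $\textbf{E}_{12}=\textbf{E}_{43}\textbf{P}$ by three separate inductions on the rows of each block, re-running the Gram--Schmidt recursion inside every induction step, and then dismisses the remaining blocks as ``similar''; you instead factor the Gram--Schmidt data out once and for all, observing that $\textbf{Q}_{p}=\textbf{H}_{p}\textbf{R}_{1p}^{-1}$ and that Proposition \ref{r1_struct_prop_const_1} forces the coefficients $c_{ji}$ to be group-independent, after which all sixteen blocks are given by one uniform trace formula whose only block-dependent ingredient is the middle factor $\textbf{G}_{p}\textbf{G}_{q}^{-1}$. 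This buys a cleaner and more complete proof: the whole proposition collapses to the multiplication table $\textbf{G}_{p}\textbf{G}_{q}^{-1}=\pm\textbf{G}_{r}\textbf{A}_{\lambda}$, and you correctly identify both that this table is not an abstract Clifford-algebra fact and the reason it nevertheless holds --- the product $\textbf{G}_{2}\textbf{G}_{3}\textbf{G}_{4}\textbf{A}_{\lambda}$ is (up to scalar) the volume element of the odd Clifford algebra, which is central and hence acts as a scalar in the irreducible Kronecker representation --- whereas the paper verifies only the single instance needed for $\textbf{E}_{12}=\textbf{E}_{43}\textbf{P}$. One caveat applies to both proofs equally: the signs in that table (and hence in \eqref{e_struct_eq}) depend on $a$ and on the $\pm$ convention in $R(\gamma_{1})$; your table is checked for $a=2$, and the paper's own closing identity $\left( \textbf{I}_{2}^{\otimes a-1}\bigotimes j\sigma_{3}\right)\left( j\sigma_{3}^{\otimes a-1}\bigotimes\textbf{I}_{2}\right)=j\sigma_{3}^{\otimes a}$ is itself off by a scalar, so when writing this up you should either track the $a$-dependence of the signs explicitly or note that only the unsigned pattern $\pm\textbf{E}_{r}\textbf{P}$ matters for the downstream use in Proposition \ref{r2_struct_prop_const_1}.
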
 

\begin{proof}
Let us represent the matrix $ \textbf{E}$ using $ \lambda \times \lambda$ blocks as:
\begin{equation*}
\textbf{E} = \left[\begin{array}{cccc}
\textbf{E}_{11} & \textbf{E}_{12} & \textbf{E}_{13} & \textbf{E}_{14}\\
\textbf{E}_{21} & \textbf{E}_{22} & \textbf{E}_{23} & \textbf{E}_{24}\\
\textbf{E}_{31} & \textbf{E}_{32} & \textbf{E}_{33} & \textbf{E}_{34}\\
\textbf{E}_{41} & \textbf{E}_{42} & \textbf{E}_{43} & \textbf{E}_{44}\\
\end{array}\right],
\end{equation*}

We first prove that $ \textbf{E}_{11} = \textbf{E}_{ii}$ for $i=2,3,4$. The proof is by induction on the rows of the matrix $ \textbf{E}_{11}$. The first row entries of the matrix $ \textbf{E}_{11}$ are given by
\begin{equation*}
\textbf{E}_{11} \left( 1, k\right) = \left\langle \textbf{q}_{1}, \textbf{h}_{4 \lambda + k} \right\rangle 
\end{equation*}
and for the matrix $ \textbf{E}_{ii}$ are given by
\begin{align*}
\textbf{E}_{ii} \left( 1, k\right) &= \left\langle \textbf{q}_{4\left( i-1\right) \lambda + 1}, \textbf{h}_{4 \lambda + 4\left( i-1\right) \lambda + k} \right\rangle\\ 
%\end{equation*}
%\begin{equation*}
&= \frac{\left\langle \textbf{h}_{4\left( i-1\right) \lambda + 1}, \textbf{h}_{4 \lambda + 4\left( i-1\right) \lambda + k} \right\rangle}{\parallel \textbf{r}_{4\left( i-1\right) \lambda + 1} \parallel} .
\end{align*}
Due to the construction of the STBC, we have $ \textbf{A}_{4 \lambda + l} =  \textbf{M}\textbf{A}_{l}$, for $l = 1, ..., 4 \lambda$. Using this, we get
\begin{align*}
\textbf{E}_{ii} \left( 1, k\right) &= \frac{tr\left( \check{\textbf{H}} \check{\textbf{A}}_{4\left( i-1\right) \lambda + 1} \check{ \textbf{A}}_{4\left( i-1\right) \lambda + k}^{T} \check{ \textbf{M}}^{T} \check{ \textbf{H}}^{T}\right)}{2\parallel \textbf{r}_{4\left( i-1\right) \lambda + 1} \parallel}\\
%\end{equation*}
%\begin{equation*}
&= \frac{tr\left( \check{\textbf{H}} \check{\textbf{A}}_{4\left( i-1\right) \lambda + 1} \check{ \textbf{A}}_{4\left( i-1\right) \lambda + 1}^{T} \check{ \textbf{A}}_{k}^{T} \check{ \textbf{M}}^{T} \check{ \textbf{H}}^{T}\right)}{2\parallel \textbf{r}_{4\left( i-1\right) \lambda + 1} \parallel}\\
%\end{equation*}
%\begin{equation*}
&= \frac{tr\left( \check{\textbf{H}} \check{\textbf{A}}_{1} \check{ \textbf{A}}_{k}^{T} \check{ \textbf{M}}^{T} \check{ \textbf{H}}^{T}\right)}{2\parallel \textbf{r}_{1} \parallel}\\
%\end{equation*}
%\begin{equation*}
&= \left\langle \textbf{q}_{1}, \textbf{h}_{4 \lambda + k} \right\rangle\\
&= \textbf{E}_{11} \left( 1, k\right) .
\end{align*}
Now, let us assume that row $m$ of $ \textbf{E}_{ii}$ is equal to the row $m$ of $ \textbf{E}_{11}$ for all $m < j$. The $j$-th row of $ \textbf{E}_{11}$ is given by
\begin{equation*}
\textbf{E}_{11} \left( j, k\right) = \left\langle \textbf{q}_{j}, \textbf{h}_{4 \lambda + k} \right\rangle ,
\end{equation*}
and the $j$-th row of $ \textbf{E}_{ii}$ is given by
\begin{align*}
\textbf{E}_{ii} \left( j, k\right) &= \left\langle \textbf{q}_{4\left( i-1\right) \lambda + j}~,~ \textbf{h}_{4 \lambda + 4\left( i-1\right) \lambda + k} \right\rangle \\
%\end{equation*}
%\begin{equation*}
&= \frac{1}{\parallel \textbf{r}_{4\left( i-1\right) \lambda + j} \parallel} \left\langle \textbf{h}_{4\left( i-1\right) \lambda + j}\right.  \\
&\quad \left. - \sum_{m=1}^{j-1} \left\langle \textbf{q}_{4\left( i-1\right) \lambda + m}, \textbf{h}_{4\left( i-1\right) \lambda + j} \right\rangle \textbf{q}_{4\left( i-1\right) \lambda + m}, \right.\\
&\qquad \left. \textbf{h}_{4 \lambda + 4\left( i-1\right) \lambda + k} \right\rangle \\
%\end{equation*}
%\begin{equation*}
&= \frac{1}{\parallel \textbf{r}_{4\left( i-1\right) \lambda + j} \parallel} \left\langle \textbf{h}_{4\left( i-1\right) \lambda + j}, \textbf{h}_{4 \lambda + 4\left( i-1\right) \lambda + k} \right\rangle\\
&\quad  - \frac{1}{\parallel \textbf{r}_{4\left( i-1\right) \lambda + j} \parallel}\sum_{m=1}^{j-1} \left\langle \textbf{q}_{4\left( i-1\right) \lambda + m}, \textbf{h}_{4\left( i-1\right) \lambda + j} \right\rangle.\\
&\qquad \left\langle  \textbf{q}_{4\left( i-1\right) \lambda + m} , \textbf{h}_{4 \lambda + 4\left( i-1\right) \lambda + k} \right\rangle \\
%\end{equation*}
%\begin{equation*}
&= \frac{tr\left( \check{\textbf{H}} \check{\textbf{A}}_{4\left( i-1\right) \lambda + j} \check{ \textbf{A}}_{4\left( i-1\right) \lambda + k}^{T} \check{ \textbf{M}}^{T} \check{ \textbf{H}}^{T}\right)}{2\parallel \textbf{r}_{4\left( i-1\right) \lambda + j} \parallel}  \\
&\quad - \frac{1}{2\parallel \textbf{r}_{4\left( i-1\right) \lambda + j} \parallel}\sum_{m=1}^{j-1} \left\langle \textbf{q}_{m}, \textbf{h}_{j} \right\rangle \left\langle  \textbf{q}_{m} , \textbf{h}_{4 \lambda + k} \right\rangle
%\end{equation*}
%\begin{equation*}
\end{align*}
\begin{align*}
&= \frac{tr\left( \check{\textbf{H}} \check{\textbf{A}}_{j} \check{\textbf{A}}_{4\left( i-1\right) \lambda + 1} \check{ \textbf{A}}_{4\left( i-1\right) \lambda + 1}^{T} \check{\textbf{A}}_{k}^{T} \check{ \textbf{M}}^{T} \check{ \textbf{H}}^{T}\right)}{2\parallel \textbf{r}_{j} \parallel}  \\
&\quad - \frac{1}{2\parallel \textbf{r}_{j} \parallel} \sum_{m=1}^{j-1} \left\langle \textbf{q}_{m}, \textbf{h}_{j} \right\rangle \left\langle  \textbf{q}_{m} , \textbf{h}_{4 \lambda + k} \right\rangle \\
%\end{equation*}
%\begin{equation*}
&= \frac{1}{2\parallel \textbf{r}_{j} \parallel} \left\langle \textbf{h}_{j}, \textbf{h}_{4 \lambda + k} \right\rangle  \\
&\quad- \frac{1}{2\parallel \textbf{r}_{j} \parallel}\sum_{m=1}^{j-1} \left\langle \textbf{q}_{m}, \textbf{h}_{j} \right\rangle \left\langle  \textbf{q}_{m} , \textbf{h}_{4 \lambda + k} \right\rangle \\
%\end{equation*}
%\begin{equation*}
& = \textbf{E}_{11} \left( j, k\right).
\end{align*}

We now prove that $ \textbf{E}_{2} =  \textbf{E}_{21} = -\textbf{E}_{12}= \textbf{E}_{43} \textbf{P} = -\textbf{E}_{34} \textbf{P} $. The proofs for the matrices $ \textbf{E}_{3}$ and $ \textbf{E}_{4}$ are very similar. First step is to prove that $ \textbf{E}_{21} = - \textbf{E}_{12}$. 
The proof is by induction on the rows of the matrix $ \textbf{E}_{21}$. The first row entries of the matrix $ \textbf{E}_{21}$ are given by
\begin{align*}
\textbf{E}_{21} \left( 1, k\right) &= \left\langle \textbf{q}_{ \lambda + 1}, \textbf{h}_{4 \lambda + k} \right\rangle \\
%\end{equation*}
%\begin{equation*}
& = \frac{1}{2\parallel \textbf{r}_{1} \parallel}tr\left( \check{\textbf{H}} \check{\textbf{A}}_{ \lambda + 1} \check{ \textbf{A}}_{\lambda + k}^{T} \check{ \textbf{M}}^{T} \check{ \textbf{H}}^{T}\right)
\end{align*}
and for the matrix $ \textbf{E}_{12}$ are given by
\begin{align*}
\textbf{E}_{12} \left( 1, k\right) &= \left\langle \textbf{q}_{1}, \textbf{h}_{5 \lambda + k} \right\rangle \\
%\end{equation*}
%\begin{equation*}
&= \frac{1}{\parallel \textbf{r}_{1} \parallel}\left\langle \textbf{h}_{1}, \textbf{h}_{5 \lambda + k} \right\rangle .
\end{align*}
Due to the construction of the STBC, we have $ \textbf{A}_{4 \lambda + l} =  \textbf{M}\textbf{A}_{l}$, for $l = 1, ..., 4 \lambda$. Using this, we get
\begin{align*}
\textbf{E}_{12} \left( 1, k\right) &= \frac{1}{2\parallel \textbf{r}_{1} \parallel}tr\left( \check{\textbf{H}} \check{\textbf{A}}_{1} \check{ \textbf{A}}_{\lambda + k}^{T} \check{ \textbf{M}}^{T} \check{ \textbf{H}}^{T}\right)\\
%\end{equation*}
%\begin{equation*}
&= \frac{1}{2\parallel \textbf{r}_{1} \parallel}tr\left( \check{\textbf{H}} \check{\textbf{A}}_{1} \check{ \textbf{A}}_{\lambda + 1}^{T} \check{ \textbf{A}}_{k}^{T} \check{ \textbf{M}}^{T} \check{ \textbf{H}}^{T}\right)\\
%\end{equation*}
%\begin{equation*}
&= - \frac{1}{2\parallel \textbf{r}_{1} \parallel}tr\left( \check{\textbf{H}} \check{\textbf{A}}_{ \lambda + 1} \check{ \textbf{A}}_{k}^{T} \check{ \textbf{M}}^{T} \check{ \textbf{H}}^{T}\right)\\
%\end{equation*}
%\begin{equation*}
&= \left\langle \textbf{q}_{ \lambda + 1}, \textbf{h}_{4 \lambda + k} \right\rangle\\ 
&= \textbf{E}_{21} \left( 1, k\right) .
\end{align*}
Now, let us assume that row $m$ of $ \textbf{E}_{21}$ is equal to the row $m$ of $ \textbf{E}_{12}$ for all $m < j$. The $j$-th row of $ \textbf{E}_{21}$ is given by
\begin{equation*}
\textbf{E}_{21} \left( j, k\right) = \left\langle \textbf{q}_{ \lambda + j}, \textbf{h}_{4 \lambda + k} \right\rangle ,
\end{equation*}
and the $j$-th row of $ \textbf{E}_{12}$ is given by
\begin{align*}
\textbf{E}_{12} \left( j, k\right) &= \left\langle \textbf{q}_{j}, \textbf{h}_{5 \lambda + k} \right\rangle\\ 
%\end{equation*}
%\begin{equation*}
&= \frac{1}{\parallel \textbf{r}_{j} \parallel} \left\langle \textbf{h}_{j} - \sum_{m=1}^{j-1} \left\langle \textbf{q}_{m}, \textbf{h}_{j} \right\rangle \textbf{q}_{m} ~,~ \textbf{h}_{5 \lambda + k} \right\rangle \\
%\end{equation*}
%\begin{equation*}
&= \frac{\left\langle \textbf{h}_{j}, \textbf{h}_{5 \lambda k} \right\rangle  - \sum_{m=1}^{j-1} \left\langle \textbf{q}_{m}, \textbf{h}_{j} \right\rangle \left\langle  \textbf{q}_{m} , \textbf{h}_{5 \lambda + k} \right\rangle}{\parallel \textbf{r}_{ \lambda + j} \parallel}  \\
%\end{equation*}
%\begin{equation*}
&= \frac{1}{2\parallel \textbf{r}_{\lambda + j} \parallel} tr\left( \check{\textbf{H}} \check{\textbf{A}}_{j} \check{ \textbf{A}}_{ \lambda + k}^{T} \check{ \textbf{M}}^{T} \check{ \textbf{H}}^{T}\right)\\
& \quad  - \frac{1}{2\parallel \textbf{r}_{\lambda + j} \parallel}\sum_{m=1}^{j-1} \left\langle \textbf{q}_{m}, \textbf{h}_{j} \right\rangle \left\langle  \textbf{q}_{m} , \textbf{h}_{5 \lambda + k} \right\rangle
\end{align*}
\begin{align*}
&\quad= \frac{1}{2\parallel \textbf{r}_{\lambda + j} \parallel} tr\left( \check{\textbf{H}} \check{\textbf{A}}_{j} \check{ \textbf{A}}_{\lambda + 1}^{T} \check{\textbf{A}}_{k}^{T} \check{ \textbf{M}}^{T} \check{ \textbf{H}}^{T}\right) \\
&\qquad - \frac{1}{2\parallel \textbf{r}_{\lambda + j} \parallel}\sum_{m=1}^{j-1} \left\langle \textbf{q}_{m}, \textbf{h}_{j} \right\rangle \left\langle  \textbf{q}_{m} , \textbf{h}_{5 \lambda + k} \right\rangle \\
%\end{equation*}
%\begin{equation*}
&\quad= \frac{1}{2\parallel \textbf{r}_{\lambda + j} \parallel} tr\left( \check{\textbf{H}} \check{\textbf{A}}_{ \lambda + j} \check{ \textbf{A}}_{k}^{T} \check{ \textbf{M}}^{T} \check{ \textbf{H}}^{T}\right) \\ 
&\qquad + \frac{1}{2\parallel \textbf{r}_{\lambda + j} \parallel} \sum_{m=1}^{j-1} \left\langle \textbf{q}_{ \lambda + m}, \textbf{h}_{ \lambda + j} \right\rangle \left\langle  \textbf{q}_{ \lambda + m} , \textbf{h}_{4 \lambda + k} \right\rangle \\
%\end{equation*}
%\begin{equation*}
&\quad= -\frac{1}{2\parallel \textbf{r}_{ \lambda + j} \parallel} \left\langle \textbf{h}_{ \lambda + j}, \textbf{h}_{4 \lambda + k} \right\rangle \\ 
&\qquad + \frac{1}{2\parallel \textbf{r}_{ \lambda + j} \parallel}\sum_{m=1}^{j-1} \left\langle \textbf{q}_{ \lambda + m}, \textbf{h}_{ \lambda + j} \right\rangle \left\langle  \textbf{q}_{ \lambda + m} , \textbf{h}_{4 \lambda + k} \right\rangle  \\
%\end{equation*}
%\begin{equation*}
&\quad = \textbf{E}_{21} \left( j, k\right).
\end{align*}

We now prove that $ \textbf{E}_{12} = \textbf{E}_{43} \textbf{P}$. The proof is by induction on the rows of the matrix $ \textbf{E}_{12}$. The first row entries of the matrix $ \textbf{E}_{12}$ are given by
\begin{align*}
\textbf{E}_{12} \left( 1, k\right) &= \left\langle \textbf{q}_{1}, \textbf{h}_{5 \lambda + k} \right\rangle \\
%\end{equation*}
%\begin{equation*}
&= \frac{1}{\parallel \textbf{r}_{1} \parallel}\left\langle \textbf{h}_{1}, \textbf{h}_{5 \lambda + k} \right\rangle .
\end{align*}
Due to the construction of the STBC, we have $ \textbf{A}_{4 \lambda + l} =  \textbf{M}\textbf{A}_{l}$, for $l = 1, ..., 4 \lambda$. Using this, we get
\begin{align*}
\textbf{E}_{12} \left( 1, k\right) &= \frac{1}{2\parallel \textbf{r}_{1} \parallel}tr\left( \check{\textbf{H}} \check{\textbf{A}}_{1} \check{ \textbf{A}}_{\lambda + k}^{T} \check{ \textbf{M}}^{T} \check{ \textbf{H}}^{T}\right)\\
%\end{equation*}
%\begin{equation*}
&= \frac{1}{2\parallel \textbf{r}_{1} \parallel}tr\left( \check{\textbf{H}} \check{ \textbf{A}}_{\lambda + 1}^{T} \check{ \textbf{A}}_{k}^{T} \check{ \textbf{M}}^{T} \check{ \textbf{H}}^{T}\right).
\end{align*}
We need to show that this is equal to $ -\textbf{E}_{43} \left( 1, \lambda - k + 1\right) $. 
\begin{align*}
\textbf{E}_{43} \left( 1, \lambda - k +1 \right) &= \left\langle \textbf{q}_{3 \lambda + 1}, \textbf{h}_{11 \lambda - k + 1} \right\rangle \\
%\end{equation*}
%\begin{equation*}
&= \frac{1}{\parallel \textbf{r}_{3 \lambda + 1} \parallel}\left\langle \textbf{h}_{3 \lambda + 1}, \textbf{h}_{11 \lambda - k + 1} \right\rangle \\
%\end{equation*}
%\begin{equation*}
&= \frac{tr\left( \check{\textbf{H}} \check{\textbf{A}}_{3 \lambda + 1} \check{ \textbf{A}}_{3 \lambda - k + 1}^{T} \check{ \textbf{M}}^{T} \check{ \textbf{H}}^{T}\right)}{2\parallel \textbf{r}_{1} \parallel}\\
%\end{equation*}
%\begin{equation*}
&= \frac{tr\left( \check{\textbf{H}} \check{\textbf{A}}_{3 \lambda + 1} \check{ \textbf{A}}_{2 \lambda + 1}^{T} \check{ \textbf{A}}_{ \lambda - k + 1}^{T}\check{ \textbf{M}}^{T} \check{ \textbf{H}}^{T}\right)}{2\parallel \textbf{r}_{1} \parallel}\\
%\end{equation*}
%\begin{equation*}
&= -\frac{tr\left( \check{\textbf{H}} \check{\textbf{A}}_{3 \lambda + 1} \check{ \textbf{A}}_{2 \lambda + 1}^{T} \check{ \textbf{A}}_{ \lambda - k + 1}^{T}\check{ \textbf{M}}^{T} \check{ \textbf{H}}^{T}\right)}{2\parallel \textbf{r}_{1} \parallel}.
\end{align*}
Substituting the values of the weight matrices from \eqref{cuwd_mat} for $ \textbf{A}_{ \lambda + 1}$, $ \textbf{A}_{2 \lambda + 1}$ and $ \textbf{A}_{3 \lambda + 1}$, and simplifying, we see that it is sufficient to show that 
\begin{equation*}
\left( \textbf{I}_{2}^{\otimes a-1} \bigotimes j \sigma_{3}\right) \textbf{A}_{ \lambda - k + 1}^{T} = j \sigma_{3}^{ \otimes a} \textbf{A}_{k},
\end{equation*} 
or equivalently,
\begin{equation*}
\left( \textbf{I}_{2}^{\otimes a-1} \bigotimes j \sigma_{3}\right) \textbf{A}_{ \lambda - k + 1}^{T} \textbf{A}_{k} ^{T} = j \sigma_{3}^{ \otimes a} .
\end{equation*} 
Since $ \lambda - k + 1$ and $k$ are one's complement of each other in the binary representation, we have, 
\begin{equation*}
\textbf{A}_{ \lambda - k + 1}\textbf{A}_{k} = \prod_{i=1}^{a-1} \alpha_{i} = \textbf{A}_{ \lambda} = j \sigma_{3}^{\otimes a-1} \bigotimes \textbf{I}_{2}.
\end{equation*}
Therefore we have,
\begin{align*}
\left( \textbf{I}_{2}^{\otimes a-1} \bigotimes j \sigma_{3}\right) \textbf{A}_{ \lambda} &= \left( \textbf{I}_{2}^{\otimes a-1} \bigotimes j \sigma_{3}\right) \left( j \sigma_{3}^{\otimes a-1} \bigotimes \textbf{I}_{2} \right) \\
&= j \sigma_{3}^{ \otimes a} .
\end{align*} 

The equality for $ \textbf{E}_{3}$ and $ \textbf{E}_{4}$ can be shown similarly.
\end{proof}

\subsection{Structure of $ \textbf{R}_{2}$}
\label{r2_struct_const_1}

\begin{proposition}
\label{r2_struct_prop_const_1}
The matrix $ \textbf{R}_{2}$ is block diagonal with $4$ blocks, each of size $ \lambda \times \lambda$.
\end{proposition}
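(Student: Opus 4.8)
The plan is to prove the claim through the Gram matrix $\textbf{R}_{2}^{T}\textbf{R}_{2}$ rather than through $\textbf{R}_{2}$ directly. Exactly as in the proof of Lemma \ref{bostc_lemma3}, writing $\textbf{H}_{eq} = \left[\textbf{H}_{1} ~ \textbf{H}_{2}\right]$ and using $\textbf{H}_{2} = \textbf{Q}_{1}\textbf{E} + \textbf{Q}_{2}\textbf{R}_{2}$ gives the identity
\begin{equation*}
\textbf{R}_{2}^{T}\textbf{R}_{2} = \textbf{H}_{2}^{T}\textbf{H}_{2} - \textbf{E}^{T}\textbf{E}.
\end{equation*}
Since $\textbf{R}_{2}$ is upper triangular and, by hypothesis, full rank, it suffices to show that $\textbf{R}_{2}^{T}\textbf{R}_{2}$ is block diagonal with four $\lambda \times \lambda$ blocks; block diagonality of the Gram matrix then forces the same structure on $\textbf{R}_{2}$ itself, by the argument that closes the proof of Lemma \ref{bostc_lemma3}. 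Thus the proposition reduces to showing that \emph{both} $\textbf{H}_{2}^{T}\textbf{H}_{2}$ and $\textbf{E}^{T}\textbf{E}$ are block diagonal with four blocks of size $\lambda \times \lambda$.

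For the first term, observe that $\textbf{H}_{2}$ is the equivalent channel matrix of the weight matrices $\textbf{A}_{4\lambda + l} = \textbf{M}\textbf{A}_{l}$, $l = 1, \dots, 4\lambda$. Whenever $\textbf{A}_{i}$ and $\textbf{A}_{j}$ lie in different groups of the rate-1 four-group decodable CUWD $\textbf{X}_{1}$ they are Hurwitz--Radon orthogonal, and then
\begin{equation*}
\textbf{M}\textbf{A}_{i}\left(\textbf{M}\textbf{A}_{j}\right)^{H} + \textbf{M}\textbf{A}_{j}\left(\textbf{M}\textbf{A}_{i}\right)^{H} = \textbf{M}\left(\textbf{A}_{i}\textbf{A}_{j}^{H} + \textbf{A}_{j}\textbf{A}_{i}^{H}\right)\textbf{M}^{H} = \textbf{0},
\end{equation*}
so $\textbf{M}\textbf{A}_{i}$ and $\textbf{M}\textbf{A}_{j}$ are again Hurwitz--Radon orthogonal. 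By Theorem 2 of \cite{PaR} the corresponding columns of $\textbf{H}_{2}$ are orthogonal, and since the CUWD ordering groups the variables contiguously this makes $\textbf{H}_{2}^{T}\textbf{H}_{2}$ block diagonal with four $\lambda \times \lambda$ blocks, by the same reasoning (via \cite{JiR}) that established the block-diagonal form of $\textbf{R}_{1}$.

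The substantive step is the block diagonality of $\textbf{E}^{T}\textbf{E}$, and here I would use the explicit form \eqref{e_struct_eq} of $\textbf{E}$ established in Proposition \ref{e_struct_prop_const_1}. Partitioning $\textbf{E}$ into its four block-columns $\textbf{C}_{1}, \dots, \textbf{C}_{4}$ (each $4\lambda \times \lambda$), the $\left(m,n\right)$ block of $\textbf{E}^{T}\textbf{E}$ is $\textbf{C}_{m}^{T}\textbf{C}_{n}$, and I must show $\textbf{C}_{m}^{T}\textbf{C}_{n} = \textbf{0}$ for $m \neq n$. Expanding, each off-diagonal block is a sum of terms of the form $\pm\textbf{E}_{a}^{T}\textbf{E}_{b}$ and $\pm\textbf{P}^{\epsilon}\textbf{E}_{a}^{T}\textbf{E}_{b}\textbf{P}^{\delta}$; for instance
\begin{equation*}
\textbf{C}_{1}^{T}\textbf{C}_{2} = \left(\textbf{E}_{2}^{T}\textbf{E}_{1} - \textbf{E}_{1}^{T}\textbf{E}_{2}\right) + \left(\textbf{E}_{3}^{T}\textbf{E}_{4} - \textbf{E}_{4}^{T}\textbf{E}_{3}\right)\textbf{P}.
\end{equation*}
The cancellation of these terms is the crux and the main obstacle: it has to be deduced from symmetry relations of the type $\textbf{E}_{a}^{T}\textbf{E}_{b} = \left(\textbf{E}_{a}^{T}\textbf{E}_{b}\right)^{T}$ together with commutation relations between the $\textbf{E}_{a}$ and the reversal permutation $\textbf{P}$ (using $\textbf{P}^{T} = \textbf{P}$ and $\textbf{P}^{2} = \textbf{I}_{\lambda}$), the $\textbf{P}$-twisted blocks $\textbf{C}_{2}^{T}\textbf{C}_{3}$ being the most delicate to settle. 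These relations I would prove by the same trace-identity and row-induction technique used in Proposition \ref{e_struct_prop_const_1}, feeding in the explicit Clifford generators of \eqref{cuwd_mat} and the complementation identity $\textbf{A}_{\lambda - k + 1}\textbf{A}_{k} = \textbf{A}_{\lambda}$ already exploited there.

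Once $\textbf{E}^{T}\textbf{E}$ is shown to be block diagonal, the off-diagonal $\lambda \times \lambda$ blocks of $\textbf{R}_{2}^{T}\textbf{R}_{2} = \textbf{H}_{2}^{T}\textbf{H}_{2} - \textbf{E}^{T}\textbf{E}$ vanish, and the upper-triangular, full-rank property delivers the desired four-block structure of $\textbf{R}_{2}$. Equivalently, having verified that $\left\lbrace \textbf{A}_{l}\right\rbrace$ and $\left\lbrace \textbf{M}\textbf{A}_{l}\right\rbrace$ are both four-group decodable with $\lambda$ variables per group and that $\textbf{E}^{T}\textbf{E}$ is block diagonal, one may simply invoke Lemma \ref{bostc_lemma3} with $k = 4$ and $\gamma = \lambda$ to obtain the $\left(2, 4, \lambda\right)$ block-orthogonal structure, of which the stated form of $\textbf{R}_{2}$ is precisely the lower-right part.
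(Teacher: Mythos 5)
Your proposal reproduces the skeleton of the paper's own proof: the Gram-matrix identity $\textbf{R}_{2}^{T}\textbf{R}_{2} = \textbf{H}_{2}^{T}\textbf{H}_{2} - \textbf{E}^{T}\textbf{E}$, the observation that Hurwitz--Radon orthogonality survives left-multiplication by $\textbf{M}$ (so $\textbf{H}_{2}^{T}\textbf{H}_{2}$ is block diagonal), and the use of the explicit form \eqref{e_struct_eq} of $\textbf{E}$ from Proposition \ref{e_struct_prop_const_1} to reduce everything to the vanishing of the off-diagonal $\lambda \times \lambda$ blocks of $\textbf{E}^{T}\textbf{E}$. Up to that point you and the paper agree. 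But the step you yourself call ``the crux and the main obstacle'' --- cancellations such as $\textbf{E}_{2}^{T}\textbf{E}_{1} - \textbf{E}_{1}^{T}\textbf{E}_{2} = \textbf{0}$ and their $\textbf{P}$-twisted counterparts --- is exactly what the paper's proof spends its effort establishing, and your proposal does not establish it: you only assert that it ``would'' follow from the same trace-identity and row-induction technique as Proposition \ref{e_struct_prop_const_1}. That deferral is a genuine gap, because the technique does not transfer. The row induction in Proposition \ref{e_struct_prop_const_1} proves equalities between individual entries of $\textbf{E}$, each of which is a single inner product $\left\langle \textbf{q}_{m}, \textbf{h}_{4\lambda+k}\right\rangle$; an entry of $\textbf{E}_{i}^{T}\textbf{E}_{j}$, by contrast, is a sum over $m$ of products of such inner products, and its symmetry in $(k,l)$ is not an entrywise statement that any induction on rows of $\textbf{E}$ can reach.

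What the paper actually does at this point is different in kind. It expands the Gram--Schmidt vectors $\textbf{q}_{m}$ back in terms of the columns $\textbf{h}_{n}$, obtaining
\begin{equation*}
\textbf{E}_{i}^{T} \textbf{E}_{j}\left( k,l\right) = \sum_{m=1}^{\lambda}\sum_{n=1}^{\lambda} a_{mn} \left\langle \textbf{h}_{m}, \textbf{h}_{4 \lambda + k} \right\rangle \left\langle \textbf{h}_{\lambda + n}, \textbf{h}_{4 \lambda + l} \right\rangle ,
\end{equation*}
and then makes a combinatorial observation: indexing the CUWD weight matrices by binary strings via $f\left(\cdot\right)$, the coefficient $a_{mn}$ depends only on $f\left(m\right) \oplus f\left(n\right)$, while the trace identities furnish a bijection $m \mapsto m' = f^{-1}\left(f\left(m\right) \oplus f\left(k\right) \oplus f\left(l\right)\right)$, $n \mapsto n'$, under which $\left\langle \textbf{h}_{m}, \textbf{h}_{4\lambda+k}\right\rangle = \left\langle \textbf{h}_{m'}, \textbf{h}_{4\lambda+l}\right\rangle$. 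Because the shift $f\left(k\right) \oplus f\left(l\right)$ enters both $m'$ and $n'$, it cancels in $f\left(m'\right) \oplus f\left(n'\right)$, so $a_{m'n'} = a_{mn}$ and hence $\textbf{E}_{i}^{T}\textbf{E}_{j}\left(k,l\right) = \textbf{E}_{i}^{T}\textbf{E}_{j}\left(l,k\right)$; the same mechanism gives the identical diagonal entries and the compatibility with the reversal $\textbf{P}$, since $f\left(\lambda-k+1\right)$ is the bitwise complement of $f\left(k\right)$ and XOR is unchanged under complementing both arguments. This XOR-bijection argument is the mathematical content of the proposition; without it, or a genuine substitute for it, your write-up is scaffolding around a missing centre.
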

\begin{proof}
For the matrix $ \textbf{R}_{2}$ to be block diagonal with $4$ blocks, each of size $ \lambda \times \lambda$, we need to satisfy the following conditions
\begin{itemize}
\item The matrices $ \left\lbrace \textbf{M} \textbf{A}_{1}, \textbf{M} \textbf{A}_{2}, ..., \textbf{M} \textbf{A}_{4 \lambda}\right\rbrace $ form a four group decodable STBC with $ \lambda$ variables per group
\item The matrix $ \textbf{E}$ is such that $ \textbf{E}^{T} \textbf{E}$ is block diagonal with $4$ blocks, each of size $ \lambda \times \lambda$.
\end{itemize}

Since the matrices $ \left\lbrace \textbf{A}_{1}, \textbf{A}_{2}, ...,  \textbf{A}_{4 \lambda}\right\rbrace $ form a four group decodable STBC with $ \lambda$ variables per group, it is easily seen that the matrices  $ \left\lbrace \textbf{M} \textbf{A}_{1}, \textbf{M} \textbf{A}_{2}, ..., \textbf{M} \textbf{A}_{4 \lambda}\right\rbrace $ also form a four group decodable STBC with $ \lambda$ variables per group as $ \left( \textbf{M} \textbf{A}_{i} \right) \left( \textbf{M} \textbf{A}_{j} \right) ^{H} + \left( \textbf{M} \textbf{A}_{j} \right) \left( \textbf{M} \textbf{A}_{i} \right) ^{H} = \textbf{M}\left[ \textbf{A}_{i} \textbf{A}_{j}^{H} + \textbf{A}_{j} \textbf{A}_{i}^{H}\right] \textbf{M}^{H} = \textbf{0}$ for $i$ and $j$ in different groups. 

We now introduce some notation before we address the structure of the matrix $ \textbf{E}^{H} \textbf{E}$. Let $m$ be an integer such that $ 1 \leq m \leq \lambda$. We denote by $f\left( m\right)$, the binary representation of $m-1$ using $a-1$ bits. Let $ \oplus$ denote the bitwise XOR operation between any two binary numbers. 

Now, we turn to the structure of the matrix $ \textbf{E}$. From Proposition \ref{e_struct_prop_const_1}, we know the structure of the matrix $ \textbf{E}$. Computing $ \textbf{E}^{T} \textbf{E}$, we see that for it to be block diagonal with $4$ blocks, each of size $ \lambda \times \lambda$, it is sufficient to show that the matrices $ \textbf{E}_{i}^{T} \textbf{E}_{j} $ are symmetric with identical entries on the diagonal for $i,j = 1, ..., 4, ~i \neq j$. The entries of $ \textbf{E}_{i}^{T} \textbf{E}_{j} $ are given by
\begin{equation*}
\textbf{E}_{i}^{T} \textbf{E}_{j}\left( k,l\right) = \sum_{m=1}^{\lambda} \left\langle  \textbf{q}_{m}, \textbf{h}_{4 \lambda + k} \right\rangle  \left\langle  \textbf{q}_{\lambda + m}, \textbf{h}_{4 \lambda + l} \right\rangle.
\end{equation*}
Expanding and simplifying, we get
\begin{equation*}
\textbf{E}_{i}^{T} \textbf{E}_{j}\left( k,l\right) = \sum_{m=1}^{\lambda}\sum_{n=1}^{\lambda} a_{mn} \left\langle  \textbf{h}_{m}, \textbf{h}_{4 \lambda + k} \right\rangle  \left\langle  \textbf{h}_{\lambda + n}, \textbf{h}_{4 \lambda + l} \right\rangle,
\end{equation*}
where $a_{mn} = a_{t},$ $t = f^{-1}\left( f\left( m\right) \oplus f\left( n\right) \right) $ and $a_{t}$ is given by 
\begin{equation*}
a_{t} = \frac{ -\sum_{p=1}^{t-1} a_{p} \left\langle \textbf{q}_{\lambda - t + 1}, \textbf{h}_{\lambda - p + 1}\right\rangle}{ \parallel \textbf{r}_{\lambda - t + 1} \parallel},
\end{equation*}
for $t = 2,3,.., \lambda$ and $a_{1} = \frac{1}{\parallel \textbf{r}_{\lambda} \parallel^{2}}$. 
We now see that for every $m$, there exists a unique $m^{'}$ such that $\left\langle  \textbf{h}_{m}, \textbf{h}_{4 \lambda + k} \right\rangle = \left\langle  \textbf{h}_{m^{'}}, \textbf{h}_{4 \lambda + l}\right\rangle $ as
\begin{align*}
\left\langle  \textbf{h}_{m}, \textbf{h}_{4 \lambda + k} \right\rangle &= tr\left( \check{\textbf{H}} \check{\textbf{A}}_{m} \check{ \textbf{A}}_{\lambda + k}^{T} \check{ \textbf{M}}^{T} \check{ \textbf{H}}^{T}\right)\\
%\end{equation*}
%\begin{equation*}
 &= tr\left( \check{\textbf{H}} \left[ \check{\textbf{A}}_{m} \check{ \textbf{A}}_{\lambda + k}^{T} \check{ \textbf{A}}_{\lambda + l}\right]  \check{ \textbf{A}}_{\lambda + l}^{T}\check{ \textbf{M}}^{T} \check{ \textbf{H}}^{T}\right)\\
%\end{equation*}
%\begin{equation*}
 &= \left\langle  \textbf{h}_{m^{'}}, \textbf{h}_{4 \lambda + l} \right\rangle,
\end{align*}
where $m^{'} = f^{-1}\left( f\left( m\right) \oplus f\left( k\right) \oplus f\left( l\right) \right)$. Similarly, for every $n$, there exists a unique $n^{'}$ such that $\left\langle  \textbf{h}_{\lambda + n}, \textbf{h}_{4 \lambda + l} \right\rangle = \left\langle  \textbf{h}_{\lambda + n^{'}}, \textbf{h}_{4 \lambda + k}\right\rangle $ where $n^{'} = f^{-1}\left( f\left( n\right) \oplus f\left( k\right) \oplus f\left( l\right) \right)$. We can now write,
\begin{align*}
\textbf{E}_{i}^{T} \textbf{E}_{j}\left( k,l\right) &= \sum_{m^{'}}\sum_{n^{'}} a_{m^{'}n^{'}} \left\langle  \textbf{h}_{m^{'}}, \textbf{h}_{4 \lambda + l} \right\rangle  \left\langle  \textbf{h}_{\lambda + n^{'}}, \textbf{h}_{4 \lambda + k} \right\rangle \\
&= \textbf{E}_{i}^{T} \textbf{E}_{j}\left( l,k\right),
\end{align*}
if $a_{m^{'}n^{'}} = a_{mn}$. Let $a_{m^{'}n^{'}} = a_{t^{'}}$. $t^{'}$ is given by, $t^{'} = f^{-1}\left( f\left( m\right) \oplus f\left( k\right) \oplus f\left( l\right) \oplus f\left( n\right) \oplus f\left( k\right) \oplus f\left( l\right) \right) = f^{-1}\left( f\left( m\right) \oplus f\left( n\right) \right) = t$. 
Therefore, we can see that $\textbf{E}_{i}^{T} \textbf{E}_{j}$ is symmetric. Using the above arguments, it is also easly seen that the diagonal elements of the matrix $\textbf{E}_{i}^{T} \textbf{E}_{j}$ are identical. 

Hence, we have shown that the matrix $ \textbf{R}_{2}$ is block diagonal with $4$ blocks, each of size $ \lambda \times \lambda$.

\end{proof}

\section{Structure of $ \textbf{R} $ matrix obtained from construction II}
\label{app_r_mat_struct_const_2}
The STBC $ \textbf{X}$ can be written as 
\begin{equation*}
\textbf{X} = \sum_{i=1}^{K} x_{i} \textbf{A}_{i},
\end{equation*}
where $x_{i} = x_{iI} + jx_{iQ}$. Tweaking the system model in section \ref{sec2}, we can get a generator matrix for this STBC as
\begin{equation*}
\textbf{G}^{'} = \left[vec\left(\textbf{A}_{1}\right) ~ vec\left(\textbf{A}_{2}\right) ~ \cdots ~ vec\left(\textbf{A}_{K}\right) ~ \right].
\end{equation*}
Hence, \eqref{system_model} can be written as
\begin{equation*}
vec\left(\textbf{Y}\right) = \textbf{H}_{eq}^{'}\tilde{\textbf{x}} + vec\left(\textbf{N}\right),
\end{equation*}
where $\textbf{H}_{eq}^{'} \in \mathbb{C}^{n_{r}n_{t} \times K}$ is given by
$\textbf{H}_{eq}^{'} = \left(\textbf{I}_{n_{t}} \otimes \textbf{H}\right) \textbf{G}^{'},$ 
 and 
$\tilde{\textbf{x}} = \left[x_{1}, x_{2} . . . , x_{K}\right],$ 
with each $x_{i}$ drawn from a 2-dimensional constellation.
It can be easily seen that $\textbf{H}_{eq} = \check{\textbf{H}_{eq}^{'}}$. 

Let the QR decomposition of the complex matrix $\textbf{H}_{eq}^{'}$ yield matrices $\textbf{Q}^{'}$ and $ \textbf{R}^{'}$. Using the relation: If $ \textbf{A} = \textbf{B} \textbf{C}$, then $ \check{\textbf{A}} = \check{\textbf{B}} \check{\textbf{C}}$, we can see that $ \textbf{R} = \check{\textbf{R}^{'}}$. 
The QR decomposition of a complex matrix yields a unitary $ \textbf{Q}$ matrix and an upper triangular matrix $ \textbf{R}$ with real diagonal entries. Hence, the diagonal entries of the matrix $ \textbf{R}^{'}$ are real. Since $ \textbf{R} = \check{\textbf{R}^{'}}$, we'll have $ \textbf{R}\left( 2i-1 , 2i\right) = 0$ for $ i = 1,... K$. Hence, the STBC $ \textbf{X}$ exhibits a block orthogonal property with parameters $ \left( K, 2, 1\right) $.

%***********************
\section{Structure of $ \textbf{R} $ matrix obtained from construction III}
\label{app_r_mat_struct_const_3}
Let the $\textbf{R}$ matrix for this code have the following structure:
\begin{equation*}
\textbf{R} = \left[\begin{array}{cc}
\textbf{R}_{1} & \textbf{E}\\
\textbf{0} & \textbf{R}_{2}\\
\end{array}\right],
\end{equation*}
where $ \textbf{R}_{1}$, $ \textbf{E}$ and $ \textbf{R}_{2}$ are $ 2K \times 2K$ matrices. 

From \cite{JiR}, it can be easily seen that $ \textbf{R}_{1}$ has a block diagonal structure with two blocks, and each block of the size $ K \times K$. 
\begin{equation*}
\textbf{R}_{1} = \left[\begin{array}{cc}
\textbf{R}_{11} & \textbf{0}\\
\textbf{0} & \textbf{R}_{12}\\
\end{array}\right],
\end{equation*}
where $ \textbf{R}_{11}$ and $ \textbf{R}_{12}$ are $ K \times K$ upper triangular matrices. 
\begin{proposition}
\label{r1_struct_prop_const_3}
The non-zero blocks of the matrix $ \textbf{R}_{1}$ are equal i.e., $ \textbf{R}_{11} = \textbf{R}_{12}$. 
\end{proposition}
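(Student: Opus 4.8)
The plan is to reduce the claim to a single Gram-matrix identity and then let the determinism of the QR factorization do the rest. Recall that $\textbf{R}_{11}$ is produced by Gram--Schmidt applied to the columns $\textbf{h}_{1}, \dots, \textbf{h}_{K}$ (whose weight matrices are $\textbf{A}_{1}, \dots, \textbf{A}_{K}$), while $\textbf{R}_{12}$ is produced from $\textbf{h}_{K+1}, \dots, \textbf{h}_{2K}$ (weight matrices $\textbf{B}_{1}, \dots, \textbf{B}_{K}$), and that the block-diagonal form of $\textbf{R}_{1}$ established above via \cite{JiR} means these two groups of columns do not interact during the factorization. Every entry of an upper-triangular QR factor is a function of the inner products $\left\langle \textbf{h}_{p}, \textbf{h}_{q}\right\rangle$ of the columns involved, as is clear from the recursions $\textbf{r}_{i} = \textbf{h}_{i} - \sum_{j<i}\left\langle \textbf{q}_{j}, \textbf{h}_{i}\right\rangle \textbf{q}_{j}$ and $\textbf{q}_{i} = \textbf{r}_{i}/\parallel \textbf{r}_{i}\parallel$. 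Hence it suffices to prove the identity $\left\langle \textbf{h}_{K+k}, \textbf{h}_{K+j}\right\rangle = \left\langle \textbf{h}_{k}, \textbf{h}_{j}\right\rangle$ for all $k,j \in \left\lbrace 1, \dots, K\right\rbrace$, which forces $\textbf{R}_{11} = \textbf{R}_{12}$.

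To prove this identity I would first push the hypothesis $\textbf{B}_{i} = j\textbf{A}_{i}$ into the real domain. Let $\textbf{J}$ denote the image of the scalar matrix $j\textbf{I}$ under the $\check{\left(\centerdot\right)}$ operator; since this operator preserves products (the relation used in Appendix \ref{app_r_mat_struct_const_2}), we have $\check{\textbf{B}}_{i} = \textbf{J}\check{\textbf{A}}_{i}$. The whole argument rests on two elementary facts about $\textbf{J}$: it is orthogonal, $\textbf{J}\textbf{J}^{T} = \textbf{I}$, because the underlying $2\times 2$ block $\check{j}$ satisfies $\check{j}\check{j}^{T} = \textbf{I}$ and $\check{j}^{T} = -\check{j}$; and it commutes with every $\check{\textbf{A}}$, because the scalar $j$ commutes with complex matrix multiplication and $\check{\left(\centerdot\right)}$ is multiplicative, so $\textbf{J}\check{\textbf{A}} = \check{\textbf{A}}\textbf{J}$ and, transposing and using $\textbf{J}^{T} = -\textbf{J}$, also $\textbf{J}\check{\textbf{A}}^{T} = \check{\textbf{A}}^{T}\textbf{J}$.

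With these in hand I would invoke the inner-product formula $\left\langle \textbf{h}_{p}, \textbf{h}_{q}\right\rangle = \frac{1}{2}tr\left(\check{\textbf{H}}\check{\textbf{A}}_{p}\check{\textbf{A}}_{q}^{T}\check{\textbf{H}}^{T}\right)$ recalled from \cite{PaR}. Substituting $\check{\textbf{A}}_{K+i} = \check{\textbf{B}}_{i} = \textbf{J}\check{\textbf{A}}_{i}$ gives $\left\langle \textbf{h}_{K+k}, \textbf{h}_{K+j}\right\rangle = \frac{1}{2}tr\left(\check{\textbf{H}}\textbf{J}\check{\textbf{A}}_{k}\check{\textbf{A}}_{j}^{T}\textbf{J}^{T}\check{\textbf{H}}^{T}\right)$, and sliding $\textbf{J}$ rightward through the two weight-matrix factors by the commutation relation and then cancelling $\textbf{J}\textbf{J}^{T} = \textbf{I}$ collapses the right-hand side exactly to $\frac{1}{2}tr\left(\check{\textbf{H}}\check{\textbf{A}}_{k}\check{\textbf{A}}_{j}^{T}\check{\textbf{H}}^{T}\right) = \left\langle \textbf{h}_{k}, \textbf{h}_{j}\right\rangle$, which is the desired identity.

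The one step deserving care is the passage from the Gram-matrix identity to $\textbf{R}_{11} = \textbf{R}_{12}$, i.e. the claim that each block is a replica of the other. If one prefers not to appeal to the determinism of QR abstractly, I would reproduce the induction of Proposition \ref{r1_struct_prop_const_1}: show $\parallel \textbf{r}_{j}\parallel = \parallel \textbf{r}_{K+j}\parallel$ and $\left\langle \textbf{q}_{j}, \textbf{h}_{k}\right\rangle = \left\langle \textbf{q}_{K+j}, \textbf{h}_{K+k}\right\rangle$ by induction on $j$, each step consuming only the identity just proved together with the cross-orthogonality $\left\langle \textbf{h}_{k}, \textbf{h}_{K+j}\right\rangle = 0$ that follows from the two-group decodability of $\textbf{X}_{1}$. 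That cross-orthogonality is exactly what guarantees the Gram--Schmidt recursion for the second block never references a first-block vector, so the second block is forced to duplicate the first; this decoupling is the true crux and the main thing to verify carefully.
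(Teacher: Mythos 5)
Your proof is correct, and it reaches the same conclusion through a noticeably different organization than the paper. The paper proves this proposition by deferring to the proof of Proposition \ref{r1_struct_prop_const_1}: an entry-by-entry induction on the Gram--Schmidt recursion, showing $\parallel \textbf{r}_{j}\parallel = \parallel \textbf{r}_{K+j}\parallel$ and $\left\langle \textbf{q}_{j}, \textbf{h}_{k}\right\rangle = \left\langle \textbf{q}_{K+j}, \textbf{h}_{K+k}\right\rangle$ directly, with the trace formula and the weight-matrix relation folded into every induction step. You instead factor the argument into two independent pieces: a construction-specific Gram-matrix identity $\left\langle \textbf{h}_{K+k}, \textbf{h}_{K+j}\right\rangle = \left\langle \textbf{h}_{k}, \textbf{h}_{j}\right\rangle$, and a generic observation that, once the two column groups are mutually orthogonal (so Gram--Schmidt on the second block never touches the first), equal Gram matrices force equal $\textbf{R}$-blocks. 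This buys brevity and reusability --- the same two-step template disposes of Propositions \ref{r1_struct_prop_const_3} and \ref{r1_struct_prop_const_4} at once, with only the Gram identity needing re-verification --- whereas the paper's induction is more self-contained but repeats generic Gram--Schmidt manipulations; your explicit fallback to that induction covers anyone uneasy with the abstract determinism appeal. Two small remarks: your identification of the cross-orthogonality (via two-group decodability and Theorem 2 of \cite{PaR}) as the true crux is exactly right, since without it the second block's $\textbf{R}$ entries would involve projections onto first-block vectors and the Gram identity alone would not suffice; and your $\textbf{J}$-commutation computation in the real domain, while valid, can be compressed to one line in the complex domain, since $\textbf{B}_{k}\textbf{B}_{j}^{H} = \left(j\textbf{A}_{k}\right)\left(j\textbf{A}_{j}\right)^{H} = \textbf{A}_{k}\textbf{A}_{j}^{H}$ already gives the Gram identity before ever applying the $\check{\left(\centerdot\right)}$ operator.
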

\begin{proof}
Proof is similar to the proof of Proposition \ref{r1_struct_prop_const_1}.
\end{proof}

The structure of the matrix $ \textbf{E}$ is described in the following proposition.

\begin{proposition}
\label{e_struct_prop_const_3}
The matrix $ \textbf{E}$ is of the form
\begin{equation*}
\textbf{E} = \left[\begin{array}{cc}
\textbf{E}_{1} & -\textbf{E}_{2} \\
\textbf{E}_{2} & \textbf{E}_{1} \\
\end{array}\right],
\end{equation*}
where $ \textbf{E}_{i}$, $i = 1,...,4$ are $ K \times K$ matrices.
\end{proposition}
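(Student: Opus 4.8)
The plan is to reduce everything to the single structural fact behind Construction III, namely $\textbf{B}_i = j\textbf{A}_i$, which under the $\check{(\cdot)}$ map becomes $\check{\textbf{B}}_i = \textbf{J}\check{\textbf{A}}_i$, where $\textbf{J} \triangleq \textbf{I}_{n_t}\otimes\check{j}$ is the real representation of multiplication by $j$ and $\check{j}$ is as in the Notations. Since $\check{j}^{T} = -\check{j}$ and $\check{j}^{T}\check{j} = \textbf{I}$, the induced map $\textbf{J}$ on the equivalent receive space $\mathbb{R}^{2n_rn_t}$ is orthogonal and skew, $\textbf{J}^{T} = -\textbf{J}$, $\textbf{J}^{T}\textbf{J} = \textbf{I}$, and because $j$ is central it commutes with every $\check{\textbf{A}}_i$, with $\check{\textbf{M}}$, and with $\check{\textbf{H}}$. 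Ordering the first $2K$ columns as $\textbf{A}_1,\dots,\textbf{A}_K,\textbf{B}_1,\dots,\textbf{B}_K$ (columns $\textbf{h}_1,\dots,\textbf{h}_{2K}$ with Gram--Schmidt vectors $\textbf{q}_1,\dots,\textbf{q}_{2K}$) and the second $2K$ columns as $\textbf{M}\textbf{A}_1,\dots,\textbf{M}\textbf{A}_K,\textbf{M}\textbf{B}_1,\dots,\textbf{M}\textbf{B}_K$ (columns $\textbf{h}_{2K+1},\dots,\textbf{h}_{4K}$), this fact reads $\textbf{h}_{K+k} = \textbf{J}\textbf{h}_{k}$ and $\textbf{h}_{3K+l} = \textbf{J}\textbf{h}_{2K+l}$. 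Writing $\textbf{E}$ in the obvious $K\times K$ blocks $\textbf{E}_{11},\textbf{E}_{12},\textbf{E}_{21},\textbf{E}_{22}$, I have $\textbf{E}_{11}(k,l)=\langle\textbf{q}_{k},\textbf{h}_{2K+l}\rangle$, $\textbf{E}_{12}(k,l)=\langle\textbf{q}_{k},\textbf{h}_{3K+l}\rangle$, $\textbf{E}_{21}(k,l)=\langle\textbf{q}_{K+k},\textbf{h}_{2K+l}\rangle$ and $\textbf{E}_{22}(k,l)=\langle\textbf{q}_{K+k},\textbf{h}_{3K+l}\rangle$.

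The one substantive step, which I expect to be the main obstacle to state cleanly, is to establish that the Gram--Schmidt vectors of the $\textbf{B}$-columns are exactly the $\textbf{J}$-image of those of the $\textbf{A}$-columns, i.e. $\textbf{q}_{K+k} = \textbf{J}\textbf{q}_{k}$. First, since $\textbf{X}_1$ is two-group decodable with the $\textbf{A}$'s and the $\textbf{B}$'s in different groups, Theorem~2 of \cite{PaR} gives $\langle\textbf{h}_{k},\textbf{h}_{K+l}\rangle = 0$ for all $k,l$; hence in the Gram--Schmidt of the first $2K$ columns the $\textbf{B}$-columns pick up no component along $\textbf{q}_{1},\dots,\textbf{q}_{K}$, so $\textbf{q}_{K+k}$ is the Gram--Schmidt output of $\textbf{h}_{K+1},\dots,\textbf{h}_{K+k} = \textbf{J}\textbf{h}_{1},\dots,\textbf{J}\textbf{h}_{k}$ alone. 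Because $\textbf{J}$ is orthogonal it preserves all the inner products appearing in the Gram--Schmidt recursion and therefore commutes with it, yielding $\textbf{q}_{K+k} = \textbf{J}\textbf{q}_{k}$ (and $\|\textbf{r}_{K+k}\| = \|\textbf{r}_{k}\|$, which is precisely Proposition~\ref{r1_struct_prop_const_3}). The delicate point is to argue this decoupling and the commutation of $\textbf{J}$ with Gram--Schmidt rigorously rather than by appeal to "multiply by $j$".

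Granting $\textbf{q}_{K+k} = \textbf{J}\textbf{q}_{k}$ and $\textbf{h}_{3K+l} = \textbf{J}\textbf{h}_{2K+l}$, the four block identities are immediate one-liners from $\textbf{J}^{T}\textbf{J} = \textbf{I}$ and $\textbf{J}^{T} = -\textbf{J}$:
\begin{equation*}
\textbf{E}_{22}(k,l) = \langle\textbf{J}\textbf{q}_{k},\textbf{J}\textbf{h}_{2K+l}\rangle = \langle\textbf{q}_{k},\textbf{h}_{2K+l}\rangle = \textbf{E}_{11}(k,l),
\end{equation*}
\begin{equation*}
\textbf{E}_{21}(k,l) = \langle\textbf{J}\textbf{q}_{k},\textbf{h}_{2K+l}\rangle = \langle\textbf{q}_{k},\textbf{J}^{T}\textbf{h}_{2K+l}\rangle = -\langle\textbf{q}_{k},\textbf{h}_{3K+l}\rangle = -\textbf{E}_{12}(k,l).
\end{equation*}
Setting $\textbf{E}_{1} \triangleq \textbf{E}_{11} = \textbf{E}_{22}$ and $\textbf{E}_{2} \triangleq \textbf{E}_{21} = -\textbf{E}_{12}$ then gives the claimed form of $\textbf{E}$. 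As a fallback entirely in the paper's existing style, I could instead prove $\textbf{E}_{11}=\textbf{E}_{22}$ and $\textbf{E}_{21}=-\textbf{E}_{12}$ by row-induction exactly as in Proposition~\ref{e_struct_prop_const_1}, substituting the trace identity $\langle\textbf{h}_{k},\textbf{h}_{j}\rangle = \tfrac{1}{2}tr(\check{\textbf{H}}\check{\textbf{A}}_{k}\check{\textbf{A}}_{j}^{T}\check{\textbf{H}}^{T})$ together with $\check{\textbf{B}}_{i} = \textbf{J}\check{\textbf{A}}_{i}$; in that route the minus sign in $\textbf{E}_{21} = -\textbf{E}_{12}$ is traced directly to $\check{j}^{T} = -\check{j}$, and the induction is shorter than the four-group case because there are only two groups to track.
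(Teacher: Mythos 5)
Your proof is correct, but it takes a genuinely different route from the paper's. The paper disposes of this proposition with ``Proof is similar to the proof of Proposition \ref{e_struct_prop_const_1}'', i.e.\ a row-by-row induction on each $K\times K$ block of $\textbf{E}$, repeatedly expanding Gram--Schmidt vectors and invoking the trace identity $\left\langle \textbf{h}_{k}, \textbf{h}_{j}\right\rangle = \frac{1}{2} tr\bigl( \check{\textbf{H}} \check{\textbf{A}}_{k} \check{ \textbf{A}}_{j}^{T} \check{ \textbf{H}}^{T}\bigr)$ together with $\textbf{B}_i = j\textbf{A}_i$ and $\textbf{A}_{2K+l} = \textbf{M}\textbf{A}_l$ --- exactly your stated ``fallback.'' Your primary argument instead packages the single relation $\textbf{B}_i = j\textbf{A}_i$ into a global operator $\textbf{J}$ on the real receive space that is orthogonal, skew ($\textbf{J}^{T}=-\textbf{J}$), and commutes with the equivalent channel, then proves $\textbf{q}_{K+k}=\textbf{J}\textbf{q}_{k}$ once via two observations: (i) two-group decodability plus Theorem 2 of \cite{PaR} decouples the Gram--Schmidt recursion of the $\textbf{B}$-columns from the $\textbf{A}$-columns, and (ii) Gram--Schmidt is equivariant under orthogonal maps. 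Both observations are sound, and the four block identities then really are one-liners. What your route buys: it is shorter, non-inductive, makes the sign in $\textbf{E}_{21}=-\textbf{E}_{12}$ transparent (it is precisely $\check{j}^{T}=-\check{j}$), yields Proposition \ref{r1_struct_prop_const_3} ($\textbf{R}_{11}=\textbf{R}_{12}$, via $\|\textbf{r}_{K+k}\|=\|\textbf{r}_{k}\|$) as a free byproduct, and generalizes to any code in which one group equals a fixed unitary scalar multiple of another. What the paper's induction buys: it stays inside machinery that works uniformly across Constructions I--IV, including the CUWD case of Proposition \ref{e_struct_prop_const_1} where the groups are related by anticommuting matrices rather than a central scalar, so no single global $\textbf{J}$ exists there. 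One small repair: your $\textbf{J}$ acts on $\widetilde{vec\left(\textbf{Y}\right)} \in \mathbb{R}^{2n_r n_t}$, so it should be $\textbf{I}_{n_r n_t}\otimes\check{j}$ rather than $\textbf{I}_{n_t}\otimes\check{j}$; this is a dimension slip only and nothing in the argument changes.
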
 
\begin{proof}
Proof is similar to the proof of Proposition \ref{e_struct_prop_const_1}.
\end{proof}

\begin{proposition}
\label{r2_struct_prop_const_3}
The matrix $ \textbf{R}_{2}$ is block diagonal with $2$ blocks, each of size $ K \times K$.
\end{proposition}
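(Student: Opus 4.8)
The plan is to verify, for the parameters $\left(2,2,K\right)$, the two sufficient conditions of Lemma \ref{bostc_lemma3}, and then to invoke the identity established in its proof (Appendix \ref{proof_bostc_lemma3}), namely $\textbf{R}_{2}^{T}\textbf{R}_{2} = \textbf{H}_{2}^{T}\textbf{H}_{2} - \textbf{E}^{T}\textbf{E}$, where $\textbf{H}_{2}$ collects the equivalent-channel columns corresponding to the second set of weight matrices $\left\lbrace \textbf{M}\textbf{A}_{1}, \ldots, \textbf{M}\textbf{A}_{K}, \textbf{M}\textbf{B}_{1}, \ldots, \textbf{M}\textbf{B}_{K}\right\rbrace$. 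If both $\textbf{H}_{2}^{T}\textbf{H}_{2}$ and $\textbf{E}^{T}\textbf{E}$ are block diagonal with two blocks of size $K \times K$, then so is $\textbf{R}_{2}^{T}\textbf{R}_{2}$; since $\textbf{R}_{2}$ is upper triangular and full rank by hypothesis, this forces $\textbf{R}_{2}$ itself to be block diagonal with two $K \times K$ blocks, which is precisely the claim.

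For the first condition I would argue that left-multiplication by $\textbf{M}$ preserves group decodability, exactly as in the proof of Proposition \ref{r2_struct_prop_const_1}. Since $\textbf{X}_{1}$ is two-group decodable with groups $\left\lbrace \textbf{A}_{i}\right\rbrace$ and $\left\lbrace \textbf{B}_{i}\right\rbrace$, we have $\textbf{A}_{i}\textbf{B}_{j}^{H} + \textbf{B}_{j}\textbf{A}_{i}^{H} = \textbf{0}$ for all $i,j$, whence $\left(\textbf{M}\textbf{A}_{i}\right)\left(\textbf{M}\textbf{B}_{j}\right)^{H} + \left(\textbf{M}\textbf{B}_{j}\right)\left(\textbf{M}\textbf{A}_{i}\right)^{H} = \textbf{M}\left[\textbf{A}_{i}\textbf{B}_{j}^{H} + \textbf{B}_{j}\textbf{A}_{i}^{H}\right]\textbf{M}^{H} = \textbf{0}$. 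By Theorem 2 of \cite{PaR}, Hurwitz--Radon orthogonal weight matrices produce orthogonal columns of the equivalent channel matrix, so $\textbf{H}_{2}^{T}\textbf{H}_{2}$ is block diagonal with two $K \times K$ blocks.

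For the second condition I would use the structure of $\textbf{E}$ established in Proposition \ref{e_struct_prop_const_3}, from which a direct multiplication gives
\begin{equation*}
\textbf{E}^{T}\textbf{E} = \left[\begin{array}{cc}
\textbf{E}_{1}^{T}\textbf{E}_{1} + \textbf{E}_{2}^{T}\textbf{E}_{2} & \textbf{E}_{2}^{T}\textbf{E}_{1} - \textbf{E}_{1}^{T}\textbf{E}_{2}\\
\textbf{E}_{1}^{T}\textbf{E}_{2} - \textbf{E}_{2}^{T}\textbf{E}_{1} & \textbf{E}_{1}^{T}\textbf{E}_{1} + \textbf{E}_{2}^{T}\textbf{E}_{2}\\
\end{array}\right].
\end{equation*}
Block diagonality of $\textbf{E}^{T}\textbf{E}$ thus reduces to the single requirement that the $K \times K$ matrix $\textbf{E}_{1}^{T}\textbf{E}_{2}$ be symmetric, since the off-diagonal blocks equal $\pm\left(\textbf{E}_{1}^{T}\textbf{E}_{2} - \left(\textbf{E}_{1}^{T}\textbf{E}_{2}\right)^{T}\right)$.

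Proving this symmetry is the heart of the argument and the main obstacle. I would establish it by the same entrywise technique used for Construction I: express each entry of $\textbf{E}_{1}^{T}\textbf{E}_{2}$ as a double sum of products of inner products of the form $\left\langle \textbf{h}_{m}, \textbf{h}_{2K+k}\right\rangle$, rewrite these through the trace identity $\left\langle \textbf{h}_{a}, \textbf{h}_{b}\right\rangle = \frac{1}{2}tr\left(\check{\textbf{H}}\check{\textbf{A}}_{a}\check{\textbf{A}}_{b}^{T}\check{\textbf{H}}^{T}\right)$, and then use the defining relation $\textbf{B}_{i} = j\textbf{A}_{i}$ together with the two-group structure to construct a bijection on the summation indices that exchanges the $\left(k,l\right)$ entry with the $\left(l,k\right)$ entry while leaving the scalar coefficients unchanged. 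Here $\textbf{B}_{i} = j\textbf{A}_{i}$ is exactly what makes the coefficient array invariant under this index pairing, playing the role that the Clifford/XOR indexing played in Proposition \ref{r2_struct_prop_const_1}. Setting up this bijection and checking that the coefficients are genuinely preserved is the step requiring the most care; once it is in place, $\textbf{E}_{1}^{T}\textbf{E}_{2}$ is symmetric, $\textbf{E}^{T}\textbf{E}$ and $\textbf{H}_{2}^{T}\textbf{H}_{2}$ are both block diagonal with two $K \times K$ blocks, and the concluding triangular-plus-full-rank argument yields the desired block diagonal structure of $\textbf{R}_{2}$.
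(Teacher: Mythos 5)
Your reduction steps are correct and follow the route the paper itself intends: the paper's proof of this proposition is nothing more than a pointer to the proof of Proposition \ref{r2_struct_prop_const_1}, and your outline --- the identity $\textbf{R}_{2}^{T}\textbf{R}_{2} = \textbf{H}_{2}^{T}\textbf{H}_{2} - \textbf{E}^{T}\textbf{E}$ from Appendix \ref{proof_bostc_lemma3}, preservation of Hurwitz--Radon orthogonality under left multiplication by $\textbf{M}$ so that $\textbf{H}_{2}^{T}\textbf{H}_{2}$ is block diagonal with two $K \times K$ blocks, the computation of $\textbf{E}^{T}\textbf{E}$ from Proposition \ref{e_struct_prop_const_3}, and the concluding upper-triangular-plus-full-rank step --- is exactly that adaptation. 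Your observation that only symmetry of $\textbf{E}_{1}^{T}\textbf{E}_{2}$ is required (the ``identical diagonal entries'' clause of Construction I is not needed) is a correct simplification for the two-block case.

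The genuine gap is the step you yourself flag as the heart of the argument: the symmetry of $\textbf{E}_{1}^{T}\textbf{E}_{2}$, for which the mechanism you propose would not go through. The Construction I bijection is not a formal reindexing; it rests on multiplicative closure of the Clifford weight matrices, $\check{\textbf{A}}_{m}\check{\textbf{A}}_{\lambda+k}^{T}\check{\textbf{A}}_{\lambda+l} = \pm\check{\textbf{A}}_{m'}$ with $m'=f^{-1}\left(f\left(m\right)\oplus f\left(k\right)\oplus f\left(l\right)\right)$, which makes each inner product $\left\langle \textbf{h}_{m},\textbf{h}_{4\lambda+k}\right\rangle$ equal, as a function of $\textbf{H}$, to another one, $\left\langle \textbf{h}_{m'},\textbf{h}_{4\lambda+l}\right\rangle$, while the coefficients $a_{mn}$ (functions of $f\left(m\right)\oplus f\left(n\right)$ only) are invariant under the pairing. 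Construction III has no such closure: a pairing $p \mapsto \tilde{p}$ with $\left\langle \textbf{h}_{p},\textbf{h}_{2K+k}\right\rangle = \left\langle \textbf{h}_{\tilde{p}},\textbf{h}_{2K+l}\right\rangle$ for all $\textbf{H}$ would force $\textbf{A}_{\tilde{p}} = \textbf{A}_{p}\textbf{A}_{k}^{H}\textbf{A}_{l}^{-H}$, and for CDA-type codes this product is in general not proportional to any weight matrix (for the Golden code, $\textbf{A}_{2}\textbf{A}_{2}^{H}\textbf{A}_{1}^{-H} = \textbf{A}_{1}+\textbf{A}_{2}$); it only lies in the real span of $\left\lbrace \textbf{A}_{r}\textbf{A}_{1}^{H}\right\rbrace$, and the relation $\textbf{B}_{i}=j\textbf{A}_{i}$ by itself does not supply a pairing. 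The missing idea is this: two-group decodability together with $\textbf{B}_{i}=j\textbf{A}_{i}$ is equivalent to $\textbf{A}_{p}\textbf{A}_{q}^{H}$ being Hermitian for all $p,q$, so (taking $\textbf{A}_{1}$ invertible) $\left\lbrace \textbf{A}_{1}^{-1}\textbf{A}_{p}\right\rbrace$ is a commuting family of Hermitian matrices, simultaneously diagonalizable as $\textbf{A}_{p} = \textbf{A}_{1}\textbf{U}\Lambda_{p}\textbf{U}^{H}$ with $\textbf{U}$ unitary and $\Lambda_{p}$ real. Then $vec\left(\textbf{H}\textbf{A}_{p}\right) = \sum_{i}\lambda_{pi}\phi_{i}$ and $vec\left(\textbf{H}\textbf{M}\textbf{A}_{q}\right) = \sum_{i}\lambda_{qi}\psi_{i}$ with the same real coefficients $\lambda_{pi} = \left(\Lambda_{p}\right)_{ii}$, where $\phi_{i} = \overline{\textbf{U}}\textbf{e}_{i}\otimes\textbf{H}\textbf{A}_{1}\textbf{U}\textbf{e}_{i}$, $\psi_{i} = \overline{\textbf{U}}\textbf{e}_{i}\otimes\textbf{H}\textbf{M}\textbf{A}_{1}\textbf{U}\textbf{e}_{i}$ satisfy $\phi_{i}^{H}\phi_{j} = \delta_{ij}\hat{w}_{i}$ ($\hat{w}_{i}>0$ real) and $\phi_{i}^{H}\psi_{j} = \delta_{ij}u_{i}$. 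Hence the complex Gram and cross-Gram of the two halves are $\textbf{G}_{11} = \Lambda\,\mathrm{diag}\left(\hat{w}_{i}\right)\Lambda^{T}$ and $\textbf{C} = \Lambda\,\mathrm{diag}\left(u_{i}\right)\Lambda^{T}$ with $\Lambda = \left[\lambda_{pi}\right]$ real (square and invertible in the rate-one CDA setting $K=n_{t}$), so that
\begin{equation*}
\textbf{C}^{H}\textbf{G}_{11}^{-1}\textbf{C} = \Lambda\,\mathrm{diag}\left(\left|u_{i}\right|^{2}/\hat{w}_{i}\right)\Lambda^{T}
\end{equation*}
is real. Since the off-diagonal block of $\textbf{E}^{T}\textbf{E}$ equals $\pm\,\mathrm{Im}\left(\textbf{C}^{H}\textbf{G}_{11}^{-1}\textbf{C}\right) = \pm\left(\mathrm{Re}\left(\textbf{C}\right)^{T}\textbf{G}_{11}^{-1}\mathrm{Im}\left(\textbf{C}\right)-\mathrm{Im}\left(\textbf{C}\right)^{T}\textbf{G}_{11}^{-1}\mathrm{Re}\left(\textbf{C}\right)\right)$, this realness is precisely the symmetry of $\textbf{E}_{1}^{T}\textbf{E}_{2}$ you need, after which the rest of your outline closes the proof. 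Without this simultaneous-diagonalization (or an equivalent commutative-algebra) argument, the appeal to ``the same entrywise technique as Construction I'' cannot be executed.
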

\begin{proof}
Proof is similar to the proof of Proposition \ref{r2_struct_prop_const_1}.
\end{proof}

%********************************************
\section{Structure of $ \textbf{R} $ matrix obtained from construction IV}
\label{app_r_mat_struct_const_4}
As only rate-1 CIODs are considered in this construction, this can only be done for either $ 2 \times 2$ CIODs or $ 4 \times 4$ CIODs. The structure of the $ \textbf{R}$ matrix obtained from the $2 \times 2$ CIOD is the same as the structure of $ \textbf{R}$ matrix obtained from the construction III. The proof of the structure is also the same as given in Appendix \ref{app_r_mat_struct_const_4}. We now consider the structure of the $ \textbf{R}$ matrix obtained from using a $ 4 \times 4$ CIOD. 
Let the $\textbf{R}$ matrix for this code have the following structure:
\begin{equation*}
\textbf{R} = \left[\begin{array}{cc}
\textbf{R}_{1} & \textbf{E}\\
\textbf{0} & \textbf{R}_{2}\\
\end{array}\right],
\end{equation*}
where $ \textbf{R}_{1}$, $ \textbf{E}$ and $ \textbf{R}_{2}$ are $ 8 \times 8$ matrices. 

From \cite{JiR}, it can be easily seen that $ \textbf{R}_{1}$ has a block diagonal structure with $4$ blocks, and each block of the size $ 2 \times 2$. 
\begin{equation*}
\textbf{R}_{1} = \left[\begin{array}{cccc}
\textbf{R}_{11} & \textbf{0} & \textbf{0} & \textbf{0}\\
\textbf{0} & \textbf{R}_{12} & \textbf{0} & \textbf{0}\\
\textbf{0} & \textbf{0} & \textbf{R}_{13} & \textbf{0}\\
\textbf{0} & \textbf{0} & \textbf{0} & \textbf{R}_{14}\\
\end{array}\right],
\end{equation*}
where $ \textbf{R}_{1i}$ are $ 2 \times 2$ upper triangular matrices for $i= 1,...,4$. 
\begin{proposition}
\label{r1_struct_prop_const_4}
The non-zero blocks of the matrix $ \textbf{R}_{1}$ are such that $ \textbf{R}_{11} = \textbf{R}_{12}$ and $ \textbf{R}_{13} = \textbf{R}_{14}$. 
\end{proposition}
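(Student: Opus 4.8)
The plan is to reproduce, almost verbatim, the induction that establishes Proposition \ref{r1_struct_prop_const_1}, adapting it to the pairing of blocks dictated by the two orthogonal designs $\Theta_{1}$ and $\Theta_{2}$ that make up the $4 \times 4$ CIOD in \eqref{ciod_mats}. The block-diagonal form of $\textbf{R}_{1}$ with four $2 \times 2$ blocks is already in hand: it follows from the $4$-group decodability of the rate-$1$ CIOD together with Lemma 2 of \cite{JiR}. What remains is to show that the two blocks produced by $\Theta_{1}$ coincide, $\textbf{R}_{11} = \textbf{R}_{12}$, and that the two blocks produced by $\Theta_{2}$ coincide, $\textbf{R}_{13} = \textbf{R}_{14}$. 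Writing $\textbf{H}_{eq} = \textbf{Q}\textbf{R}$ with each column $\textbf{h}_{i}$ attached to the weight matrix $\textbf{A}_{i}$, it suffices to prove the two scalar identities $\parallel \textbf{r}_{j} \parallel = \parallel \textbf{r}_{j'} \parallel$ and $\left\langle \textbf{q}_{j}, \textbf{h}_{k}\right\rangle = \left\langle \textbf{q}_{j'}, \textbf{h}_{k'}\right\rangle$, where $(j',k')$ is the index in the second single-symbol-decodable group of $\Theta_{1}$ paired with $(j,k)$ in its first group, and analogously inside $\Theta_{2}$. Since each block has size $\gamma = 2$ the indices range over only two values, but I would still run the argument as an induction on the row index $j$ so that it matches the template of Proposition \ref{r1_struct_prop_const_1} line for line.

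The algebraic engine is again the trace identity $\left\langle \textbf{h}_{k}, \textbf{h}_{j}\right\rangle = \frac{1}{2} tr\left( \check{\textbf{H}} \check{\textbf{A}}_{k} \check{\textbf{A}}_{j}^{T} \check{\textbf{H}}^{T}\right)$ from \cite{PaR}, together with the unitarity $\check{\textbf{A}}_{i}\check{\textbf{A}}_{i}^{T} = \check{\textbf{I}}$, which holds here because each of $\Theta_{1},\Theta_{2}$ is a complex orthogonal design and hence has unitary weight matrices. Unitarity already makes the base case trivial: every column satisfies $\parallel \textbf{h}_{i}\parallel^{2} = \frac{1}{2}tr(\check{\textbf{H}}\check{\textbf{H}}^{T})$, so all the diagonal norms agree, and within a $2 \times 2$ block the first Gram--Schmidt vector is $\textbf{r}_{i_{1}} = \textbf{h}_{i_{1}}$. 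The only nontrivial entry is therefore the single off-diagonal term $\left\langle \textbf{q}_{i_{1}}, \textbf{h}_{i_{2}}\right\rangle$ of each block, and $\parallel \textbf{r}_{i_{2}}\parallel$ is then forced once the off-diagonal and $\parallel \textbf{h}_{i_{2}}\parallel$ match. The role played in the CUWD proof by the relation $\textbf{A}_{j\lambda+k} = \textbf{A}_{k}\textbf{A}_{j\lambda+1}$ is here played by the fact that, inside one design, the weight matrices of its second decodable group are obtained from those of its first group by right multiplication by a fixed unitary linking the two columns of the Alamouti-type block; substituting this into the trace identity and using cyclicity of the trace makes that fixed factor cancel, and the inductive step proceeds exactly as in Proposition \ref{r1_struct_prop_const_1}.

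The delicate point, and the step I expect to be the main obstacle, is pinning down this intertwining precisely from the coordinate-interleaved definition \eqref{ciod_mats}. A direct computation shows that the naive identification of the two groups of a design leaves the off-diagonal entries of the two blocks differing by a sign, so the correct pairing must \emph{cross} (and reorder) the real and imaginary coordinates within the second group — the same permutation phenomenon reflected by the matrix $\textbf{P}$ appearing in Proposition \ref{e_struct_prop_const_1}. I would therefore fix the within-block ordering first, verify that with this ordering the two off-diagonal traces become equal rather than opposite, and only then run the induction. This same structural fact also explains why only pairs of blocks coincide rather than all four: $\Theta_{1}$ occupies the upper-left antenna block of \eqref{ciod_mats} and $\Theta_{2}$ the lower-right one, so their weight matrices are supported on disjoint antenna blocks and interact with different sub-blocks of $\check{\textbf{H}}$; the intertwining unitary consequently exists only inside each design, giving $\textbf{R}_{11} = \textbf{R}_{12}$ and $\textbf{R}_{13} = \textbf{R}_{14}$ while in general $\textbf{R}_{11} \neq \textbf{R}_{13}$.
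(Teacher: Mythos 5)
Your overall architecture -- rerun the induction of Proposition \ref{r1_struct_prop_const_1} with a fixed intertwining unitary acting by right multiplication, taking care that the pairing inside the second block of each design is ``crossed'' -- is exactly what the paper intends (its proof is literally ``similar to the proof of Proposition \ref{r1_struct_prop_const_1}''), and your final paragraph correctly identifies the structural reason why only pairs of blocks coincide. However, your base case contains a genuine error: the weight matrices of a CIOD are \emph{not} unitary, so the identity $\check{\textbf{A}}_{i}\check{\textbf{A}}_{i}^{T} = \check{\textbf{I}}$ that you invoke fails. CIODs are the canonical non-unitary weight designs: by \eqref{ciod_mats} each weight matrix of $S$ is a (unitary) weight matrix of $\Theta_{1}$ or $\Theta_{2}$ padded with a zero block on the complementary antennas, e.g. $\textbf{A}_{x_{0I}} = \textbf{I}_{2} \oplus \textbf{0}_{2}$ for the $4 \times 4$ CIOD. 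Hence $\check{\textbf{A}}_{i}\check{\textbf{A}}_{i}^{T}$ is the projector $\check{\textbf{I}}_{2} \oplus \textbf{0}$ for every $\Theta_{1}$-variable and $\textbf{0} \oplus \check{\textbf{I}}_{2}$ for every $\Theta_{2}$-variable, never the identity. Your conclusion that ``every column satisfies $\parallel \textbf{h}_{i} \parallel^{2} = \frac{1}{2}tr\left(\check{\textbf{H}}\check{\textbf{H}}^{T}\right)$, so all the diagonal norms agree'' is therefore false, and it proves too much: if it were true it would force $\textbf{R}_{11} = \textbf{R}_{12} = \textbf{R}_{13} = \textbf{R}_{14}$, which is precisely what the proposition does not assert and what your own closing paragraph denies. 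In fact the $\Theta_{1}$-columns have squared norm $\frac{1}{2}tr\left(\check{\textbf{H}}\left(\check{\textbf{I}}_{2}\oplus\textbf{0}\right)\check{\textbf{H}}^{T}\right)$ while the $\Theta_{2}$-columns have squared norm $\frac{1}{2}tr\left(\check{\textbf{H}}\left(\textbf{0}\oplus\check{\textbf{I}}_{2}\right)\check{\textbf{H}}^{T}\right)$, and these differ for generic $\textbf{H}$; this gap between the two values is the entire reason the equality is only pairwise.

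The repair is straightforward and leaves your plan intact. Replace the unitarity claim by the observation that \emph{within one design} all the products $\check{\textbf{A}}_{i}\check{\textbf{A}}_{i}^{T}$ equal the same projector, which gives equality of the diagonal norms separately among the $\Theta_{1}$-blocks and among the $\Theta_{2}$-blocks. Your intertwining step never needed the $\textbf{A}_{i}$ themselves to be unitary: it only needs $\check{\textbf{U}}\check{\textbf{U}}^{T} = \check{\textbf{I}}$ for the fixed right factor, and such a $\textbf{U}$ (unitary on the upper antenna block, arbitrary unitary on the lower) realizes the crossed pairing you anticipated, e.g. $\textbf{A}_{x_{0I}}\textbf{U} = \textbf{A}_{x_{3Q}}$ and $\textbf{A}_{x_{2Q}}\textbf{U} = \textbf{A}_{x_{1I}}$, so the off-diagonal identity $\left\langle \textbf{q}_{j}, \textbf{h}_{k}\right\rangle = \left\langle \textbf{q}_{j'}, \textbf{h}_{k'}\right\rangle$ goes through exactly as in Proposition \ref{r1_struct_prop_const_1}. (You could also note the shortcut that, since all CIOD weight matrices are pairwise Hurwitz--Radon orthogonal, all columns of $\textbf{H}_{1}$ are mutually orthogonal, so $\textbf{R}_{1}$ is in fact diagonal and the whole statement reduces to the equality of norms above.) With that correction your argument is sound and coincides with the paper's intended proof.
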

\begin{proof}
Proof is similar to the proof of Proposition \ref{r1_struct_prop_const_1}.
\end{proof}

The structure of the matrix $ \textbf{E}$ is described in the following proposition.

\begin{proposition}
\label{e_struct_prop_const_4}
The matrix $ \textbf{E}$ is of the form
\begin{equation*}
\textbf{E} = \left[\begin{array}{cccc}
\textbf{E}_{1} & -\textbf{E}_{2} & \textbf{E}_{5} & -\textbf{E}_{6}\\
\textbf{E}_{2} & \textbf{E}_{1} & \textbf{E}_{6} & \textbf{E}_{5}\\
\textbf{E}_{3} & -\textbf{E}_{4} & \textbf{E}_{7} & -\textbf{E}_{8}\\
\textbf{E}_{4} & \textbf{E}_{3} & \textbf{E}_{8} & \textbf{E}_{7}\\
\end{array}\right],
\end{equation*}
where $ \textbf{E}_{i}$, $i = 1,...,8$ are $ 2 \times 2$ matrices.
\end{proposition}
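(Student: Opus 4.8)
The plan is to mirror the proof of Proposition~\ref{e_struct_prop_const_1}, replacing the Clifford-algebra relations of the CUWD by the Alamouti relations of the $4 \times 4$ CIOD. First I would write $\textbf{E}$ as a $4 \times 4$ array of $2 \times 2$ blocks $\textbf{E}_{pq}$, $p,q = 1,\ldots,4$, whose entries are inner products $\left\langle \textbf{q}_{i}, \textbf{h}_{8+j} \right\rangle$; here the columns indexed $8+j$ correspond to the weight matrices $\textbf{M}\textbf{A}_{j}$ through the construction relation $\textbf{A}_{8+l} = \textbf{M}\textbf{A}_{l}$, $l = 1,\ldots,8$. The available tools are exactly those used before: the trace identity $\left\langle \textbf{h}_{k}, \textbf{h}_{j} \right\rangle = \frac{1}{2} tr\left( \check{\textbf{H}} \check{\textbf{A}}_{k} \check{\textbf{A}}_{j}^{T} \check{\textbf{H}}^{T}\right)$, the Gram--Schmidt recursions for $\textbf{r}_{j}$ and $\textbf{q}_{j}$, and the multiplicative and anticommutation relations among the weight matrices, now supplied by the two Alamouti blocks $\Theta_{1}$ and $\Theta_{2}$ of the CIOD rather than by the Clifford generators.

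Grouping the block rows as $\{1,2\}, \{3,4\}$ and the block columns as $\{1,2\}, \{3,4\}$ exhibits $\textbf{E}$ as a $2 \times 2$ array of $4 \times 4$ super-blocks, each of which the statement claims to have an Alamouti form, with equal diagonal blocks and off-diagonal blocks $\textbf{E}_{b}$ and $-\textbf{E}_{b}$, in the four independent pairs $\left( \textbf{E}_{1},\textbf{E}_{2}\right)$, $\left( \textbf{E}_{5},\textbf{E}_{6}\right)$, $\left( \textbf{E}_{3},\textbf{E}_{4}\right)$, $\left( \textbf{E}_{7},\textbf{E}_{8}\right)$. For each super-block I would prove the two required identities---the equality of its diagonal blocks and the sign-reversed equality of its off-diagonal blocks---by induction on the rows, exactly as in Proposition~\ref{e_struct_prop_const_1}. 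The base case $j=1$ reduces each block entry to a single trace $tr\left( \check{\textbf{H}} \check{\textbf{A}}_{p} \check{\textbf{A}}_{q}^{T} \check{\textbf{M}}^{T} \check{\textbf{H}}^{T}\right)$; inserting the Alamouti product relations of the CIOD weight matrices turns one block index into another and produces the claimed sign, while the inductive step propagates the identity through the subtracted Gram--Schmidt terms $\sum_{m}\left\langle \textbf{q}_{m}, \textbf{h}_{j}\right\rangle \left\langle \textbf{q}_{m}, \textbf{h}_{8+k}\right\rangle$, using the induction hypothesis together with the norm equalities $\parallel \textbf{r}_{1}\parallel = \parallel \textbf{r}_{3}\parallel$, $\parallel \textbf{r}_{2}\parallel = \parallel \textbf{r}_{4}\parallel$, and their analogues furnished by Proposition~\ref{r1_struct_prop_const_4}.

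The main obstacle is extracting and organizing the CIOD algebra. In the CUWD case the weight matrices are given in closed form by $\textbf{A}_{k} = \prod_{i}\alpha_{i}^{k_{i}}$, so the products needed in the induction are immediate; here I must instead derive the exact product and anticommutation identities forced by the block-diagonal Alamouti structure together with the coordinate interleaving $\tilde{x}_{i} = x_{iI} + j x_{\left( i + K/2\right)\bmod K}$, which couples in-phase and quadrature components across $\Theta_{1}$ and $\Theta_{2}$. It is precisely this coupling that explains the qualitative difference from Proposition~\ref{e_struct_prop_const_1}: the four super-blocks remain mutually independent, yielding eight distinct matrices $\textbf{E}_{1},\ldots,\textbf{E}_{8}$, and no permutation matrix $\textbf{P}$ appears, since each super-block is a genuine $2 \times 2$ Alamouti array rather than a Clifford-twisted one. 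Once these weight-matrix identities are pinned down and matched to the target sign pattern, the inductions are routine, so essentially all of the work lies in verifying the CIOD relations.
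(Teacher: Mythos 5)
Your proposal takes essentially the same route as the paper: the paper's own proof of Proposition~\ref{e_struct_prop_const_4} consists entirely of deferring to the proof of Proposition~\ref{e_struct_prop_const_1}, and your plan is precisely that adaptation --- row-wise induction on each block using the trace identity $\left\langle \textbf{h}_{k}, \textbf{h}_{j}\right\rangle = \frac{1}{2} tr\left( \check{\textbf{H}} \check{\textbf{A}}_{k} \check{ \textbf{A}}_{j}^{T} \check{ \textbf{H}}^{T}\right)$, the construction relation $\textbf{A}_{8+l} = \textbf{M}\textbf{A}_{l}$, and the weight-matrix algebra, with the Clifford relations replaced by the CIOD relations. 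Your reading of the super-block sign pattern, the eight distinct blocks $\textbf{E}_{1},\ldots,\textbf{E}_{8}$, and the disappearance of the permutation matrix $\textbf{P}$ is consistent with the claimed structure, so the proposal is correct and aligned with the paper's argument.
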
 
\begin{proof}
Proof is similar to the proof of Proposition \ref{e_struct_prop_const_1}.
\end{proof}

\begin{proposition}
\label{r2_struct_prop_const_4}
The matrix $ \textbf{R}_{2}$ is block diagonal with $2$ blocks, each of size $ 2 \times 2$.
\end{proposition}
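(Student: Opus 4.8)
The plan is to reuse the reduction developed in the proof of Lemma~\ref{bostc_lemma3}. Writing $\textbf{H}_{eq} = \left[\textbf{H}_{1} ~ \textbf{H}_{2}\right]$ and performing the $\textbf{Q}\textbf{R}$ decomposition with $\textbf{Q} = \left[\textbf{Q}_{1} ~ \textbf{Q}_{2}\right]$, the same algebra that produced $\textbf{H}_{2}^{T}\textbf{H}_{2} - \textbf{E}^{T}\textbf{E} = \textbf{R}_{2}^{T}\textbf{R}_{2}$ there applies verbatim here. Since $\textbf{R}_{2}$ is upper triangular and, by hypothesis, of full rank, it suffices to prove that the Gram matrix $\textbf{R}_{2}^{T}\textbf{R}_{2}$ is block diagonal with the required partition; the uniqueness of the Cholesky-type factorization then forces $\textbf{R}_{2}$ itself to carry that same block-diagonal structure. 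Thus the proof splits into showing that $\textbf{H}_{2}^{T}\textbf{H}_{2}$ and $\textbf{E}^{T}\textbf{E}$ are each block diagonal with the block partition matching $\textbf{R}_{1}$ of Proposition~\ref{r1_struct_prop_const_4}.

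First I would dispose of $\textbf{H}_{2}^{T}\textbf{H}_{2}$. Its columns are indexed by the weight matrices $\left\lbrace \textbf{M}\textbf{A}_{1}, \ldots, \textbf{M}\textbf{A}_{K}\right\rbrace$ of the second copy $\textbf{M}\textbf{X}_{1}$. Because $\textbf{X}_{1}$ is a $4 \times 4$ CIOD, the matrices $\left\lbrace \textbf{A}_{1}, \ldots, \textbf{A}_{K}\right\rbrace$ are multigroup decodable, and this grouping is preserved under left multiplication by $\textbf{M}$, since $\left(\textbf{M}\textbf{A}_{i}\right)\left(\textbf{M}\textbf{A}_{j}\right)^{H} + \left(\textbf{M}\textbf{A}_{j}\right)\left(\textbf{M}\textbf{A}_{i}\right)^{H} = \textbf{M}\left[\textbf{A}_{i}\textbf{A}_{j}^{H} + \textbf{A}_{j}\textbf{A}_{i}^{H}\right]\textbf{M}^{H} = \textbf{0}$ whenever $i$ and $j$ lie in different groups. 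By Theorem~2 of \cite{PaR}, Hurwitz--Radon orthogonal weight matrices yield orthogonal columns of $\textbf{H}_{eq}$, so $\textbf{H}_{2}^{T}\textbf{H}_{2}$ is block diagonal exactly as needed.

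The core of the argument is to show that $\textbf{E}^{T}\textbf{E}$ is block diagonal. Here I would substitute the explicit form of $\textbf{E}$ from Proposition~\ref{e_struct_prop_const_4} and expand $\textbf{E}^{T}\textbf{E}$ into its $2 \times 2$ sub-blocks, each of which is a signed sum of products $\textbf{E}_{i}^{T}\textbf{E}_{j}$ of the eight $2 \times 2$ matrices $\textbf{E}_{1}, \ldots, \textbf{E}_{8}$. As in the proof of Proposition~\ref{r2_struct_prop_const_1}, the off-diagonal blocks of $\textbf{E}^{T}\textbf{E}$ cancel once each relevant cross product $\textbf{E}_{i}^{T}\textbf{E}_{j}$ is shown to be symmetric with identical diagonal entries. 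To verify this I would rewrite every entry of $\textbf{E}_{i}$ as an inner product $\left\langle \textbf{q}_{m}, \textbf{h}_{\cdot}\right\rangle$, pass to the trace form $\left\langle \textbf{h}_{k}, \textbf{h}_{j}\right\rangle = \frac{1}{2} tr\left( \check{\textbf{H}} \check{\textbf{A}}_{k} \check{\textbf{A}}_{j}^{T} \check{\textbf{H}}^{T}\right)$, and use the CIOD multiplication relations together with $\textbf{A}_{K+l} = \textbf{M}\textbf{A}_{l}$ to pair each summation index $m$ with a unique partner $m'$ producing the same trace, mirroring the index-pairing argument of Proposition~\ref{r2_struct_prop_const_1}.

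The step I expect to be the main obstacle is this last one, since the CIOD sub-blocks $\textbf{E}_{1}, \ldots, \textbf{E}_{8}$ are genuinely distinct (unlike the CUWD case of Construction~I, where a single block is repeated), so the cancellation pattern must be tracked separately for the two $2 \times 2$ row-bands arising from the orthogonal designs $\Theta_{1}$ and $\Theta_{2}$. I would exploit the block-diagonal $\Theta_{1} \oplus \Theta_{2}$ form of the CIOD to decouple the computation: within each Alamouti sub-design the pairing is the familiar one, while the coordinate interleaving merely permutes real coordinates and hence does not couple variables across the two groups. Once $\textbf{E}^{T}\textbf{E}$ is shown block diagonal, combining it with the block-diagonality of $\textbf{H}_{2}^{T}\textbf{H}_{2}$ in the identity $\textbf{R}_{2}^{T}\textbf{R}_{2} = \textbf{H}_{2}^{T}\textbf{H}_{2} - \textbf{E}^{T}\textbf{E}$ completes the proof.
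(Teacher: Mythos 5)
Your proposal follows essentially the same route as the paper: the paper's own proof of this proposition is a one-line deferral to Proposition~\ref{r2_struct_prop_const_1}, whose argument is precisely what you reconstruct --- block diagonality of $\textbf{H}_{2}^{T}\textbf{H}_{2}$ from the $\textbf{M}$-invariance of the Hurwitz--Radon conditions, block diagonality of $\textbf{E}^{T}\textbf{E}$ from the structure of $\textbf{E}$ via the trace and index-pairing argument, and the conclusion via the Gram identity $\textbf{R}_{2}^{T}\textbf{R}_{2} = \textbf{H}_{2}^{T}\textbf{H}_{2} - \textbf{E}^{T}\textbf{E}$ together with uniqueness of the triangular factor (the mechanism underlying Lemma~\ref{bostc_lemma3}). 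One caution: with the CIOD form of $\textbf{E}$ in Proposition~\ref{e_struct_prop_const_4}, the blocks of $\textbf{E}^{T}\textbf{E}$ coupling the $\Theta_{1}$-band to the $\Theta_{2}$-band require sums such as $\textbf{E}_{1}^{T}\textbf{E}_{5} + \textbf{E}_{2}^{T}\textbf{E}_{6} + \textbf{E}_{3}^{T}\textbf{E}_{7} + \textbf{E}_{4}^{T}\textbf{E}_{8}$ to vanish outright rather than merely be symmetric, so your $\Theta_{1} \oplus \Theta_{2}$ decoupling remark (weight matrices supported on complementary diagonal blocks annihilate each other, killing those inner products) is not a side comment but the essential mechanism there, and would need to be carried out explicitly.
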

\begin{proof}
Proof is similar to the proof of Proposition \ref{r2_struct_prop_const_1}.
\end{proof}

\end{document}